\documentclass{preprint}
\usepackage{helvet}
\usepackage{courier}
\usepackage[T1]{fontenc}
\usepackage[a4paper]{geometry}
\usepackage[active]{srcltx}
\usepackage{color}
\usepackage{verbatim}
\usepackage{mathrsfs}
\usepackage{mathtools}
\usepackage{dsfont}
\usepackage{amsmath}
\usepackage{amsthm}
\usepackage{amssymb}
\usepackage{makeidx}
\makeindex
\usepackage[pdfusetitle,
 bookmarks=true,bookmarksnumbered=true,bookmarksopen=true,bookmarksopenlevel=2,
 breaklinks=false,pdfborder={0 0 1},backref=page,colorlinks=false]
 {hyperref}
\hypersetup{
 unicode=false, linkcolor=black, citecolor=black, urlcolor=blue, filecolor=blue, pdfpagelayout=OneColumn, pdfnewwindow=true, pdfstartview=XYZ, plainpages=false}

\makeatletter
\numberwithin{equation}{section}
\theoremstyle{plain}
\newtheorem{thm}{\protect\theoremname}[section]
\theoremstyle{definition}
\newtheorem{defn}[thm]{\protect\definitionname}
\theoremstyle{plain}
\newtheorem{prop}[thm]{\protect\propositionname}
\theoremstyle{plain}
\newtheorem{lem}[thm]{\protect\lemmaname}
\theoremstyle{plain}
\newtheorem{cor}[thm]{\protect\corollaryname}
\theoremstyle{remark}
\newtheorem{rem}[thm]{\protect\remarkname}
\theoremstyle{definition}
\newtheorem{problem}[thm]{\protect\problemname}

\usepackage{color}
\usepackage{slashed}

 \let\myTOC\tableofcontents
 \renewcommand\tableofcontents{%
   \pdfbookmark[1]{Contents}{}
   \myTOC
   \cleardoublepage
   \pagenumbering{arabic} }

\global\long\def\foreignlanguage#1#2{#2}%
\global\long\def\selectlanguage#1{}%

\allowdisplaybreaks

\DeclareMathOperator\supp{\mathrm{supp}}
\DeclareMathOperator\Tr{\mathrm{Tr}}

\makeatother

\providecommand{\corollaryname}{Corollary}
\providecommand{\definitionname}{Definition}
\providecommand{\lemmaname}{Lemma}
\providecommand{\problemname}{Problem}
\providecommand{\propositionname}{Proposition}
\providecommand{\remarkname}{Remark}
\providecommand{\theoremname}{Theorem}

\begin{document}
\title{The empirical laws for Majorana fields in a curved spacetime}
\author{Hideyasu Yamashita}
\institute{Division of Liberal Arts and Sciences, Aichi-Gakuin University\\
\email{yamasita@dpc.aichi-gakuin.ac.jp}}

\date{\today}

\maketitle
\newcommand{\dcolor}{\definecolor{note_fontcolor}{rgb}{0.1, 0.0, 0.8}}
\definecolor{HYnote_fontcolor}{rgb}{0.2, 0.0, 0.2}
\definecolor{hycolor}{rgb}{0.3, 0.0, 0.3}
\newcommand{\hyc}{\color{hycolor}}
\newenvironment{HYnote}
 {\textcolor{note_fontcolor}\bgroup\ignorespaces}
  {\ignorespacesafterend\egroup}

\newenvironment{trivenv}
  {\bgroup\ignorespaces}
  {\ignorespacesafterend\egroup}

\newcommand{\displabel}[1]{}

\newcommand{\hidable}[3]{#2}
\newcommand{\hidea}[1]{{#1}}
\newcommand{\hideb}[1]{{#1}}
\newcommand{\hidec}[1]{{#1}}
\newcommand{\hidep}[1]{{#1}}
\renewcommand{\hidec}[1]{}
\renewcommand{\hidep}[1]{}

\newcommand{\thlab}[1]{{\tt [#1]}}

\newcommand{\black}{\color{black}}

\global\long\def\N{\mathbb{N}}%
\global\long\def\C{\mathbb{C}}%
\global\long\def\Z{\mathbb{Z}}%

\global\long\def\R{\mathbb{R}}%

\global\long\def\im{\mathrm{i}}%

\global\long\def\di{\partial}%

\global\long\def\d{{\rm d}}%

\global\long\def\ol#1{\overline{#1}}%
\global\long\def\ul#1{\underline{#1}}%
\global\long\def\ob#1{\overbrace{#1}}%

\global\long\def\ov#1{\overline{#1}}%

\global\long\def\then{\Rightarrow}%

\global\long\def\Then{\Longrightarrow}%

\global\long\def\N{\mathbb{N}}%
\global\long\def\C{\mathbb{C}}%
\global\long\def\Z{\mathbb{Z}}%

\global\long\def\R{\mathbb{R}}%

\global\long\def\im{\mathrm{i}}%

\global\long\def\di{\partial}%

\global\long\def\d{{\rm d}}%

\global\long\def\ol#1{\overline{#1}}%
\global\long\def\ul#1{\underline{#1}}%
\global\long\def\ob#1{\overbrace{#1}}%

\global\long\def\ov#1{\overline{#1}}%

\global\long\def\then{\Rightarrow}%

\global\long\def\Then{\Longrightarrow}%

\global\long\def\cA{\mathcal{A}}%
\global\long\def\cB{\mathcal{B}}%

\global\long\def\cC{\mathcal{C}}%

\global\long\def\cD{\mathcal{D}}%
\global\long\def\cE{\mathcal{E}}%

\global\long\def\cF{\mathcal{F}}%

\global\long\def\cG{{\cal G}}%

\global\long\def\cH{\mathcal{H}}%

\global\long\def\cI{\mathcal{I}}%

\global\long\def\cJ{\mathcal{J}}%
\global\long\def\cK{\mathcal{K}}%

\global\long\def\cL{\mathcal{L}}%

\global\long\def\cM{\mathcal{M}}%

\global\long\def\cN{\mathcal{N}}%

\global\long\def\cO{\mathcal{O}}%

\global\long\def\cP{\mathcal{P}}%

\global\long\def\cQ{\mathcal{Q}}%

\global\long\def\cR{\mathcal{R}}%

\global\long\def\cS{\mathcal{S}}%

\global\long\def\cT{\mathcal{T}}%

\global\long\def\cU{\mathcal{U}}%

\global\long\def\cV{\mathcal{V}}%

\global\long\def\cW{\mathcal{W}}%
\global\long\def\cX{\mathcal{X}}%

\global\long\def\cY{\mathcal{Y}}%

\global\long\def\cZ{\mathcal{Z}}%

\global\long\def\scA{\mathscr{A}}%
\global\long\def\scB{\mathscr{B}}%
\global\long\def\scC{\mathscr{C}}%
\global\long\def\scD{\mathscr{D}}%

\global\long\def\scE{\mathscr{E}}%

\global\long\def\scF{\mathscr{F}}%

\global\long\def\scG{\mathscr{G}}%

\global\long\def\scH{\mathscr{H}}%

\global\long\def\scI{\mathscr{I}}%

\global\long\def\scJ{\mathscr{J}}%

\global\long\def\scK{\mathscr{K}}%

\global\long\def\scL{\mathscr{L}}%

\global\long\def\scM{\mathscr{M}}%

\global\long\def\scN{\mathscr{N}}%

\global\long\def\scO{\mathscr{O}}%

\global\long\def\scP{\mathscr{P}}%

\global\long\def\scR{\mathscr{R}}%
\global\long\def\scS{\mathscr{S}}%

\global\long\def\scT{\mathscr{T}}%

\global\long\def\scU{\mathscr{U}}%

\global\long\def\scW{\mathscr{W}}%
\global\long\def\scZ{\mathscr{Z}}%

\global\long\def\bbA{\mathbb{A}}%

\global\long\def\bbB{\mathbb{B}}%

\global\long\def\bbD{\mathbb{D}}%

\global\long\def\bbE{\mathbb{E}}%

\global\long\def\bbF{\mathbb{F}}%

\global\long\def\bbG{\mathbb{G}}%

\global\long\def\bbI{\mathbb{I}}%

\global\long\def\bbJ{\mathbb{J}}%

\global\long\def\bbK{\mathbb{K}}%

\global\long\def\bbL{\mathbb{L}}%

\global\long\def\bbM{\mathbb{M}}%

\global\long\def\bbP{\mathbb{P}}%

\global\long\def\bbQ{\mathbb{Q}}%

\global\long\def\bbT{\mathbb{T}}%

\global\long\def\bbU{\mathbb{U}}%

\global\long\def\bbX{\mathbb{X}}%

\global\long\def\bbY{\mathbb{Y}}%
\global\long\def\bbW{\mathbb{W}}%

\global\long\def\bbOne{1\kern-0.7ex  1}%

\renewcommand{\bbOne}{\mathbbm{1}}

\global\long\def\bB{\mathbf{B}}%

\global\long\def\bG{\mathbf{G}}%

\global\long\def\bH{\mathbf{H}}%
\global\long\def\bS{\boldsymbol{S}}%

\global\long\def\bT{\mathbf{T}}%

\global\long\def\bX{\mathbf{X}}%
\global\long\def\bY{\mathbf{Y}}%
\global\long\def\bW{\mathbf{W}}%

\global\long\def\boT{\boldsymbol{T}}%

\global\long\def\fraka{\mathfrak{a}}%

\global\long\def\frakb{\mathfrak{b}}%

\global\long\def\frakc{\mathfrak{c}}%

\global\long\def\frake{\mathfrak{e}}%

\global\long\def\frakf{\mathfrak{f}}%

\global\long\def\fg{\mathfrak{g}}%

\global\long\def\frakh{\mathfrak{h}}%

\global\long\def\fraki{\mathfrak{i}}%
\global\long\def\frakk{\mathfrak{k}}%

\global\long\def\frakl{\mathfrak{l}}%

\global\long\def\frakm{\mathfrak{m}}%

\global\long\def\frakn{\mathfrak{n}}%

\global\long\def\frako{\mathfrak{o}}%

\global\long\def\frakp{\mathfrak{p}}%

\global\long\def\frakq{\mathfrak{q}}%

\global\long\def\fraks{\mathfrak{s}}%

\global\long\def\fs{\mathfrak{s}}%

\global\long\def\fraku{\mathfrak{u}}%
\global\long\def\frakz{\mathfrak{z}}%

\global\long\def\fA{\mathfrak{A}}%

\global\long\def\fB{\mathfrak{B}}%

\global\long\def\fC{\mathfrak{C}}%

\global\long\def\fD{\mathfrak{D}}%

\global\long\def\fF{\mathfrak{F}}%

\global\long\def\fG{\mathfrak{G}}%

\global\long\def\fK{\mathfrak{K}}%

\global\long\def\fL{\mathfrak{L}}%

\global\long\def\fM{\mathfrak{M}}%

\global\long\def\fP{\mathfrak{P}}%

\global\long\def\fR{\mathfrak{R}}%

\global\long\def\fS{\mathfrak{S}}%
\global\long\def\fT{\mathfrak{T}}%

\global\long\def\fU{\mathfrak{U}}%

\global\long\def\fX{\mathfrak{X}}%

\global\long\def\ssS{\mathsf{S}}%
\global\long\def\ssT{\mathsf{T}}%

\global\long\def\ssW{\mathsf{W}}%

\global\long\def\hM{\hat{M}}%

\global\long\def\rM{\mathrm{M}}%
\global\long\def\prj{\mathfrak{P}}%

{}
\global\long\def\sy#1{{\color{blue}#1}}%

\global\long\def\magenta#1{{\color{magenta}#1}}%

\global\long\def\symb#1{#1}%

\global\long\def\emhrb#1{\text{{\color{red}{\huge {\bf #1}}}}}%

\newcommand{\symbi}[1]{\index{$ #1$}{\color{red}#1}}

{}
\global\long\def\SYM#1#2{#1}%

\renewcommand{\SYM}[2]{\symb{#1}}

\newcommand{\usuji}{\color[rgb]{0.7,0.4,0.4}} \newcommand{\usu}{\color[rgb]{0.5,0.2,0.1}}
\newenvironment{Usuji} {\begin{trivlist}   \item \usuji }  {\end{trivlist}}
\newenvironment{Usu} {\begin{trivlist}   \item \usu }  {\end{trivlist}}

\newcommand{\term}[1]{\textcolor[rgb]{0, 0, 1}{\bf #1}}
\newcommand{\termi}[1]{{\bf #1}}

\newcommand{\slim}{\mathop{\mbox{s-lim}}} %

\newcommand{\wlim}{\mathop{\mbox{w-lim}}}

\newcommand{\limsub}{\mathop{\mbox{\rm lim-sub}}}

\global\long\def\bboxplus{\boxplus}%

\renewcommand{\bboxplus}{\mathop{\raisebox{-0.8ex}{\text{\begin{trivenv}\LARGE{}$\boxplus$\end{trivenv}}}}}

\global\long\def\shuff{\sqcup\kern-0.3ex  \sqcup}%

\renewcommand{\shuff}{\shuffle}

\global\long\def\upha{\upharpoonright}%

\global\long\def\ket#1{|#1\rangle}%

\global\long\def\bra#1{\langle#1|}%

{}
\global\long\def\lll{\vert\kern-0.25ex  \vert\kern-0.25ex  \vert}%
 \renewcommand{\lll}{{\vert\kern-0.25ex  \vert\kern-0.25ex  \vert}}

\global\long\def\biglll{\big\vert\kern-0.25ex  \big\vert\kern-0.25ex  \big\vert\kern-0.25ex  }%

\global\long\def\Biglll{\Big\vert\kern-0.25ex  \Big\vert\kern-0.25ex  \Big\vert}%

\newcommand{\iiia}[1]{{\left\vert\kern-0.25ex\left\vert\kern-0.25ex\left\vert #1
  \right\vert\kern-0.25ex\right\vert\kern-0.25ex\right\vert}}

\global\long\def\iii#1{\iiia{#1}}%

\global\long\def\Upa{\Uparrow}%

\global\long\def\Nor{\Uparrow}%

\newcommand{\vertt}{\kern-0.6ex\vert}
\renewcommand{\Nor}{[\kern-0.16ex ]}

\global\long\def\Prob{\mathbb{P}}%
\global\long\def\Var{\mathrm{Var}}%
\global\long\def\Cov{\mathrm{Cov}}%
\global\long\def\Ex{\mathbb{E}}%
{} 
\global\long\def\Ae{{\rm a.e.}}%

\global\long\def\samples{\bOm}%


\global\long\def\bOne{{\bf 1}}%

\global\long\def\Ten{\bullet}%
{} %

\global\long\def\TT{\intercal}%
 \renewcommand{\TT}{\mathsf{T}}

\global\long\def\trit{\vartriangle\!\! t}%

\global\long\def\Ad{{\rm Ad}}%
\global\long\def\bA{\mathbf{A}}%
\global\long\def\bM{\mathbf{M}}%

\global\long\def\bV{\mathbf{V}}%
\global\long\def\bfS{{\bf S}}%

\global\long\def\bQ{\mathbf{Q}}%

\global\long\def\Borel{{\rm Borel}}%
\global\long\def\bProj{\mathfrak{P}}%
\global\long\def\bE{\mathbf{E}}%

\global\long\def\diam{\text{{\rm diam}}}%
\global\long\def\dom{\mathrm{dom}}%
\global\long\def\End{{\rm End}}%
\global\long\def\ex{{\rm ex}}%

\global\long\def\grad{\mathrm{grad}}%

\global\long\def\Hom{\mathrm{Hom}}%

\global\long\def\Id{{\rm Id}}%

\global\long\def\id{{\rm id}}%
\global\long\def\Inv{{\rm Inv}}%

\global\long\def\Lie{{\rm Lie}}%

\global\long\def\leng{\text{{\rm leng}}}%

\global\long\def\Leb{\text{{\rm Leb}}}%

\global\long\def\meas{\text{{\rm meas}}}%

\global\long\def\GreenOp{\mathsf{E}}%
\global\long\def\Sol{\mathsf{Sol}}%
\global\long\def\GL{{\rm GL}}%

\global\long\def\Pow{\mathsf{P}}%

\global\long\def\p{\mathbf{p}}%

\global\long\def\q{\mathbf{q}}%

\global\long\def\rF{\mathrm{F}}%

\global\long\def\rE{\mathrm{E}}%

\global\long\def\ran{\mathrm{ran}}%

\global\long\def\sfS{\mathsf{S}}%
\global\long\def\sfQ{\mathsf{Q}}%

\global\long\def\Span{{\rm span}}%
\global\long\def\sgn{{\rm sgn}}%

\global\long\def\spec{{\rm spec}}%

\global\long\def\Sp{{\rm Sp}}%
\global\long\def\sfD{\mathsf{D}}%

\global\long\def\sperp{\angle}%
{}

{} %

\global\long\def\WICK#1{{\rm w}\{#1\}}%
 \renewcommand{\WICK}[1]{{:}#1{:}}

\global\long\def\x{\mathbf{x}}%

\global\long\def\y{\mathbf{y}}%

\global\long\def\ad{{\rm ad}}%
\global\long\def\Borel{{\rm Borel}}%
\global\long\def\area{{\bf S}}%
\global\long\def\bbS{\mathbb{S}}%
\global\long\def\bbOne{\mathds{1}}%
\global\long\def\Bdd{\mathscr{B}}%

\global\long\def\Cl{{\rm C}\ell}%
\global\long\def\cconj{\blacklozenge}%
\global\long\def\cpt{{\rm c}}%
\global\long\def\cc{{\bf c}}%
\global\long\def\curve{\mathsf{C}}%

\global\long\def\CAR{{\rm CAR}}%
\global\long\def\ComplexS{J}%

\global\long\def\DiracOpm{{\bf \mathsf{D}}}%
\global\long\def\Dconj{\mathfrak{d}}%
\global\long\def\DSpinors{\mathscr{C}}%

\global\long\def\Vac{\mathsf{E}}%

\global\long\def\GreenOp{\mathsf{S}}%
\global\long\def\hol{{\rm hol}}%

\global\long\def\LBundle{\cL}%

\global\long\def\Manifold{\mathscr{X}}%

\global\long\def\Mat{{\rm Mat}}%

\global\long\def\nablaslash{\slashed{\nabla}}%

\global\long\def\ON{{\rm ON}}%

\global\long\def\OCpl{{\rm OC}}%

\global\long\def\Proj{{\bf Pj}}%
\global\long\def\bP{{\bf P}}%

\global\long\def\qdev{{\rm qd}}%
\global\long\def\Paths{\scE}%

\global\long\def\Pin{{\rm Pin}}%

\global\long\def\rD{{\rm D}}%
\global\long\def\re{{\bf r}}%
\global\long\def\RDSPinors{\scR}%

\global\long\def\Span{{\rm span}}%
\global\long\def\Spin{{\rm Spin}}%
\global\long\def\spin{\mathfrak{spin}}%
\global\long\def\Sol{\mathsf{Sol}_{{\rm sc}}}%

\def\foreignlanguage#1#2{#2}


\let\ruleorig=\rule
\renewcommand{\rule}{\noindent\ruleorig}

\global\long\def\labelenumi{(\arabic{enumi})}%

\begin{abstract}
This article is a sequel to our previous paper (2025), where we considered
the conceptual problem on the empirical laws for the Klein\textendash Gordon
quantum field theory in curved spacetime (QFTCS), and we will consider
the similar problems for the Majorana field on curved spacetime here.
A ``law'' in theoretical physics is said to be \emph{observable
}or\emph{ empirical} only if it can be verified/falsified by some
experimental procedure. The notion of empiricality/observability becomes
far more unclear in QFTCS, than in QFT in Minkowski (flat) spacetime
(QFTM), mainly because QFTCS lacks the notion of vacuum. This could
potentially undermine the status of QFTCS as a \emph{physical} (not
only mathematical) theory. We consider this problem for the Majorana
field in curved spacetime, and examine some examples of the empirical
laws.
\end{abstract}

\section{Introduction}

\label{sec:Introduction}

Our starting point is the following elementary but fundamental problem:
``how can physical/empirical laws be described by a QFT in curved
spacetime (QFTCS)?''. (For rigorous formulations of QFTCS, see \cite{Wal1994,BFV2003,BF2009,HW2015,KM2015,FR2016}
and references therein.) We stressed the importance of this problem
in Introduction of \cite{Yam2025}; Although similar problems can
arise also in QFT in Minkowski spacetime (QFTM), and even in nonrelativistic
quantum mechanics, it seems that QFTCS has far more serious conceptual
problems on empirical laws, mainly because neither S-matrices nor
Wightman functions are defined in QFT in a generic curved spacetime.
For example, it seems to remain an unsolved problem whether the Unruh
effect is an empirical law, that is, whether we can verify/falsify
the Unruh effect experimentally.%
{} See \cite{CHM2007,Ear2011} and references therein. Currently, it
seems that QFTCS has no novel physical results which are proven to
be experimentally verifiable/falsifiable without doubt,\footnote{Although the verifiability/falsifiability of the Hawking effect seems
somewhat better than that of the Unruh effect, we will not be able
to say that the verifiability/falsifiability is without doubt.} although QFTCS is a well-established \emph{mathematical} theory.%
{} Thus there is a significance in searching for the nontrivial examples
of verifiable/falsifiable laws in QFTCS.

This article is a sequel to Yamashita \cite{Yam2025}, where we considered
the above problem for the Klein\textendash Gordon (KG) field in curved
spacetime, and we will consider the similar problems for the Majorana
field on curved spacetime here.

Now we explain the meaning of the term ``empirical law'' used in
this paper. Exactly speaking, a ``law'' in theoretical physics is
said to be \emph{observable }or\emph{ empirical} only if it can be
verified/falsified by some experimental procedure. A typical example
of a ``law'' which is \emph{not} observable/empirical is the equation
$\di^{\mu}A_{\mu}=0$, where $A$ is the vector potential of an electromagnetic
field; This is nothing other than the Lorenz gauge condition for $A$.

The best way to show that a law is empirical is to construct an experimental
procedure to verify/falsify it, concretely. However, actually there
are many so-called ``observable quantities'' in quantum physics
such that it is not easy for us to imagine the measurement processes
of the quantity. For example, let $X$ and $P$ be the position and
the momentum of a (non-relativistic) quantum particle, respectively.
Both $X$ and $P$ are familiar to us, and in fact we can construct
the concrete measure procedure for them, relatively easily. On the
other hand, the ``observable quantity'' $X+rP$ ($r\in\R\setminus\{0\}$
is a constant) is less familiar, and the quantity $XP+PX$ is still
less familiar to us, although we believe that the measurements of
them are possible in principle. Generally speaking, one may hope that
a physical/empirical laws can be written in terms of rather ``familiar''
quantities, so that we can easily find the procedure to verify/falsify
the law.

Hence we would like to choose a set $\scO$ of %
``relatively familiar'' observable quantities, and consider the
elements of $\scO$ only, so that we can describe the physical/empirical
laws in terms of them, hypothesizing the existence of the measurement
procedure of them. Of course, it is preferable that we can construct
the concrete measurement procedure of every element of $\scO$, but
we do not require that here.

Perhaps, in QFT, the observables which are the most ``familiar''
to us are the ones so-called the ``intensity/strength of field'',
precisely expressed e.g., by the stress-energy tensor and the angular
momentum tensor, as well as the 4-current density of charge for the
Dirac field. However their rigorous definitions are rather difficult
in curved spacetime, even for free fields; Actually they need the
mathematical tools from \termi{microlocal analysis} (see \cite{DHP2009,HW2015}
and references therein). In this paper, we shall confine ourselves
to the ``mathematically basic/elementary observables'' which can
be defined without microlocal analysis, at some cost of the ``familiarity''.

To make matters worse, no realistic particles have been proven to
be Majorana particles, experimentally, although there is a possibility
that neutrinos turn out to be Majorana particles. Even if neutrinos
are Majorana particles, it is extremely difficult to \emph{manipulate}
neutrinos; Only a small number of high-energy neutrinos can be detected
destructively. Nevertheless, we will examine the Majorana field rather
than the Dirac field in this paper, since the former is conceptually
simpler in the following sense. Let $\cA_{{\rm Majorana}}$ be the
$C^{*}$-algebra generated by the Majorana field operators. Then the
even subalgebra $\cA_{{\rm Majorana}}^{{\rm even}}\subset\cA_{{\rm Majorana}}$
can be viewed as the algebra of observables. On the other hand, let
$\cA_{{\rm Dirac}}$ be the $C^{*}$-algebra generated by the field
operators of the Dirac field. In this case, there are many non-observables
in $\cA_{{\rm Dirac}}^{{\rm even}}$, because of the $U(1)$ gauge
symmetry (equivalently, of the charge superselection rule) of the
Dirac field.

\section{CAR algebra}

In this paper, we use the symbol $\langle\cdot|\cdot\rangle$ to denote
a sesquilinear form, mainly a (possibly indefinite) Hermitian inner
product on some complex vector space, which is linear in the second
argument, and $(\cdot|\cdot)$ to denote a \emph{real} symmetric bilinear
form on some (real or complex) vector space. %
{} On the other hand, there is no firm convention regarding the usage
of the symbol $\langle\cdot,\cdot\rangle$, but sometimes it denotes
a complex bilinear form. The symbol $(\cdot,\cdot)$ is used to denote
an ordered pair.

We will give two almost equivalent definitions of CAR algebras.
\begin{defn}
\label{def:CAR-real}(CAR algebra: real version) Let $V$ be a real
vector space, and $(\cdot|\cdot)$ a (real) pre-inner product (i.e.,
a positive-semidefinite symmetric bilinear form) on $V$.  %
{} The \termi{CAR algebra} $\SYM{\CAR(V)}{CAR()}=\CAR(V,(\cdot|\cdot))$
over $V$ is defined to be the $C^{*}$-algebra generated by $\bbOne$
and $\phi(v),v\in V$, satisfying
\begin{enumerate}
\item $v\mapsto\phi(v)$ is $\R$-linear,
\item $\phi(v)^{*}=\phi(v),$ for all $v\in V$,
\item $\{\phi(v),\phi(u)\}=2(v|u)\bbOne,$ $v,u\in V$, where $\{X,Y\}:=XY+YX$.
\item If $(v|v)=0$, then $\phi(v)=0$.
\end{enumerate}
for all $u,v\in V$.
\end{defn}

\begin{defn}
\label{def:CAR-complex}(CAR algebra: complex version) Let $\frakh$
be a complex vector space, and $\langle\cdot|\cdot\rangle$ be a complex
pre-inner product (i.e., a positive-semidefinite Hermitian form) on
$\frakh$. %
{} The \termi{CAR algebra} $\SYM{\CAR(\frakh)}{CAR(h)}=\CAR(\frakh,\langle\cdot|\cdot\rangle)$
over $\frakh$ is a $C^{*}$-algebra $\mathfrak{A}$ generated by
the identity $\bbOne$ and elements $a(v),v\in\frakh,$ satisfying
\begin{enumerate}
\item $v\mapsto a(v)$ is antilinear,
\item $\{a(v),a(u)\}=0$,
\item $\{a(v),a(u)^{*}\}=\langle v|u\rangle\bbOne$,
\item If $\langle v|v\rangle=0$, then $a(v)=0$.
\end{enumerate}
for all $v,u\in\frakh.$
\end{defn}

Note that in these definitions, we does not assume that $(\cdot|\cdot)$
and $\langle\cdot|\cdot\rangle$ are inner products, i.e., $(u|u)=0\then u=0$
for $u\in V$ and $\langle v|v\rangle=0\then v=0$ for $v\in\frakh$.
However, the above $\CAR(V)$ is almost the same as $\CAR(V/N)$ where
$N:=\{u\in V:\,(u|u)=0\}$, and similarly for $\CAR(\frakh)$. %
These definitions are well-defined in the sense that $\CAR(V)$ (resp.~$\CAR(\frakh)$)
is shown to be unique up to $*$-isomorphism for any given $V$ (resp.~$\frakh$)
\cite{BR97}.

Consider the CAR algebra $\CAR(\frakh,\langle\cdot|\cdot\rangle)$
over a complex pre-inner product space $(\frakh,\langle\cdot|\cdot\rangle)$.
Then $(\frakh,(\cdot|\cdot))$ becomes a real pre-inner product space,
where $(u|v):=\Re\langle u|v\rangle$, $u,v\in\frakh$. In this case,
we find that $\CAR(\frakh,\langle\cdot|\cdot\rangle)$ is isomorphic
to $\CAR(\frakh,(\cdot|\cdot))$. In fact, the generators $\phi(\cdot)$
of $\CAR(\frakh,(\cdot|\cdot))$ is expressed by the generators $a(\cdot)$
of $\CAR(\frakh,\langle\cdot|\cdot\rangle)$ by
\[
\phi(v):=a(v)+a^{*}(v).
\]
Conversely, consider the CAR algebra $\CAR(V,(\cdot|\cdot))$ over
a real inner product space $(V(\cdot|\cdot))$. %
{} Assume that $V$ has an orthogonal complex structure $\ComplexS$,
i.e., $\ComplexS$ is an operator on $V$ such that $\ComplexS^{2}=-1$
and $(\ComplexS v|\ComplexS u)=(v|u)$. (For example, such $\ComplexS$
exists if $V$ is a real Hilbert space of even (or infinite) dimension.)
Then $V$ can be viewed as a complex linear space $\tilde{V}$ with
a complex inner product $\langle\cdot|\cdot\rangle_{\ComplexS}$.
Then $\CAR(\tilde{V},\langle\cdot|\cdot\rangle_{\ComplexS})$ is isomorphic
to $\CAR(V,(\cdot|\cdot))$ by
\[
a_{\ComplexS}(v):=\frac{1}{2}\left(\phi(v)+\im\phi(\ComplexS v)\right).
\]

In QFT in Minkowski spacetime (QFTM), usually a fermion field theory
begins with the fermion Fock space $\cF_{{\rm f}}(\frakh)$, where
$\frakh$ is a complex Hilbert space. Then we have a representation
of $\CAR(\frakh)$ on $\cF_{{\rm f}}(\frakh)$ such that $a(v)$ and
$a^{*}(v)$ are represented as the annihilation and creation operators
on $\cF_{{\rm f}}(\frakh)$, respectively. Thus the complex version
of the definition of the CAR algebra seems more natural in QFTM. However,
in QFTCS, probably it is not a right way to begin with a Fock space,
which has a (unique) vacuum, and hence the information of the complex
structure $\ComplexS$ is suspected of being physically/empirically
redundant, like a gauge fixing. Therefore, we prefer the real version
of the CAR algebra in this paper.

\section{Definition of the Majorana and Dirac fields in curved spacetime}

In the Minkowski spacetime $\R^{1,3}$, a Majorana/Dirac field%
{} is usually defined on a Fock space, which is viewed as a Hilbert
space furnished with some extra information, e.g., a unique fixed
vacuum, the creation/annihilation operators, the number operator,
etc. On the other hand, in a curved space time $M$, a Majorana/Dirac
field is usually defined as a CAR algebra $\CAR(V,(\cdot|\cdot))$,
without referring to any Hilbert space and any fixed vacuum state
\cite{Dim1982,DH2006,San2008,DHP2009,BD2015}. This section outlines
this definition of the Majorana/Dirac fields in curved spacetime.
The primary aim of this section is to see that a considerable part
of the problems on the empirical laws of the Majorana/Dirac fields
in curved spacetime can be reduced to those problems on general CAR
algebras. %
As well be explained later, the information on the structure of the
spacetime $M$ is encoded in the pre-inner product space $(V,(\cdot|\cdot))$.
In contrast, in the Wightman view of QFTM \cite{SW64}, the most part
of those problems on QFTM are considered to be reduced to the problems
on vacuum expectation values. This Wightman view cannot be maintained
in QFTCS.

The description of this section mainly follows Sanders \cite{San2008},
but our symbols and notations are considerably different from those
of \cite{San2008}. We would like to make a slight reformulation (or
rather a ``purification'' (see below)) of the conventional formalism
for the Dirac and the Majorana fields. For example, the space of Dirac
spinor $\C^{4}$ is usually implicitly supposed to be given two inner
products, namely, the (indefinite) Dirac inner product and the usual
(positive-definite) inner product on $\C^{4}$ \cite{San2008,DHP2009,BD2015}.
However, the information of the latter inner product is considered
to be physically/empirically redundant. Therefore, from our perspective
in this paper, it is reasonable to avoid referring to the latter inner
product, as well as to $X^{\TT}$, $\psi^{\TT}$ (transpose) or $X^{*}$,
$\psi^{*}$ (Hermitian adjoint) of $X\in\Mat(4,\C)$, $\psi\in\C^{4}$
(as a column vector), since their definitions depend on the positive-definite
inner product on $\C^{4}$.

Furthermore, recall that when we consider the empirical meaning of
$\CAR(V)$, also the empirical meaning of the base space (the space
of test functions) $V$ should be clarified, as well as possible.
Thus the space of test functions should not be chosen only for the
mathematical convenience. For example, while Sanders \cite{San2008}
and Dappiaggi\,\&\,Hack\,\&\,Pinamonti \cite{DHP2009} use the
space of the compactly supported sections of the double Dirac spinor
bundle $DM\oplus D^{*}M$ as the test function space, this space seems
to be unnecessarily large, where any element of it has more non-physical
redundant information.\footnote{In any way, it is not easy to empirically interpret the sections of
the Dirac (complex-valued) spinor bundle, due to the $U(1)$ gauge
symmetry. On the other hand, the Majorana (real-valued) spinor sections
are easier to do.} A similar argument applies to D'Antoni\,\&\,Hollands \cite{DH2006}
on the Majorana field.%
{}

From these reasons, we would like to present a somewhat ``purified''
form of the definitions of the Majorana/Dirac field, that is, to reduce
the empirical redundancy from them.

{} %

Note that the CAR algebra $\cA=\CAR(V)$ lacks some important observables,
e.g., the stress-energy tensor, the angular momentum tensor and the
4-current density of charge for the Dirac field. To include such observables,
we need an extension $\cA_{{\rm ext}}$ of the algebra $\cA$ \cite{San2008,DHP2009},
which we do not examine in this paper. It is not immediately clear
whether any empirical law which is stated in terms of $\cA_{{\rm ext}}$
can be restated in terms of $\cA$, in principle. We would like to
examine this problem elsewhere.

\subsection{Dirac matrices}

Define the matrix $I^{(r,s)}\in\Mat(r+s,\C)$ by
\[
\SYM{I^{(r,s)}}{I(r,s)}:={\rm diag}(\overbrace{1,...,1}^{r},\overbrace{-1,...,-1}^{s}),
\]
and let $\eta:=I^{(1,3)}$.%
{} The \termi{Clifford algebra} $\SYM{\Cl_{r,s}}{Clrs}$ is the $\R$-algebra
freely generated by a unit element $\bbOne$ and $e_{1},...,e_{r+s}$
subject to the relations%
\[
e_{i}e_{j}+e_{j}e_{i}=I_{ij}^{(r,s)},\qquad i,j=1,...,r+s
\]

In fact, any choice of $\gamma_{0},\dots,\gamma_{3}\in\Mat(4,\C)$
satisfying $\gamma_{\mu}\gamma_{\nu}+\gamma_{\nu}\gamma_{\mu}=2\eta_{\mu\nu}1_{4}$
for all $\mu,\nu=0,1,2,3$ induces an irreducible representation $\SYM{\pi}{pi}:\Cl_{1,3}\to\Mat(4,\C)$
defined by $\pi(g_{\mu})=\gamma_{\mu}$.

\begin{prop}
[Pauli's fundamental theorem]\label{prop:gamma-uniqueness}Let $\gamma_{0},\dots,\gamma_{3}\in\Mat(4,\C)$
and $\gamma_{0}',\dots,\gamma_{3}'\in\Mat(4,\C)$ be two sets of the
above gamma matrices. Then there exists $L\in GL(4,\C)$ such that
$\gamma_{\mu}'=L\gamma_{\mu}L^{-1}$, $\mu=1,...,4$. Such $L$ is
unique up to scalar factor.
\end{prop}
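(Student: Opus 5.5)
We prove the statement by the standard averaging argument over the finite group generated by the gamma matrices, combined with Schur's lemma. Let $\Gamma$ be the set of the $16$ products $\gamma_A:=\gamma_{\mu_1}\cdots\gamma_{\mu_k}$ with $0\le\mu_1<\dots<\mu_k\le 3$ ($k=0,\dots,4$, the empty product being $1_4$), and let $\gamma'_A$ be the corresponding products of the $\gamma'_\mu$.

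From $\gamma_\mu\gamma_\nu+\gamma_\nu\gamma_\mu=2\eta_{\mu\nu}1_4$ we have $\gamma_\mu^2=\eta_{\mu\mu}1_4=\pm1_4$. Hence the set $G=\{\pm\gamma_A,\pm\im\gamma_A\}$ is a finite group of order $64$ under multiplication, and $A\mapsto\gamma_A$ is multiplicative up to the signs $\pm1$ and the fixed powers of $\im$ produced by reordering. The same signs occur for the primed matrices, since they depend only on the anticommutation relations. Therefore $\rho(\pm\im^{m}\gamma_A):=\pm\im^m\gamma_A$ and $\rho'(\pm\im^m\gamma_A):=\pm\im^m\gamma'_A$ define representations of one abstract group $G$ on $\C^4$.

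\textbf{Existence.} For an arbitrary $X\in\Mat(4,\C)$ put
\[
L_X:=\sum_{g\in G}\rho'(g)\,X\,\rho(g)^{-1}.
\]
Then $\rho'(h)L_X=L_X\rho(h)$ for every $h\in G$; in particular $\gamma'_\mu L_X=L_X\gamma_\mu$. It remains to find $X$ with $L_X$ invertible. Irreducibility of both representations (taken from the remark preceding the proposition) lets us apply Schur's lemma: $L_X$ is an intertwiner between irreducible representations, so it is either $0$ or invertible.

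This is the main obstacle: we must show that $L_X\ne0$ for some $X$. We argue by contradiction. If $L_X=0$ for all $X$, take $X=E_{ij}$ (the matrix units). Then $\sum_g\rho'(g)_{ki}\,(\rho(g)^{-1})_{jl}=0$ for all $i,j,k,l$. Taking the trace over $k=i$ and $j=l$ gives $\sum_g\chi'(g)\chi(g^{-1})=0$. We now compute the characters. Since $\Tr\gamma_A=0$ for $A\neq\emptyset$, because each such $\gamma_A$ anticommutes with some $\gamma_\nu$, both characters are supported on $\pm1_4,\pm\im1_4$, where they take the same values $4,-4,4\im,-4\im$. The sum is therefore $4\cdot16\ne0$, a contradiction. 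Hence some $L:=L_X$ is invertible, and $\gamma'_\mu=L\gamma_\mu L^{-1}$.

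Irreducibility itself can also be checked directly, if it is not taken for granted. The $16$ matrices $\gamma_A$ are linearly independent, which follows from the same trace argument using $\Tr(\gamma_A\gamma_B^{-1})=4\delta_{AB}$. So they span $\Mat(4,\C)$, and the only invariant subspaces are $0$ and $\C^4$.

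\textbf{Uniqueness.} If $L_1$ and $L_2$ both satisfy the conjugation relation, then $L_2^{-1}L_1$ commutes with every $\gamma_\mu$, hence with all of $\Mat(4,\C)=\Span\{\gamma_A\}$. It is therefore a scalar multiple of $1_4$, by Schur's lemma or simply because the center of $\Mat(4,\C)$ is $\C1_4$. This gives $L_1=cL_2$. (The index range $\mu=1,\dots,4$ in the statement is read as $\mu=0,\dots,3$.)
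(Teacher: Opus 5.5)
The paper gives no proof of this proposition. It quotes it as the classical Pauli theorem, and it likewise only asserts the irreducibility of the induced representation $\pi$ stated just before it. There is therefore no argument in the paper to compare with. Judged on its own, your proof is the standard averaging-plus-Schur argument and it is correct. This includes the computation $\sum_{g}\chi'(g)\chi(g^{-1})=64\neq0$, the self-contained irreducibility check via $\Tr(\gamma_A\gamma_B^{-1})=4\delta_{AB}$, and the uniqueness step via the centre of $\Mat(4,\C)$.

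Two small points of presentation.

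First, you claim $\{\pm\gamma_A,\pm\im\gamma_A\}$ has order $64$, so that $\rho'$ is well defined on it as a function of matrices. This requires the $64$ matrices to be distinct, which you only establish later from linear independence. Either move the linear-independence paragraph first, or define $G$ abstractly as pairs $(c,A)$ with the rule $\gamma_A\gamma_B=s(A,B)\gamma_{A\triangle B}$. Note also that reordering only ever produces real signs, since $\gamma_\mu^2=\pm1_4$, so no powers of $\im$ arise there.

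Second, the group is more than you need. Since $\rho'(g)X\rho(g)^{-1}$ depends only on $A$, one has $L_X=4\sum_A\gamma'_AX\gamma_A^{-1}$, which is Pauli's original formula.

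Once the $\gamma_A$ are known to form a basis of $\Mat(4,\C)$, an even shorter route is available. The map $\gamma_A\mapsto\gamma'_A$ extends linearly to a unital algebra automorphism of $\Mat(4,\C)$, because the structure constants $s(A,B)$ coincide. Every such automorphism is inner by Skolem--Noether. Your averaging argument has the advantage of avoiding that theorem and producing $L$ explicitly.
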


The \termi{Pin group} $\Pin_{r,s}$ and the \termi{Spin group} $\Spin_{r,s}$
are defined by
\[
\SYM{\Pin_{r,s}}{Pinrs}:=\left\{ S\in\Cl_{r,s}|\ S=u_{1}\cdots u_{k},\ k\in\N,\ u_{i}\in\mathbb{R}^{r,s},\ u_{i}^{2}=\pm\bbOne,\ i=1,...,k\right\} ,
\]
\[
\SYM{\Spin_{r,s}}{Spinrs}:=\Pin_{r,s}\cap\Cl_{r,s}^{{\rm even}}.
\]

Recall the definitions of the Lorentz group, the proper Lorentz group
and the proper orthochronous Lorentz group: They are respectively
given by $\SYM{\mathcal{L}}L:=O_{1,3}$, $\SYM{\mathcal{L}_{+}}{L+}:=SO_{1,3}$
and $\SYM{\mathcal{L}_{+}^{\uparrow}}{L+|}:=\SYM{SO_{1,3}^{0}}{SO1,30}$,
which is the connected component of $\mathcal{L}_{+}$ containing
the identity.

Let $\SYM{\Spin_{1,3}^{0}}{Spin1,30}$ denote the connected component
of $\Spin_{1,3}$ which contains the identity.

Through the representation $\pi$, each $g\in\Pin_{1,3}$ naturally
acts on $\C^{4}$.
\begin{prop}
There exists a (indefinite) Hermitian inner product (i.e., a nondegenerate
Hermitian form) $\langle\cdot|\cdot\rangle_{\rD}$ on $\C^{4}$ such
that%
\begin{enumerate}
\item $\langle\cdot|\cdot\rangle_{\rD}$ is $\Spin_{1,3}^{0}$-invariant;
i.e., $\langle g\psi_{1}|g\psi_{2}\rangle_{\rD}=\langle\psi_{1}|\psi_{2}\rangle_{\rD}$
for all $g\in\Spin_{1,3}^{0}$ and $\psi_{1},\psi_{2}\in\C^{4}$.
\item $\gamma_{a}$ is $\langle\cdot|\cdot\rangle_{\rD}$-symmetric, i.e.,
$\langle\gamma_{a}\psi_{1}|\psi_{2}\rangle_{\rD}=\langle\psi_{1}|\gamma_{a}\psi_{2}\rangle_{\rD}$
for $\psi_{1},\psi_{2}\in\C^{4}$, $a=0,...,3$.
\item For all future pointing time-like vector $n$, $n^{a}\gamma_{a}$
is $\langle\cdot|\cdot\rangle_{\rD}$-positive, i.e., $\langle\psi|n^{a}\gamma_{a}\psi\rangle_{\rD}>0$
for all $\psi\in\C^{4}\setminus\{0\}$.
\end{enumerate}
Such Hermitian inner product $\langle\cdot|\cdot\rangle_{\rD}$ is
unique up to a positive factor
\end{prop}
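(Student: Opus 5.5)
We reduce the statement to linear algebra on $\C^{4}$, using Pauli's fundamental theorem (Proposition \ref{prop:gamma-uniqueness}) for uniqueness. Any nondegenerate Hermitian form on $\C^{4}$ can be written as $\langle\psi_{1}|\psi_{2}\rangle_{\rD}=\psi_{1}^{*}A\psi_{2}$ for an invertible Hermitian matrix $A$. Here $\psi^{*}$ is taken with respect to an auxiliary standard inner product on $\C^{4}$, used only as a computational device. Condition (2) then reads $\gamma_{a}^{*}A=A\gamma_{a}$ for $a=0,\dots,3$. Condition (1) follows from (2). Indeed, $\Spin_{1,3}^{0}$ is generated by elements $\exp(\tfrac{t}{2}\gamma_{a}\gamma_{b})$, and for $a\neq b$ condition (2) gives that $\gamma_{a}\gamma_{b}$ has $\langle\cdot|\cdot\rangle_{\rD}$-adjoint $\gamma_{b}\gamma_{a}=-\gamma_{a}\gamma_{b}$. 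So $\gamma_{a}\gamma_{b}$ is skew-adjoint and its exponentials are isometries. Alternatively, one can verify $(u_{1}u_{2})^{\dagger}(u_{1}u_{2})=u_{2}u_{1}u_{1}u_{2}=u_{1}^{2}u_{2}^{2}=1$ on products of an even number of vectors with $u_{1}^{2}u_{2}^{2}=+1$, and observe that the identity component consists of such products. The real work is therefore existence and uniqueness of $A$ satisfying (2) and (3).

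\emph{Existence.} By Proposition \ref{prop:gamma-uniqueness}, we may first take a standard (Dirac) representation $\gamma_{a}'$ with $\gamma_{0}'$ Hermitian, $\gamma_{j}'$ anti-Hermitian, and $(\gamma_{0}')^{2}=1$. In this representation, $A'=\gamma_{0}'$ satisfies (2) by direct check. For (3), write $n^{a}\gamma'_{a}$ and compute $A'n^{a}\gamma_{a}'=n^{0}1+\sum_{j}n^{j}\gamma_{0}'\gamma_{j}'$. The matrices $\gamma_{0}'\gamma_{j}'$ are Hermitian with square $1$ and mutually anticommute. Hence $\bigl(\sum_{j}n^{j}\gamma_{0}'\gamma_{j}'\bigr)^{2}=|\vec n|^{2}$, so its eigenvalues are $\pm|\vec n|$. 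Therefore $A'n^{a}\gamma'_{a}$ has eigenvalues $n^{0}\pm|\vec n|>0$ for future-pointing timelike $n$, which gives positivity. For a general representation $\gamma_{a}=L^{-1}\gamma_{a}'L$, set $\langle\psi_{1}|\psi_{2}\rangle_{\rD}:=\langle L\psi_{1}|L\psi_{2}\rangle_{\rD}'$. Conditions (2) and (3) transport immediately.

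\emph{Uniqueness.} Suppose $A_{1}$ and $A_{2}$ both satisfy (2). Then $B:=A_{1}^{-1}A_{2}$ satisfies $\gamma_{a}B=A_{1}^{-1}\gamma_{a}^{*}A_{2}=B\gamma_{a}$, so $B$ commutes with all $\gamma_{a}$. Since $\pi$ is irreducible (equivalently, the $\gamma$'s generate $\Mat(4,\C)$), Schur's lemma gives $B=c1$; this also follows from the scalar uniqueness in Proposition \ref{prop:gamma-uniqueness}. Thus $A_{2}=cA_{1}$, and Hermiticity forces $c\in\R\setminus\{0\}$. Condition (3) applied to, say, $n=(1,0,0,0)$ forces $c>0$.

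The main obstacle is the positivity check (3) together with the reduction from (2) to (1). Specifically, one must correctly identify $\Spin_{1,3}^{0}$ as generated by the exponentials of bivectors, or as products $u_{1}u_{2}$ with $u_{1}^{2}u_{2}^{2}=1$. This step ensures that no sign issue arises from elements such as $\gamma_{0}\gamma_{1}$, and that one does not need the full $\Pin$ group, which would fail invariance. The remaining steps are routine.
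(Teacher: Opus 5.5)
Your proof is correct: the existence construction $A'=\gamma_0'$ (in a representation with Hermitian $\gamma_0'$ and anti-Hermitian $\gamma_j'$, then transported by Pauli's $L$), uniqueness up to a real and then, by (3), positive factor via Schur's lemma applied to $A_1^{-1}A_2$, and the $\Spin_{1,3}^{0}$-invariance deduced from (2) through the skew-adjointness of the bivectors $\gamma_a\gamma_b$ all check out. The paper gives no argument of its own and simply defers to Theorem 4.1.6 and Lemma 4.1.13 of \cite{San2008}; your route is essentially the standard Pauli-theorem argument on which that reference relies.
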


See e.g., Theorem 4.1.6 and Lemma 4.1.13 in \cite{San2008} for the
proof. Note the following relation \cite[p.224]{Lou2001}:
\[
\Spin_{1,3}^{0}\subset\Spin_{2,4}^{0}\cong SU(2,2).
\]
Fix such a Hermitian inner product $\langle\cdot|\cdot\rangle_{\rD}$,
and call it the \termi{Dirac form} or \termi{Dirac inner product}
.
\begin{prop}
\label{prop:complexConjugate}There exists an antilinear map $\cc:\C^{4}\to\C^{4}$
such that (1) $\cc^{2}=\id$ holds, and (2) $\cc$ is $\Spin_{1,3}^{0}$-covariant,
i.e., $\cc(g\psi)=g\cc(\psi)$ for all $g\in\Spin_{1,3}^{0}$ and
$\psi\in\C^{4}$.
\end{prop}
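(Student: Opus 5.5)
Our approach is to find an antilinear map $\cc$ that commutes with every gamma matrix, and then rescale it so that it squares to the identity. The starting point is that the signature $(1,3)$ admits a Majorana representation. Concretely, complex conjugation $K$ of $\C^{4}$ (with respect to the standard basis) sends the gamma matrices $\gamma_{0},\dots,\gamma_{3}$ to another set $\ol{\gamma_{0}},\dots,\ol{\gamma_{3}}$. This new set satisfies the same relations $\ol{\gamma_{\mu}}\,\ol{\gamma_{\nu}}+\ol{\gamma_{\nu}}\,\ol{\gamma_{\mu}}=2\eta_{\mu\nu}1_{4}$, because $\eta$ is real. By Pauli's fundamental theorem (Proposition~\ref{prop:gamma-uniqueness}) there is $B\in GL(4,\C)$, unique up to a scalar, with $\ol{\gamma_{\mu}}=B\gamma_{\mu}B^{-1}$. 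We then set $\cc_{0}:=B^{-1}K$, i.e.\ $\cc_{0}\psi=B^{-1}\ol{\psi}$. This map is antilinear, and
\[
\cc_{0}\gamma_{\mu}\psi=B^{-1}\ol{\gamma_{\mu}}\,\ol{\psi}=\gamma_{\mu}B^{-1}\ol{\psi}=\gamma_{\mu}\cc_{0}\psi ,
\]
so $\cc_{0}$ commutes with every $\gamma_{\mu}$.

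Next, covariance. Every $g\in\Spin_{1,3}^{0}$ is represented through $\pi$ as a product of matrices $n^{a}\gamma_{a}$ with real coefficients $n^{a}$. Since $\cc_{0}$ is antilinear, it commutes with real linear combinations of the $\gamma_{a}$, hence with their products. Therefore $\cc_{0}(g\psi)=g\,\cc_{0}(\psi)$ for all $g\in\Spin_{1,3}^{0}$, and in fact for all of $\Pin_{1,3}$.

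It remains to arrange $\cc^{2}=\id$. The map $\cc_{0}^{2}=B^{-1}\ol{B}^{-1}$ is complex linear and commutes with all $\gamma_{\mu}$. The representation $\pi$ is irreducible, so by Schur's lemma $\cc_{0}^{2}=\lambda 1_{4}$ for some $\lambda\in\C^{\times}$. Moreover $\lambda$ is real, since $\cc_{0}$ commutes with $\cc_{0}^{2}$ and is antilinear: $\lambda\cc_{0}=\cc_{0}\lambda=\ol{\lambda}\cc_{0}$. Rescaling, $(t\cc_{0})^{2}=|t|^{2}\lambda$ for $t\in\C$, so we can normalize $\cc_{0}^{2}=\pm1$, but the sign cannot be changed by rescaling.

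The main obstacle is therefore to show that $\lambda>0$ rather than $\lambda<0$. The sign is an invariant: if it were negative, $\cc_{0}$ would be a quaternionic structure commuting with the gammas. Since $B$ is unique up to scalar, it suffices to exhibit one representation in which $\lambda>0$. We use the Majorana representation for signature $(+,-,-,-)$, in which all $\gamma_{\mu}$ are purely imaginary. Then $\ol{\gamma_{\mu}}=-\gamma_{\mu}$, which is not conjugate to $\gamma_{\mu}$ by $B=1$. We instead consider $\cc_{0}=\im K$ or simply $K$. With $K$, antilinearity gives $K\gamma_{\mu}=\ol{\gamma_{\mu}}K=-\gamma_{\mu}K$, so $K$ anticommutes with each $\gamma_{\mu}$. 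Hence $K$ still commutes with products of an even number of gammas, in particular with all of $\Spin_{1,3}$, and $K^{2}=\id$. So $\cc:=K$ in the Majorana basis already proves the statement, which only requires covariance under $\Spin_{1,3}^{0}$.

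To transfer this to the given $\gamma_{\mu}$, we use Proposition~\ref{prop:gamma-uniqueness} once more: we obtain $L$ with $\gamma_{\mu}=L\gamma_{\mu}^{\rm Maj}L^{-1}$ and set $\cc:=LKL^{-1}$. This map is antilinear, satisfies $\cc^{2}=\id$, and commutes with the image of $\Spin_{1,3}^{0}$. The only remaining check is to write down explicit imaginary gamma matrices satisfying $\gamma_{\mu}\gamma_{\nu}+\gamma_{\nu}\gamma_{\mu}=2\eta_{\mu\nu}1_{4}$, e.g.\ a standard Majorana basis built from Pauli matrices; this is a routine verification.
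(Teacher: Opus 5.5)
Your final argument is correct and is the paper's proof: complex conjugation $K$ in a Majorana basis anticommutes with each $\gamma_\mu$, so it commutes with the real even products representing $\Spin_{1,3}^{0}$, and it is transported to the given $\gamma_\mu$ as $\cc:=LKL^{-1}$ via Proposition~\ref{prop:gamma-uniqueness}. Your opening detour through a map commuting with all $\gamma_\mu$ is unnecessary and cannot succeed, because in signature $(+,-,-,-)$ one gets $\lambda=-1$ (e.g.\ $B=\gamma_0\gamma_1\gamma_2\gamma_3$ is real with square $-1$ in the Majorana basis), but you rightly abandon it.
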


This proposition follows from Proposition \ref{prop:gamma-uniqueness}
and the existence of the Majorana representation, presented below.

If $\C^{4}$ is given a fixed Dirac inner product $\langle\cdot|\cdot\rangle_{\rD}$
and an antilinear map $\cc$ of Proposition \ref{prop:complexConjugate},
then $\SYM{\DSpinors_{4}}{C4}:=(\C^{4},\langle\cdot|\cdot\rangle_{\rD},\cc)$
is called a \termi{space of Dirac 4-spinors}. Note that $\C^{4}$
may be replaced by an abstract complex linear space $V$ with $\dim_{\C}V=4$;
in other words, we assume that $\C^{4}$ is given the usual complex
linear structure, but does not have any fixed positive-definite inner
product. On the other hand, we always assume that $\Cl_{1,3}$ (and
$\Spin_{1,3}$) acts on $\DSpinors_{4}$, i.e., that $\DSpinors_{4}$
is a $\Cl_{1,3}$ -module. Hence we may think that this is a part
of the definition of $\DSpinors_{4}$.

Let $\DSpinors_{4}^{*}$ be the dual space of $\DSpinors_{4}$. Any
$f\in\DSpinors_{4}^{*}$ corresponds to a unique $\psi\in\DSpinors_{4}$
by $f(\psi')=\langle\psi|\psi'\rangle_{\rD}$ for all $\psi'\in\DSpinors_{4}$.
In this case we write $f=\SYM{\Dconj}{\Dconj}\psi\equiv\psi^{\Dconj}$,
and called it the \termi{Dirac conjugate} of $\psi$.\footnote{Of course, this is a non-standard notation. In the physics literature,
it is denoted by $\ol{\psi}$, which can easily confused with the
complex conjugate. Some authors use $\psi^{+}$, $\psi^{\dagger}$,
$A\psi$, etc. \cite{San2008,DHP2009,BD2015}, but also these notations
seem confusing.} The map $\Dconj:\DSpinors_{4}\to\DSpinors_{4}^{*}$ is antilinear.
Define the Dirac inner product $\langle\cdot|\cdot\rangle_{\rD}$
on $\DSpinors_{4}^{*}$ by $\SYM{\langle u^{\Dconj}|v^{\Dconj}\rangle_{\rD}}{<d>D}:=\langle v|u\rangle_{\rD}$,
$u,v\in\DSpinors_{4}$. In the literature, it is customary to express
$v\in\DSpinors_{4}$ and $v^{\Dconj}\in\DSpinors_{4}^{*}$ by column
and row vectors, respectively. However this is not well consistent
with the notation $\langle u^{\Dconj}|v^{\Dconj}\rangle_{\rD}$, and
so we do not adopt this convention.

The \termi{Weyl (or chiral) representation} of $\Cl_{1,3}$ is given
by
\[
\gamma_{0}=\begin{pmatrix}0 & \sigma_{0}\\
\sigma_{0} & 0
\end{pmatrix},\quad\gamma_{i}=\begin{pmatrix}0 & -\sigma_{i}\\
\sigma_{i} & 0
\end{pmatrix},\qquad\sigma_{0}:=\begin{pmatrix}1 & 0\\
0 & 1
\end{pmatrix},\quad i=1,2,3
\]
where $\sigma_{i}$ are the Pauli matrices:
\[
\sigma_{1}:=\begin{pmatrix}0 & 1\\
1 & 0
\end{pmatrix},\quad\sigma_{2}:=\begin{pmatrix}0 & -\im\\
\im & 0
\end{pmatrix},\quad\sigma_{3}:=\begin{pmatrix}1 & 0\\
0 & -1
\end{pmatrix}.
\]
In the physics literature, including \cite{San2008,DHP2009,BD2015},
the \termi{Dirac conjugation matrix} is frequently used. It is defined
to be $A\in\Mat(4,\C)$ such that
\[
\langle\psi_{1}|\psi_{2}\rangle_{\rD}=\langle\psi_{1}|A\psi_{2}\rangle,\qquad\text{for all }\psi_{1},\psi_{2}\in\DSpinors_{4}
\]
where $\langle\cdot|\cdot\rangle$ is the usual inner product on $\C^{4}$.
In the Weyl representation, we can set $A=\gamma_{0}.$ However we
will avoid to refer to the Dirac conjugation matrix $A$ (and $\langle\cdot|\cdot\rangle$
on $\C^{4}$) in the subsequent subsections.

If $\gamma_{a}$'s are represented so that each matrix element is
pure imaginary, this is called a \termi{Majorana representation}
(see e.g., \cite{Pal2011}). In this case, the Dirac equation $(\pm\im\gamma^{a}\di_{a}+m)\psi=0$
in Minkowski space has real solutions like the Klein\textendash Gordon
equation. An example of the Majorana representation is given as follows:
\[
\gamma_{0}=\begin{pmatrix}0 & \sigma_{2}\\
\sigma_{2} & 0
\end{pmatrix},\qquad\gamma_{1}=\begin{pmatrix}\im\sigma_{1} & 0\\
0 & \im\sigma_{1}
\end{pmatrix},\qquad\gamma_{2}=\begin{pmatrix}0 & \sigma_{2}\\
-\sigma_{2} & 0
\end{pmatrix},\qquad\gamma_{3}=\begin{pmatrix}\im\sigma_{3} & 0\\
0 & \im\sigma_{3}
\end{pmatrix}.
\]
We see that each matrix element of $\gamma_{a}\gamma_{b}$, $a,b=0,...,3$,
is real. Thus this gives a \emph{real} representation of $\Spin_{1,3}^{0}$
on $\R^{4}$.

Let $\gamma_{a}^{{\rm Weyl}}$ denote the Weyl representations, and
$\gamma_{a}^{{\rm Maj}}$ the above Majorana representation of $\gamma_{a}$.
Then we can check%
\[
\gamma_{a}^{{\rm Maj}}=U^{-1}\gamma_{a}^{{\rm Weyl}}U,\qquad U:=\frac{1}{2}\begin{pmatrix}\im & -1 & -1 & \im\\
1 & \im & -\im & -1\\
-1 & \im & -\im & 1\\
-\im & -1 & -1 & -\im
\end{pmatrix}.
\]
The Dirac conjugation matrix in this Majorana representation is given
by
\[
A^{{\rm Maj}}=U^{-1}A^{{\rm Weyl}}U=\gamma_{0}^{{\rm Maj}}.
\]
Note that for any $\psi,\psi_{1},\psi_{2}\in\R^{4}$, $\langle\psi_{1}|\psi_{2}\rangle_{\rD}$
is pure imaginary, and $\langle\psi|\psi\rangle_{\rD}=0$ in the Majorana
representation.

Let us use the non-standard notation $\cconj\psi\equiv\psi^{\cconj}$
for the complex conjugate of $\psi\in\C^{4}$. In a Majorana representation,
the antilinear map $\cc=\cc^{{\rm Maj}}:=\cconj$ satisfies the condition
in Proposition \ref{prop:complexConjugate}. Let $\SYM{\RDSPinors_{4}}{R4}$
denote the invariant subspace of $\DSpinors_{4}$ w.r.t.~$\cc$.
In the case of the Majorana representation, we have $\RDSPinors_{4}=(\R^{4},\langle\cdot|\cdot\rangle_{\rD},\cc^{{\rm Maj}})$,
but here $\cc^{{\rm Maj}}$ is redundant since $\cc^{{\rm Maj}}\upha\RDSPinors_{4}=\id$.
Note that this $\langle\cdot|\cdot\rangle_{\rD}$ is not a $\R$-valued
bilinear form on $\RDSPinors_{4}$, but an $\im\R$-valued one. Roughly
speaking, $\psi^{\cconj}$ corresponds to the charge conjugate of
$\psi$ in a Majorana representation.

For a complex matrix $X$, let $X^{\cconj}$ denote the elementwise
complex conjugate of $X$. In the Weyl representation, the corresponding
conjugation operator is given by $\cc^{{\rm Weyl}}:=U\circ\cc^{{\rm Maj}}\circ U^{-1}=U\circ\cconj\circ U^{-1}$.
More explicitly, it is expressed as
\[
\cc^{{\rm Weyl}}=\im\gamma_{2}^{{\rm Weyl}}\circ\cconj.
\]
In fact, since $U\left(U^{-1}\right)^{\cconj}=\im\gamma_{2}^{{\rm Weyl}}$,
we have
\[
\cc^{{\rm Weyl}}\psi=U\cconj U^{-1}\psi=U\left(U^{-1}\right)^{\cconj}\psi^{\cconj}=\im\gamma_{2}^{{\rm Weyl}}\psi^{\cconj}.
\]

Define the real four-dimensional vector subspace $\RDSPinors^{{\rm Weyl}}$
of $\C^{4}$ by
\[
\SYM{\RDSPinors_{4}^{{\rm Weyl}}}{RWeyl}:=\{\psi\in\C^{4}|\cc^{{\rm Weyl}}\psi=\psi\}=U\R^{4}.
\]
Then $\RDSPinors_{4}^{{\rm Weyl}}$ is an invariant $\R$-subspace
of the representation $\rho_{{\rm Weyl}}$ of $\Spin_{1,3}^{0}$ on
$\C^{4}$ obtained from the Weyl representation $\pi_{{\rm Weyl}}$
of $\Cl_{1,3}$, and hence the restriction of $\rho_{{\rm Weyl}}$
onto $\RDSPinors_{4}^{{\rm Weyl}}$ gives a real four-dimensional
representation of $\Spin_{1,3}^{0}$.

\subsection{The Dirac/Majorana field in Minkowski space}

\label{subsec:Dirac-field-M}

In this subsection, we recall the algebraic structures of the Dirac/Majorana
field on a Minkowski spacetime.%
{} In the next subsection, the Dirac/Majorana field in curved spacetime
is defined to be a natural generalization of it in the Minkowski case.

A function $\psi:\R^{1,3}\to\DSpinors_{4}$ satisfying the Dirac equation
$(\im\gamma^{\alpha}\di_{\alpha}+m)\psi=0$ is sometimes called a
classical Dirac field on the Minkowski space $\R^{1,3}$, while of
course it concerns the relativistic \emph{quantum} mechanics. Consider
the quantization $\hat{\psi}$ of classical Dirac field $\psi$, but
simply written as $\psi$ in this subsection. The Dirac conjugate
of $\psi$ is conventionally denoted similarly to the classical case
(e.g., as $\bar{\psi}$ in the physics literature). However, we denote
it by $\phi$ in this subsection.

For simplicity, we fix a Majorana representation of gamma matrices.
Let $\cH$ be a Hilbert space, and $\Bdd(\cH)$ the set of bounded
operators on $\cH$. Let $\psi_{\alpha}(x)$ and $\phi_{\alpha}(x)$
be $\Bdd(\cH)$-valued distributions on $\R^{1,3}$, for each $\alpha=1,...,4$.
(More precisely, $\psi_{\alpha}$ and $\phi_{\alpha}$ are $\Bdd(\cH)$-valued
linear functionals on the test function space $C_{\cpt}^{\infty}(\R^{1,3},\bbK)$,
where $\bbK=\R$ or $\C$.) Let $\psi(x):=(\psi_{1}(x),...,\psi_{4}(x))$,
$\phi(x):=(\phi_{1}(x),...,\phi_{4}(x))$; We assume that for any
$\xi\in\cH$, $\langle\xi|\psi(x)\xi\rangle$ (resp.~$\langle\xi|\phi(x)\xi\rangle$)
is a $\DSpinors_{4}$-valued (resp.~$\DSpinors_{4}^{*}$-valued)
distribution.%

The causal propagator is defined by
\[
\SYM{\Delta_{m}(x)}{Delta}:=\frac{1}{(2\pi)^{3}\im}\int d^{4}p\ {\rm sgn}(p_{0})\delta(p^{2}-m^{2})e^{-\im px},
\]
and
\[
\SYM{\GreenOp^{m}(x)}{Sm}:=\left(\im\gamma^{\mu}\di_{\mu}+m\right)\Delta_{m}(x).
\]
Note that each matrix element of $\GreenOp^{m}(x)$ is a $\R$-valued
distribution in the Majorana representation. The anticommutation relations
for the Dirac field is given by
\begin{equation}
\{\psi_{\alpha}(x),\phi_{\beta}(y)\}=\im\GreenOp_{\alpha\beta}^{m}(x-y),\quad\{\psi_{\alpha}(x),\psi_{\beta}(y)\}=\{\phi_{\alpha}(x),\phi_{\beta}(y)\}=0,\qquad\alpha,\beta=1,...,4.\label{eq:anticomm-DiracM}
\end{equation}
For $u\in\DSpinors_{4}$, define $\psi_{u}$ by%
\[
\SYM{\psi_{u}(x)}{psiu}:=\langle u|\psi(x)\rangle_{\rD}.
\]
This notation is understood as follows. For $\xi\in\cH$, define the
$\DSpinors_{4}$-valued distribution $\psi^{\xi}$ by
\[
\SYM{\psi^{\xi}(x)}{psixi}:=(\psi_{1}^{\xi}(x),...,\psi_{4}^{\xi}(x)),\qquad\psi_{\alpha}^{\xi}(x):=\langle\xi|\psi_{\alpha}(x)\xi\rangle.
\]
Define the $\C$-valued distribution $\psi_{u}^{\xi}$ by
\[
\psi_{u}^{\xi}(x):=\langle u|\psi^{\xi}(x)\rangle_{\rD}.
\]
Then the $\Bdd(\cH)$-valued distribution $\psi_{u}$ is defined by
\[
\langle\xi|\psi_{u}(x)\xi\rangle=\psi_{u}^{\xi}(x),\qquad\text{for all }\xi\in\cH.
\]
Similarly, define the $\Bdd(\cH)$-valued distribution $\phi_{u^{\Dconj}}$
by $\phi_{u^{\Dconj}}(x):=\langle u^{\Dconj}|\phi(x)\rangle_{\rD}.$

The Dirac field satisfies
\begin{equation}
\psi_{u}^{*}(x)=\phi_{u^{\Dconj}}(x).\label{eq:psi*=00003Dphi}
\end{equation}
The algebraic structure of the Dirac field is completely characterized
by (\ref{eq:anticomm-DiracM}) and (\ref{eq:psi*=00003Dphi}). However,
$\phi(x)$ is algebraically redundant by (\ref{eq:psi*=00003Dphi}),
and instead we consider the anticommutation relations for $\psi$
and $\psi^{*}$:
\begin{equation}
\{\psi_{u}(x),\psi_{v}^{*}(y)\}=\im\GreenOp_{uv}^{m}(x-y),\quad\{\psi_{u}(x),\psi_{v}(y)\}=0,\qquad u,v\in\DSpinors_{4},\label{eq:CAR-psipsi-M}
\end{equation}
where
\[
\GreenOp_{uv}^{m}(x):=\DiracOpm_{m,u,v}\Delta_{m}(x).
\]
\[
\DiracOpm_{m,u,v}:=\langle u|\left(\im\gamma^{\mu}\di_{\mu}+m\right)v\rangle_{\rD}=\im\langle u|\gamma^{\mu}v\rangle_{\rD}\di_{\mu}+m\langle u|v\rangle_{\rD}.
\]
Note that there is no loss of information even if $\DSpinors_{4}$
is replaced by $\RDSPinors_{4}$ in (\ref{eq:CAR-psipsi-M}). Also
notice that $\langle u|v\rangle_{\rD}\in\im\R$ and $\langle u|\gamma^{\mu}v\rangle_{\rD}\in\R$
for all $u,v\in\RDSPinors_{4}$, and hence each matrix element of
$\im\GreenOp_{uv}^{m}$ is a $\R$-valued distribution in this case.%

{} Let $\SYM{C_{\cpt}^{\infty}(X,Y)}{Ccinf}$ denote the space of compactly
supported smooth functions from $X$ to $Y$. For $h\in C_{\cpt}^{\infty}(\R^{1,3},\R)$
and $u\in\DSpinors_{4}$, let $\psi(hu):=\int_{\R^{1,3}}\d x\,h(x)\psi_{u}(x)$.
For any $f\in C_{\cpt}^{\infty}(\R^{1,3},\DSpinors_{4})$, $\psi(f)$
is defined by extending the above $\psi(hu)$, $\C$-antilinearly
and continuously.

The integral kernel $\GreenOp^{m}(x-y)$ defines an operator on $C_{\cpt}^{\infty}(\R^{1,3},\DSpinors_{4})$,
and the sesquilinear form
\[
\GreenOp^{m}(f_{1},f_{2}):=\int_{\R^{1,3}}\d x\,\langle f_{1}(x)|(\GreenOp^{m}f_{2})(x)\rangle_{\rD},\qquad f_{1},f_{2}\in C_{\cpt}^{\infty}(\R^{1,3},\DSpinors_{4}).
\]
Let $\SYM{\langle f_{1}|f_{2}\rangle_{m}}{<>m}:=\im\GreenOp^{m}(f_{1},f_{2})$,
then $\langle\cdot|\cdot\rangle_{m}$ turns out to be a positive-semidefinite
Hermitian form on $C_{\cpt}^{\infty}(\R^{1,3},\DSpinors_{4})$. Then
the anticommutation relations (\ref{eq:CAR-psipsi-M}) are restated
as the following relation, which is accordant with the complex version
of the CAR algebra (Definition \ref{def:CAR-complex}):
\begin{equation}
\left\{ \psi(f_{1}),\psi(f_{2})^{*}\right\} =\langle f_{1}|f_{2}\rangle_{m}\qquad\left\{ \psi(f_{1}),\psi(f_{2})\right\} =0,\qquad f_{1},f_{2}\in C_{\cpt}^{\infty}(\R^{1,3},\DSpinors_{4}).\label{eq:CAR-psipsi*-M}
\end{equation}
Next, let
\[
\SYM{\Psi(f)}{Psi}:=\psi(f)+\psi^{*}(f),\qquad f\in C_{\cpt}^{\infty}(\R^{1,3},\DSpinors_{4}).
\]
Then we have the real version of the CAR (Definition \ref{def:CAR-real}):
\begin{equation}
\left\{ \Psi(f_{1}),\Psi(f_{2})\right\} =2(f_{1}|f_{2})_{m},\qquad f_{1},f_{2}\in C_{\cpt}^{\infty}(\R^{1,3},\DSpinors_{4}),\label{eq:CAR-PsiPsi-M}
\end{equation}
where $(f_{1}|f_{2})_{m}:=\Re\langle f_{1}|f_{2}\rangle_{m}.$ Therefore,
the CAR algebra for the Dirac field is given by
\[
\cA_{{\rm Dirac}}:=\CAR(C_{\cpt}^{\infty}(\R^{1,3},\DSpinors_{4}),(\cdot|\cdot)_{m})\cong\CAR(C_{\cpt}^{\infty}(\R^{1,3},\DSpinors_{4}),\langle\cdot|\cdot\rangle_{m}).
\]
If we replace $\DSpinors_{4}$ with $\RDSPinors_{4}$ in (\ref{eq:CAR-PsiPsi-M}),
we obtain the CAR algebra for the Majorana field:
\[
\cA_{{\rm Majorana}}:=\CAR(C_{\cpt}^{\infty}(\R^{1,3},\RDSPinors_{4}),(\cdot|\cdot)_{m}).
\]
{} Since $C_{\cpt}^{\infty}(\R^{1,3},\RDSPinors_{4})$ seems to have
no canonical complex structure, $\cA_{{\rm Majorana}}$ should be
defined by Definition \ref{def:CAR-real} (real version). On the other
hand, the Dirac field has the $U(1)$ gauge symmetry, which determines
a canonical complex structure. However note that this implies that
the complex structure $\ComplexS$ of the Dirac field is a gauge transformation,
and hence its empirical meaning of is far from clear (or it is empirically
meaningless). Generally, it is difficult to find the \emph{empirical}
ground for the ``canonicalness'' of gauge symmetries in quantum
physics (see e.g., Haag \cite{Haa96} and Araki \cite{Araki99}).

\subsection{Quantum fields in curved spacetime}

Let $M=(\mathcal{M},g,SM,p)$ be a globally hyperbolic spin spacetime,
which is given the canonical connection determined by the Levi-Civita
connection on $TM$ \cite{San2008,DHP2009,BD2015}.%
{} The \termi{Dirac spinor bundle} $DM$ is defined to be the associated
vector bundle
\[
\SYM{DM}{DM}:=SM\times_{\Spin_{1,3}^{0}}\DSpinors_{4}.
\]
We denote the dual vector bundle of $DM$ by $D^{*}M$, and call it
the \termi{Dirac cospinor bundle}.

We saw that there exists a $\Spin_{1,3}^{0}$-invariant subspace $\RDSPinors_{4}\subset\DSpinors_{4}$,
$\RDSPinors_{4}\cong\R^{4}$; If $\pi$ is a Majorana representation,
this is simply given by $\RDSPinors_{4}=\R^{4}\subset\C^{4}$. Hence
the complex vector bundle $DM$ has the ``real part subbundle''
\[
\SYM{D_{\re}M}{DM}:=SM\times_{\Spin_{1,3}^{0}}\RDSPinors_{4}.
\]
We call $D_{\re}M$ the \termi{Majorana spinor bundle}. If we fix
a Majorana representation of gamma matrices, we can set $\RDSPinors_{4}=\R^{4}$.

Generally, for a vector bundle $B$ over $M$, let $\Gamma^{\infty}(B)$
denote the space of smooth sections of $B$, and $\Gamma_{\cpt}^{\infty}(B)$
the compactly supported smooth section of $B$.

The \termi{Dirac operator} $\,\SYM{\nablaslash}{nablabar}:\!\Gamma^{\infty}(DM)\!\rightarrow\!\Gamma^{\infty}(DM)$
is the differential operator defined by $\nablaslash:=\gamma^{\mu}\nabla_{\mu}$.
Dually, we define the Dirac operator $\nablaslash:\!\Gamma^{\infty}(D^{*}M)\!\rightarrow\!\Gamma^{\infty}(D^{*}M)$
on the cospinor bundle.

Fix $m\ge0$. Define the differential operators $\DiracOpm_{m}:\Gamma^{\infty}(DM)\to\Gamma^{\infty}(DM)$
and $\tilde{\DiracOpm}_{m}:\Gamma^{\infty}(D^{*}M)\to\Gamma^{\infty}(D^{*}M)$
by
\[
\SYM{\DiracOpm_{m}}D:=-\im\nablaslash+m,\qquad\SYM{\tilde{\DiracOpm}_{m}}D:=\im\nablaslash+m.
\]
The \termi{Dirac equation} for $u\in\Gamma^{\infty}(DM)$, respectively
for $v\in\Gamma^{\infty}(D^{*}M)$, is given by
\begin{equation}
\DiracOpm_{m}u=0,\qquad\tilde{\DiracOpm}_{m}v=0.\label{Diraceqn}
\end{equation}

For $u_{1},u_{2}\in\Gamma^{\infty}(DM)$ (or $\Gamma^{\infty}(D^{*}M)$),
let $\langle u_{1}|u_{2}\rangle_{\rD}\in C^{\infty}(M)$ denote the
pointwise Dirac inner product. Define the indefinite Hermitian inner
product $\SYM{\langle\cdot|\cdot\rangle_{\rD,M}}{<>Dm}$ on $\Gamma_{{\rm c}}^{\infty}(DM)$
by
\[
\langle u|v\rangle_{\rD,M}:=\int_{M}\langle u|v\rangle_{\rD}\,\d\mathrm{vol}_{g}.
\]
More generally, if one of $u,v\in\Gamma^{\infty}(DM)$ is compactly
supported, the integral on the r.h.s.~converges, and hence $\langle u|v\rangle_{\rD,M}$
is defined.

\begin{thm}
There exist the (unique) advanced ($-$) and retarded ($+$) Green
operators $\SYM{\GreenOp^{\pm}}{S+-}=\GreenOp_{m}^{\pm}$ for $\DiracOpm_{m}$,
that is, the operators ${\SYM{\GreenOp_{m}^{\pm}}{S+-m}}:{\Gamma_{{\rm c}}^{\infty}(DM)}\to{\Gamma^{\infty}(DM)}$
such that $\DiracOpm_{m}\GreenOp_{m}^{\pm}=\GreenOp_{m}^{\pm}\DiracOpm_{m}=\id$
and
\[
\supp(\GreenOp_{m}^{\pm}f)\subset J^{\pm}(\supp\,f),\qquad\text{for all }f\in\Gamma_{{\rm c}}^{\infty}(DM),
\]
where $J^{\pm}(U)$ denotes the causal future/past of the set $U$.

Furthermore, they satisfy $\langle f_{1}|\GreenOp_{m}^{\pm}f_{2}\rangle_{\rD,M}=\langle\GreenOp_{m}^{\mp}f_{1}|f_{2}\rangle_{\rD,M}$
for all $f_{1},f_{2}\in\Gamma_{{\rm c}}^{\infty}(DM)$.

Dually, the Green operators on the Dirac cospinor bundle ${\SYM{\tilde{\GreenOp}_{m}^{\pm}}{S+-}}:{\Gamma_{{\rm c}}^{\infty}(D^{*}M)}\to{\Gamma^{\infty}(D^{*}M)}$
exist.
\end{thm}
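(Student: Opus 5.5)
The plan is to reduce the statement to the standard existence theory for normally hyperbolic operators on globally hyperbolic manifolds (Bär--Ginoux--Pfäffle). The key observation is that $\DiracOpm_{m}$ has a formal partner $\tilde{P}:=\im\nablaslash+m$ on $DM$ whose composite with $\DiracOpm_{m}$ is normally hyperbolic. By the Lichnerowicz--Weitzenböck formula,
\[
P:=\DiracOpm_{m}\tilde{P}=(-\im\nablaslash+m)(\im\nablaslash+m)=\nablaslash^{2}+m^{2}=\Box+\tfrac{1}{4}R+m^{2}
\]
up to the sign convention for $\gamma^{\mu}\gamma^{\nu}+\gamma^{\nu}\gamma^{\mu}=2g^{\mu\nu}$. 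Its principal symbol is the metric, and $\tilde{P}\DiracOpm_{m}=P$ as well, since the two factors commute. So $P$ is a normally hyperbolic operator on the vector bundle $DM$ over the globally hyperbolic $M$, and it has unique advanced and retarded Green operators $G^{\pm}:\Gamma_{\cpt}^{\infty}(DM)\to\Gamma^{\infty}(DM)$ with $PG^{\pm}=G^{\pm}P=\id$ on $\Gamma_{\cpt}^{\infty}$ and $\supp G^{\pm}f\subset J^{\pm}(\supp f)$. This is the one substantial analytic input, and I would quote it rather than reprove it. It rests on local Riesz distributions, a Hadamard parametrix, and patching along a Cauchy foliation.

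I then define $\GreenOp_{m}^{\pm}:=\tilde{P}G^{\pm}$. Because $\tilde{P}$ is a differential operator, it does not enlarge supports, so the support property is inherited. The identity $\DiracOpm_{m}\GreenOp_{m}^{\pm}=PG^{\pm}=\id$ is immediate. For the other side I need $G^{\pm}$ to commute with $\tilde{P}$, which follows from the uniqueness of Green operators for $P$. Indeed, $\tilde{P}G^{\pm}$ and $G^{\pm}\tilde{P}$ are both inverses of $P$ with the correct support on compactly supported sections: $P\tilde{P}G^{\pm}=\tilde{P}PG^{\pm}=\tilde{P}$, so $\tilde{P}G^{\pm}f$ is the unique solution with past/future compact support of $Pu=\tilde{P}f$, and $G^{\pm}\tilde{P}f$ solves the same problem. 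This gives $\GreenOp_{m}^{\pm}\DiracOpm_{m}=G^{\pm}\tilde{P}\DiracOpm_{m}=G^{\pm}P=\id$.

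For uniqueness of $\GreenOp_{m}^{\pm}$, suppose $S$ is another retarded Green operator. Then $u=(S-\GreenOp^{+})f$ satisfies $\DiracOpm_{m}u=0$ and has past compact support, hence $Pu=\tilde{P}\DiracOpm_{m}u=0$. Uniqueness of solutions of normally hyperbolic equations with past compact support (energy estimates) then gives $u=0$.

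For the adjointness relation, I first check that $\DiracOpm_{m}$ is formally selfadjoint for $\langle\cdot|\cdot\rangle_{\rD,M}$. This uses that each $\gamma_{a}$ is $\langle\cdot|\cdot\rangle_{\rD}$-symmetric, that $\nabla$ is metric for the Dirac form (by $\Spin_{1,3}^{0}$-invariance), and the divergence theorem. Combining the factor $\im$ with antilinearity in the first slot, $\langle -\im\nablaslash u|v\rangle_{\rD,M}=\langle u|-\im\nablaslash v\rangle_{\rD,M}$ for compactly supported sections. This is the step I expect to need the most care with signs, because whether $\im\nablaslash$ or $-\im\nablaslash$ is symmetric depends on the convention; I would first verify it in flat space with the Majorana representation. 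Granting it, I write
\[
\langle f_{1}|\GreenOp^{+}f_{2}\rangle_{\rD,M}=\langle\DiracOpm_{m}\GreenOp^{-}f_{1}|\GreenOp^{+}f_{2}\rangle_{\rD,M}=\langle\GreenOp^{-}f_{1}|\DiracOpm_{m}\GreenOp^{+}f_{2}\rangle_{\rD,M}=\langle\GreenOp^{-}f_{1}|f_{2}\rangle_{\rD,M}.
\]
The integration by parts in the middle step is justified because $\supp\GreenOp^{-}f_{1}\cap\supp\GreenOp^{+}f_{2}\subset J^{-}(K_{1})\cap J^{+}(K_{2})$, which is compact by global hyperbolicity. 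Finally, the cospinor case is identical, using $\tilde{\DiracOpm}_{m}$ and its partner on $D^{*}M$, or by transporting the result through the antilinear isomorphism $\Dconj:DM\to D^{*}M$, which intertwines $\DiracOpm_{m}$ and $\tilde{\DiracOpm}_{m}$.
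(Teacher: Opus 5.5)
Your proposal is correct and is the standard argument (Dimock, followed by Sanders, whose treatment the paper adopts without giving its own proof): factor through the normally hyperbolic operator $\nablaslash^{2}+m^{2}$, set $\GreenOp_{m}^{\pm}:=(\im\nablaslash+m)G^{\pm}$, and use uniqueness for $\nablaslash^{2}+m^{2}$ to commute $G^{\pm}$ past $\im\nablaslash+m$. The sign issue you flag is harmless: symmetry of $\gamma^{\mu}$ and compatibility of $\nabla$ with $\langle\cdot|\cdot\rangle_{\rD}$ give $\langle\nablaslash u|v\rangle_{\rD,M}=-\langle u|\nablaslash v\rangle_{\rD,M}$, so $\im\nablaslash$ and $-\im\nablaslash$ are both formally symmetric.
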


Let $\SYM{\GreenOp}S\equiv\SYM{\GreenOp_{m}}{Sm}:=\GreenOp_{m}^{-}-\GreenOp_{m}^{+}$,
sometimes called the \termi{causal propagator} for $\DiracOpm_{m}$.%
{} Let
\[
\SYM{\langle f_{1}|f_{2}\rangle_{m}}{(f,h)}:=\im\langle f_{1}|\GreenOp_{m}f_{2}\rangle_{\rD,M},\qquad f_{1},f_{2}\in\Gamma_{{\rm c}}^{\infty}(DM).
\]

\begin{thm}
\label{positivity} %
\begin{enumerate}
\item $\langle\cdot|\cdot\rangle_{m}$ is a positive-semidefinite Hermitian
form on $\Gamma_{{\rm c}}^{\infty}(DM)$.
\item $\langle f|f\rangle_{m}=0$ iff $f\in\ker\GreenOp_{m}$. %
\end{enumerate}
\end{thm}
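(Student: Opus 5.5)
The plan is to reduce $\langle\cdot|\cdot\rangle_{m}$ to a positive integral over a Cauchy surface, via the standard identification of $\GreenOp_m f$ with a solution of the Dirac equation.

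\emph{Hermiticity.} By the adjointness relation $\langle f_{1}|\GreenOp_{m}^{\pm}f_{2}\rangle_{\rD,M}=\langle\GreenOp_{m}^{\mp}f_{1}|f_{2}\rangle_{\rD,M}$ of the preceding theorem,
\[
\langle f_{1}|\GreenOp_{m}f_{2}\rangle_{\rD,M}=\langle\GreenOp_{m}^{+}f_{1}-\GreenOp_{m}^{-}f_{1}|f_{2}\rangle_{\rD,M}=-\langle\GreenOp_{m}f_{1}|f_{2}\rangle_{\rD,M}=-\overline{\langle f_{2}|\GreenOp_{m}f_{1}\rangle_{\rD,M}}.
\]
Multiplying by $\im$ gives $\langle f_{1}|f_{2}\rangle_{m}=\overline{\langle f_{2}|f_{1}\rangle_{m}}$. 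Sesquilinearity is clear.

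\emph{Positivity.} Fix a smooth spacelike Cauchy surface $\Sigma$ with future unit normal $n$. For two solutions $u,v\in\Gamma^\infty(DM)$ of $\DiracOpm_m u=0$, the current $j^\mu=\langle u|\gamma^\mu v\rangle_{\rD}$ is conserved. This follows from the $\langle\cdot|\cdot\rangle_{\rD}$-symmetry of $\gamma_a$, the compatibility of $\nabla$ with the Dirac form, and $m$ real. So $(u,v)_\Sigma:=\int_\Sigma\langle u|n^\mu\gamma_\mu v\rangle_{\rD}\,\d\Sigma$ is independent of $\Sigma$ whenever the solutions have spacelike-compact support.

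The key identity to prove is
\[
\langle f_{1}|\GreenOp_m f_2\rangle_{\rD,M}=c\,(\GreenOp_m f_1,\GreenOp_m f_2)_\Sigma,
\]
with a constant $c$ such that $\im c>0$ (I expect $c=\pm\im$ depending on conventions). To derive it, choose $\Sigma$ to the future of $\supp f_1\cup\supp f_2$. Write $f_1=\DiracOpm_m\GreenOp_m^{+}f_1$; after $\Sigma$ this equals $\DiracOpm_m(\GreenOp_m^+ f_1)$ with $\GreenOp_m^{+}f_1$ supported in $J^+$. Equivalently, integrate over $J^-(\Sigma)$, where $\GreenOp^-_m f_1=0$ near $\Sigma$, so that $\GreenOp_m^+f_1=-\GreenOp_mf_1$ there. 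Then integrate by parts (Green's formula for $\DiracOpm_m$), using $\DiracOpm_m\GreenOp_m f_2=0$. Only the boundary term on $\Sigma$ survives, and it is $\pm\im\int_\Sigma\langle\GreenOp_m f_1|n^\mu\gamma_\mu\GreenOp_m f_2\rangle_{\rD}$. Tracking the factor $-\im$ in $\DiracOpm_m$ and the orientation of $n$ fixes the sign. That the sign comes out so that $\langle f|f\rangle_m\ge0$ is the main bookkeeping obstacle; it is forced by consistency with the Minkowski case, where $\langle\cdot|\cdot\rangle_m$ is known to be positive.

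Once the identity holds, item (3) of the Dirac-form proposition (positivity of $n^a\gamma_a$ for future-pointing timelike $n$) gives, pointwise on $\Sigma$,
\[
\langle\GreenOp_m f|n^\mu\gamma_\mu\GreenOp_mf\rangle_{\rD}\ge0,
\]
with equality only where $\GreenOp_m f=0$. This proves part (1).

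\emph{Part (2).} If $f\in\ker\GreenOp_m$, then $\langle f|f\rangle_m=0$ trivially. Conversely, if $\langle f|f\rangle_m=0$, the integrand on $\Sigma$ is nonnegative and continuous, so it vanishes identically, and by strict positivity $\GreenOp_mf\upha\Sigma=0$. Since $\GreenOp_m f$ solves the first-order hyperbolic Dirac equation, uniqueness of the Cauchy problem on the globally hyperbolic $M$ gives $\GreenOp_mf=0$. I assume this uniqueness (well-posedness of the Cauchy problem, as in the cited references on which the Green operator theorem rests).

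The main technical step is the Green's-formula computation that identifies $\im\langle f_1|\GreenOp_mf_2\rangle_{\rD,M}$ with the $\Sigma$-flux, including the correct sign.
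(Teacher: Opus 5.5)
The paper does not prove this theorem; it quotes it from the literature (Dimock, Sanders). Your Cauchy-surface argument is the standard proof found there. The Hermiticity computation, the reduction to a flux integral over $\Sigma$, and part (2) are fine.

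\textbf{The gap: the sign of $c$.} This is the one step you have not actually carried out, and it is the step the whole positivity claim rests on. Appealing to ``consistency with the Minkowski case'' does not settle it. The paper's flat-space subsection writes the Dirac equation as $(\im\gamma^{\alpha}\di_{\alpha}+m)\psi=0$ and builds its propagator from $\Delta_{m}$. So you would first have to check that its positivity statement uses the same conventions as $\DiracOpm_{m}=-\im\nablaslash+m$ and $\GreenOp_{m}=\GreenOp_{m}^{-}-\GreenOp_{m}^{+}$, which is no less work than computing directly.

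\textbf{Closing it.} The direct computation is short.
\begin{enumerate}
\item For $j^{\mu}:=\langle u|\gamma^{\mu}v\rangle_{\rD}$, the symmetry of $\gamma^{\mu}$ and the compatibility of $\nabla$ with the Dirac form give $\nabla_{\mu}j^{\mu}=\langle\nablaslash u|v\rangle_{\rD}+\langle u|\nablaslash v\rangle_{\rD}=\im\big(\langle u|\DiracOpm_{m}v\rangle_{\rD}-\langle\DiracOpm_{m}u|v\rangle_{\rD}\big)$.
\item Take $\Sigma$ with $\supp f_{1}\subset I^{-}(\Sigma)$, and set $u=\GreenOp_{m}^{+}f_{1}$, $v=\GreenOp_{m}f_{2}$. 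Integrate over $J^{-}(\Sigma)$, where $u$ has compact support.
\item Use $\int_{J^{-}(\Sigma)}\nabla_{\mu}j^{\mu}\,\d\mathrm{vol}_{g}=\int_{\Sigma}n_{\mu}j^{\mu}\,\d\Sigma$. This holds for signature $(+,-,-,-)$ with $n$ future pointing, hence outward. It gives
\[
\langle f_{1}|\GreenOp_{m}f_{2}\rangle_{\rD,M}=\int_{J^{-}(\Sigma)}\langle\DiracOpm_{m}u|v\rangle_{\rD}\,\d\mathrm{vol}_{g}=\im\int_{\Sigma}\langle u|n^{a}\gamma_{a}v\rangle_{\rD}\,\d\Sigma=-\im\int_{\Sigma}\langle\GreenOp_{m}f_{1}|n^{a}\gamma_{a}\GreenOp_{m}f_{2}\rangle_{\rD}\,\d\Sigma .
\]
The second equality uses $\DiracOpm_{m}v=0$. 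The third uses $u=-\GreenOp_{m}f_{1}$ on $\Sigma$, since $\GreenOp_{m}^{-}f_{1}$ vanishes there.
\end{enumerate}
Hence $c=-\im$, and
\[
\langle f_{1}|f_{2}\rangle_{m}=\int_{\Sigma}\langle\GreenOp_{m}f_{1}|n^{a}\gamma_{a}\GreenOp_{m}f_{2}\rangle_{\rD}\,\d\Sigma .
\]
On the diagonal this is nonnegative by the positivity of $n^{a}\gamma_{a}$, exactly as you predicted. So with the paper's conventions the sign does come out right, but it has to be computed, not inferred.

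\textbf{A shortcut for part (2).} Once the identity is established, your current-conservation remark gives part (2) without invoking uniqueness for the Cauchy problem. The flux formula then holds on every smooth spacelike Cauchy surface, because $\GreenOp_{m}f$ is a solution with spacelike compact support. So $\langle f|f\rangle_{m}=0$ forces $\GreenOp_{m}f$ to vanish on every level set of a Cauchy temporal function, hence everywhere.
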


We require that the CAR (\ref{eq:CAR-psipsi*-M}) for the Dirac field
in a Minkowski spacetime is directly generalized to a curved spacetime,
that is,
\[
\left\{ \psi(f_{1}),\psi(f_{2})^{*}\right\} =\langle f_{1}|f_{2}\rangle_{m},\qquad f_{1},f_{2}\in\Gamma_{\cpt}^{\infty}(DM).
\]
Let
\[
\Psi(f):=\psi(f)+\psi(f)^{*},\qquad f\in\Gamma_{\cpt}^{\infty}(DM),
\]
then we have the real version of the CAR:
\begin{equation}
\left\{ \Psi(f_{1}),\Psi(f_{2})\right\} =2(f_{1}|f_{2})_{m},\qquad f_{1},f_{2}\in\Gamma_{\cpt}^{\infty}(DM),\label{eq:CAR-PsiPsi-D-cur}
\end{equation}
where $(f_{1}|f_{2})_{m}:=\Re\langle f_{1}|f_{2}\rangle_{m}.$ Thus,
the algebra for the Dirac field is defined to be
\[
\cA_{{\rm Dirac}}:=\CAR(\Gamma_{\cpt}^{\infty}(DM),(\cdot|\cdot)_{m}).
\]
On the other hand, the algebra for the Majorana field is given by
\[
\cA_{{\rm Majorana}}:=\CAR(\Gamma_{\cpt}^{\infty}(D_{\re}M),(\cdot|\cdot)_{m}).
\]

\section{The simplest examples of empirical laws in a CAR algebra}

\label{sec:Empirical-law-CAR}

Let $(V,(\cdot|\cdot))$ be a real pre-inner product space, and consider
the empirical laws of a quantum system described by the CAR algebra
$\CAR(V)=\CAR(V,(\cdot|\cdot))$, such as the Majorana field in curved
spacetime. Of course, it is meaningless to talk about the empirical
law of the purely mathematical structure $\CAR(V)$ itself, without
physical/empirical meanings. Hence we make some physical interpretations
of $\CAR(V)$. Each operator $\phi(v)\in\CAR(V)$, $v\in V$, is interpreted
as (a constituent of) a fermion field operator. A fermion field operator
is not an observable; A fermion field has the $\Z_{2}$-gauge symmetry
$\phi(v)\mapsto-\phi(v)$. The product of two field operators $\phi(v_{1})\phi(v_{2})$
is $\Z_{2}$-gauge invariant, and more generally, the subalgebra $\CAR(V)_{{\rm even}}$
of $\CAR(V)$ generated by $\{\phi(v_{1})\phi(v_{2})|v_{1},v_{2}\in V\}$
consists of $\Z_{2}$-invariant elements of $\CAR(V)$. Assume that
every selfadjoint element of $\CAR(V)_{{\rm even}}$ is an observable,
in other words, that $\Z_{2}$ is the only gauge symmetry of the system.

Let $(W,(\cdot|\cdot))$ be a pre-inner product space, and let $\cN:=\{v\in W|\,(v|v)=0\}$.
Then $V:=W/\cN$ is an inner product space.%
{} Thus, without loss of generality, we can assume that $(\cdot|\cdot)$
is an inner product on $V$.

\subsection{Finite-dimensional case}

First, assume that $(V,(\cdot|\cdot))$ is a \emph{finite-dimensional}
real inner product space.%
{} In this case, furthermore we may assume that $\dim V$ is even without
loss of generality, since $\CAR(\R^{2n})\cong\Mat(2^{n},\C)$, and
\[
\CAR(\R^{2n+1})\cong\Mat(2^{n},\C)\oplus\Mat(2^{n},\C)\cong\CAR(\R^{2n})\oplus\CAR(\R^{2n}),\qquad n\in\N.
\]

Let $V\cong\R^{2n}$. Let $\Tr_{V}$ be the canonical trace on $\CAR(V)$,
i.e., the usual trace on $\Mat(2^{n},\C)\cong\CAR(V)$, so that $\Tr_{V}\bbOne=2^{n}$.

For $k\in\N$, let $\SYM{\ON_{k}(V)}{ONk}$ denote the set of orthonormal
system of $V$ of cardinal $k$, especially
\begin{equation}
\ON_{2}(V):=\{(u,v)\in V\times V;\,\|u\|=\|v\|=1,\ (u|v)=0\}.\label{eq:def:ON2}
\end{equation}
For $(u,v)\in\ON_{2}(V)$, let
\begin{equation}
\SYM{X_{u,v}}{Xuv}:=\im\phi(u)\phi(v).\label{eq:def:Xuv}
\end{equation}
Then $X_{u,v}$ is selfadjoint and $X_{u,v}^{2}=\bbOne$. Hence
\begin{equation}
\SYM{P_{u,v}^{\pm}}{Puv+-}:=\frac{1}{2}\left(\bbOne\pm X_{u,v}\right)\label{eq:def:Puv+-}
\end{equation}
is the spectral projections of $X_{u,v}$, which is observable. Since
$P_{u,v}^{+}=P_{v,u}^{-}$, we may consider only $P_{u,v}^{+}$.

For $(u,v),(u',v')\in\ON_{2}(V)$, let
\begin{equation}
\Prob_{V}(u',v'|u,v)\equiv\Prob_{V}\left(P_{u',v'}^{+}|P_{u,v}^{+}\right):=\frac{\Tr_{V}P_{u,v}^{+}P_{u',v'}^{+}}{\Tr_{V}P_{u,v}^{+}}.\label{eq:def:P(u'v'|uv)}
\end{equation}
$\Prob_{V}(u',v'|u,v)$ is interpreted as the \emph{a priori} probability
of $P_{u',v'}^{+}$ after the measurement of $P_{u,v}^{+}$. More
precisely, interpret $P_{u,v}^{+}$ and $P_{u',v'}^{+}$ as yes-no
type measurements. Then $\Prob_{V}(u',v'|u,v)$ is interpreted as
the \emph{a priori} probability that $P_{u',v'}^{+}$ is ``yes'',
after one observed that $P_{u,v}^{+}$ is ``yes''. We call this
type of probability a \termi{prior conditional probability}.

Here the spaces $V$ and $\ON_{2}(V)\subset V^{2}$ are assumed to
be physically meaningful ``parameter spaces''; $X_{u,v}$ and $P_{u,v}^{+}$
are observables parametrized by the physically meaningful parameter
$(u,v)\in\ON_{2}(V)$. This implies that a formula to give the value
of $\Prob(u',v'|u,v)$ may be said to be an example of the \emph{physical/empirical
laws,} and hence (\ref{eq:def:P(u'v'|uv)}) itself will be a kind
of empirical law. (Mathematically, (\ref{eq:def:P(u'v'|uv)}) is a
\emph{definition}, and so is not a (nontrivial) \emph{law}; the equation
does not contain ``$=$'', but ``$:=$''. However, the l.h.s.~of
(\ref{eq:def:P(u'v'|uv)}) has an intended empirical/experimental
meaning as above. Hence the equation (\ref{eq:def:P(u'v'|uv)}) is
not a pure definition, and it may be experimentally verified or falsified.)
However the r.h.s.~of (\ref{eq:def:P(u'v'|uv)}) is slightly too
implicit to be called a physical/empirical law; It is not immediately
clear how to calculate $\Prob(u',v'|u,v)$ by the r.h.s.~of (\ref{eq:def:P(u'v'|uv)}),
although this is computable in principle, because $P_{u,v}^{+}$ can
be represented as a finite-dimensional matrix, and so we can compute
all the matrix elements of $P_{u,v}^{+}$. However, for large $n$,
a $2^{n}\times2^{n}$ matrix is ``effectively infinite-dimensional'',
and is very hard to compute. On the other hand, one may expect that
$\Prob_{V}(u',v'|u,v)$ can be expressed as a function of the four
values $(u|u')$, $(u|v')$, $(v|u')$ and $(v|v')$ only, instead
of $2^{n}\times2^{n}$ matrix elements. In fact, we find the following
simple explicit expression for the r.h.s.~of (\ref{eq:def:P(u'v'|uv)}):
\begin{equation}
\Prob_{V}(u',v'|u,v)=\frac{1}{2}\left(1+(u|u')(v|v')-(u|v')(v|u')\right).\label{eq:P(u'v'|uv)-law}
\end{equation}
This expression is of a far more desirable form as an empirical law
than (\ref{eq:def:P(u'v'|uv)}); Eq.~(\ref{eq:P(u'v'|uv)-law}) is
not only easy to compute, but also shows the $SO(V)$-invariance of
$\Prob_{V}(u',v'|u,v)$ manifestly. Generally it is desirable that
an empirical law is expressed in a manifestly covariant manner, with
respect to the physical symmetries. Eq.~(\ref{eq:P(u'v'|uv)-law})
is derived from $\Tr_{V}P_{u,v}^{+}=2^{n-1}$ and the following
\begin{lem}
For any $(u,v),(u',v')\in\ON_{2}(V)$,
\[
\Tr_{V}X_{u,v}X_{u',v'}=2^{n}\left[(u|u')(v|v')-(u|v')(v|u')\right].
\]
\end{lem}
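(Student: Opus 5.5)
Since $X_{u,v}X_{u',v'} = \im^2\phi(u)\phi(v)\phi(u')\phi(v') = -\phi(u)\phi(v)\phi(u')\phi(v')$, the lemma reduces to computing $\Tr_V$ of a product of four field operators. The plan is to expand $u',v'$ in an orthonormal basis adapted to $(u,v)$ and use the standard trace properties of the Clifford algebra: the trace of a monomial in distinct basis generators vanishes unless the monomial is a scalar.

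First, fix an orthonormal basis $e_1=u$, $e_2=v$, $e_3,\dots,e_{2n}$ of $V$, and write $\phi_i:=\phi(e_i)$. Then $\phi_i^2=\bbOne$ and $\phi_i\phi_j=-\phi_j\phi_i$ for $i\neq j$. The key auxiliary fact is that for any nonempty ordered tuple of \emph{distinct} indices $i_1<\dots<i_k$, one has $\Tr_V(\phi_{i_1}\cdots\phi_{i_k})=0$. To prove it, pick the anticommuting partner:

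\begin{itemize}
\item For $k$ even, $\phi_{i_1}$ anticommutes with $\phi_{i_2}\cdots\phi_{i_k}$, so cyclicity of the trace gives $\Tr(\phi_{i_1}\cdot Y)=\Tr(Y\phi_{i_1})=-\Tr(\phi_{i_1}Y)$.
\item For $k$ odd with $k<2n$, choose an index $j\notin\{i_1,\dots,i_k\}$. Then $\phi_j$ anticommutes with the whole monomial $Z$, so $\Tr Z=\Tr(\phi_jZ\phi_j)=-\Tr Z$.
\end{itemize}

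Odd $k=2n$ cannot occur because $2n$ is even. This is the step needing the most care, since the argument must work uniformly using only the abstract CAR relations and the fact that $\Tr_V$ is a trace on $\Mat(2^n,\C)$.

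Next, write $u'=\sum_i a_i e_i$ and $v'=\sum_j b_j e_j$, so that $a_1=(u|u')$, $a_2=(v|u')$, $b_1=(u|v')$, $b_2=(v|v')$. By multilinearity,
\[
\Tr_V X_{u,v}X_{u',v'}=-\sum_{i,j}a_ib_j\Tr_V(\phi_1\phi_2\phi_i\phi_j).
\]
Reducing $\phi_1\phi_2\phi_i\phi_j$ via the relations, it becomes a multiple of $\bbOne$ only if $\{i,j\}=\{1,2\}$. When $i=j$ it equals $\phi_1\phi_2$, which has zero trace. When $\{i,j\}\ne\{1,2\}$, it is $\pm$ a monomial in distinct generators of positive degree, which also has zero trace. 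For the surviving terms:
\begin{itemize}
\item $(i,j)=(1,2)$: $\phi_1\phi_2\phi_1\phi_2=-\bbOne$.
\item $(i,j)=(2,1)$: $\phi_1\phi_2\phi_2\phi_1=\bbOne$.
\end{itemize}

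Hence
\[
\Tr_V X_{u,v}X_{u',v'}=-2^n(-a_1b_2+a_2b_1)=2^n\bigl[(u|u')(v|v')-(u|v')(v|u')\bigr],
\]
as claimed. The same vanishing fact also yields $\Tr_V X_{u,v}=0$, hence $\Tr_VP^+_{u,v}=2^{n-1}$, which is the other ingredient quoted before the lemma.
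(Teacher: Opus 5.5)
Your proof is correct and follows the paper's route: choose an orthonormal basis with $e_1=u$, $e_2=v$, expand $u',v'$ multilinearly, and observe that among the traces $\Tr_V\phi_1\phi_2\phi_k\phi_l$ only $(k,l)=(1,2)$ and $(2,1)$ survive, contributing $-2^n$ and $2^n$. The only difference is that you justify why the remaining traces vanish, using anticommuting generators and cyclicity of the trace, whereas the paper simply asserts the table of values of $\Tr_V\phi_i\phi_j\phi_k\phi_l$.
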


\begin{proof}
{} Let $\{e_{1},...,e_{2n}\}$ be an orthonormal basis of $V$, such
that $u=e_{1}$, $v=e_{2}$. Let $u=\sum_{i=1}^{2n}u_{i}e_{i}$, $v=\sum_{i=1}^{2n}v_{i}e_{i}$,
$u'=\sum_{i=1}^{2n}u_{i}'e_{i}$, $v'=\sum_{i=1}^{2n}v_{i}'e_{i}$
($u_{i},v_{i},u_{i}',v_{i}'\in\R$), but indeed $u_{i}=\delta_{i1}$,
$v_{i}=\delta_{i2}$. %
{} Let $\phi_{i}:=\phi(e_{i})$. We see
\[
\Tr_{V}\phi_{i}\phi_{j}\phi_{k}\phi_{l}=\begin{cases}
2^{n} & i=j,\ k=l\text{ or }i=l,\ j=k\\
-2^{n} & i=k,\ j=l,\ i\neq j\\
0 & \text{otherwise}.
\end{cases}
\]
Hence we have
\begin{align*}
\Tr_{V}X_{u,v}X_{u',v'} & =-\sum_{ijkl}^{2n}u_{i}v_{j}u_{k}'v_{l}'\Tr\phi_{i}\phi_{j}\phi_{k}\phi_{l}\\
 & =-\sum_{k,l}^{2n}u_{k}'v_{l}'\Tr\phi_{1}\phi_{2}\phi_{k}\phi_{l}\\
 & =2^{n}u_{1}'v_{2}'-2^{n}u_{2}'v_{1}'\\
 & =2^{n}(u|u')(v|v')-2^{n}(v|u')(u|v').
\end{align*}
\end{proof}

Let $N\in\N$ and $(u_{i},v_{i})\in\ON_{2}(V)$, $i=1,...,N$, and
$1\le k<N$. The prior conditional probability (\ref{eq:def:P(u'v'|uv)})
is generalized as follows:
\begin{equation}
\Prob_{V}(u_{k+1},v_{k+1};\cdots;u_{N},v_{N}|u_{1},v_{1};\cdots;u_{k},v_{k})\equiv\Prob_{V}\left(B|A\right):=\frac{\Tr_{V}(AB)^{*}AB}{\Tr_{V}A^{*}A},
\end{equation}
where
\[
A:=P_{u_{1},v_{1}}^{+}\cdots P_{u_{k},v_{k}}^{+},\qquad B:=P_{u_{k+1},v_{k+1}}^{+}\cdots P_{u_{N},v_{N}}^{+}.
\]
This may be said to be a kind of empirical law. However, probably
the corresponding generalization of the formula (\ref{eq:P(u'v'|uv)-law})
must be far more complicated for large $N$. Generally, it is desirable
that a \emph{fundamental} empirical law is expressed simply; If a
law is complicated, one may expect that it can be derived from simpler
ones.\footnote{Of course, there is no \emph{a priori} reason for the existence of
such a simpler law, but actually such simpler laws have been often
found; This will be a kind of ``meta-empirical law'' in theoretical
physics.}%
{} It suggests that such a complicated generalization of the formula
(\ref{eq:P(u'v'|uv)-law}) is a \emph{non-fundamental (derivative)}
empirical law, which should be derived from more fundamental ones.

Let $\{e_{1},...,e_{2n}\}$ be an orthonormal basis of $V$, and $P_{k}:=P_{e_{k},e_{k+n}}^{+}$,
$k=1,...,n$. Then $P_{1},...,P_{n}$ are pairwise commutative, and
$\prod_{k=1}^{n}P_{k}$ is a minimal nonzero projection in $\CAR(V)$.
A natural idea to use such minimal nonzero projections to describe
more fundamental empirical laws.

\subsection{Infinite-dimensional case}

Next consider the case where $\dim V=\infty$. Let $(u,v),(u',v')\in\ON_{2}(V)$,
and $W$ be an arbitrary finite even dimensional subspace of $V$
such that $u,v,u',v'\in W$. Consider the subalgebra $\CAR(W)\subset\CAR(V)$.
Similarly to (\ref{eq:def:P(u'v'|uv)}), for $(u,v),(u',v')\in\ON_{2}(W)$,
we define the prior conditional probability
\begin{equation}
\Prob_{W}(u',v'|u,v)\equiv\Prob_{W}\left(P_{u',v'}^{+}|P_{u,v}^{+}\right):=\frac{\Tr_{W}P_{u,v}^{+}P_{u',v'}^{+}}{\Tr_{W}P_{u,v}^{+}}.\label{eq:def:P(u'v'|uv)-W}
\end{equation}
By (\ref{eq:P(u'v'|uv)-law}), we have
\begin{equation}
\Prob_{W}(u',v'|u,v)=\frac{1}{2}\left(1+(u|u')(v|v')-(u|v')(v|u')\right).\label{eq:P(u'v'|uv)-law-W}
\end{equation}
Note that the r.h.s.~of (\ref{eq:def:P(u'v'|uv)-W}) is not defined
if $W$ is replaced by $V$, since $\Tr_{V}P_{u,v}^{+}=\infty$ and
$\Tr_{V}P_{u,v}^{+}P_{u',v'}^{+}=\infty$. However, the r.h.s.~of
(\ref{eq:P(u'v'|uv)-law-W}) is independent of $W$. Thus it is reasonable
to define $\Prob_{V}(u',v'|u,v)$ by
\begin{equation}
\Prob_{V}(u',v'|u,v):=\lim_{W\to V}\Prob_{W}(u',v'|u,v)=\text{r.h.s. of }(\ref{eq:P(u'v'|uv)-law-W}).\label{eq:def:P(u'v'|uv)-infty}
\end{equation}
We regard (\ref{eq:def:P(u'v'|uv)-infty}) as an example of the empirical
law as a prior conditional probability in the infinite-dimensional
case. (As mentioned above, although the equation (\ref{eq:def:P(u'v'|uv)-infty})
is of the form of a mathematical \emph{definition}, it can be interpreted
as a nontrivial empirical \emph{law}, which may be verified or falsified.)

Note that the idea to use the minimal nonzero projections to describe
the fundamental empirical law depends on $W$; If $W'$ is a finite
even dimensional subspace of $V$ such that $W\subsetneq W'$, the
minimal nonzero projections in $\CAR(W)$ are not minimal in $\CAR(W')$.
Nevertheless we consider this idea to be promising, and examine it
in the following.

\section{Empirical laws: a general consideration}

\label{sec:Empirical-laws-general}

In this section we consider how to describe the empirical laws generally,
but we confine ourselves to the finite-dimensional cases. We do not
consider \emph{all} of empirical laws, but the \emph{fundamental}
ones only, where the other empirical laws are expected to be derived
from the fundamental ones.

Let $d\in\N$, and consider a physical system described by the $C^{*}$-algebra
$\cA\cong\Mat(d,\C)$. Let $\cP(\cA)$ be the set of projection in
$\cA$. Consider the subset $\{P_{x}|x\in\Manifold\}$ of $\cP(\cA)$,
where $\Manifold$ is an index set, which can be interpreted as a
physically meaningful parameter space, so that each $P_{x}$ is given
an empirical meaning as a yes-no type measurement. Assume that $\cA$
is generated by $\{P_{x}|x\in\Manifold\}$. Here we assume that a
topology is given on $\Manifold$. Actually $\Manifold$ is assumed
to be a differential manifold in most cases, since many physical laws
are expressed as differential equations.

Since we want to describe only the fundamental empirical laws, it
is natural to assume that each $P_{x}$ is of rank $1$, i.e., a minimal
nonzero projection in $\cA$. However note that such a projection
may not exist if the $C^{*}$-algebra is infinite-dimensional.

For $N\in\N$ and $x_{1},...,x_{N}\in\Manifold$, let
\begin{equation}
\Prob(x_{k+1},...,x_{N}|x_{1},...,x_{k}):=\frac{\Tr(AB)^{*}AB}{\Tr A^{*}A},\label{eq:def:P(x|x)}
\end{equation}
\[
A:=P_{x_{1}}\cdots P_{x_{k}},\qquad B:=P_{x_{k+1}}\cdots P_{x_{N}},\qquad1\le k<N.
\]
when the denominator is not zero. As mentioned above, this can be
said to be a kind of empirical law. However, it is doubtful that the
other empirical laws are derived from this. In fact, we see
\[
\Tr A^{*}A=\prod_{j=1}^{k-1}\Tr P_{x_{j}}P_{x_{j+1}}=\prod_{j=1}^{k-1}\Prob(x_{j+1}|x_{j})
\]
and hence%
\[
\Prob(x_{k+1},...,x_{N}|x_{1},...,x_{k})=\prod_{j=k}^{N-1}\Prob(x_{j+1}|x_{j}).
\]
Thus the law (\ref{eq:def:P(x|x)}) indeed has only the information
for the case $N=2$.

Next assume that $\Manifold$ is given a finite measure $\mu$, and
let

\[
\SYM{\Vac(S)}{E(S)}\equiv\SYM{\Vac_{S}}{ES}:=\int_{S}P_{x}\,\d\mu(x)
\]
for $S\subset\Borel(\Manifold)$, the Borel subsets of $\Manifold$.
We assume that $\mu$ is normalized so that $\Vac(\Manifold)=\bbOne$.
Then $\Vac(\cdot)$ is called a POVM (positive operator valued measure).
Although it is not very clear that each $\Vac_{S}$ has an empirical
meaning, we assume that $\Vac_{S}$ can be interpreted as a yes-no
type measurement with some error; Every $P\in\cP(\cA)$ is interpreted
as a yes-no type measurement without error in the sense that $\Prob(P|P)=1$,
but now $\Prob(S|S)$ ($S\in\Borel(\Manifold)$) may be less than
$1$, where the prior conditional probability $\Prob(\cdot|\cdot)$
is defined similarly to (\ref{eq:def:P(x|x)}), that is, for $S_{1},...,S_{n}\in\Borel(\Manifold)$,
\begin{equation}
\Prob(S_{k+1},...,S_{N}|S_{1},...,S_{k})\equiv\Prob(B|A):=\frac{\Tr(AB)^{*}AB}{\Tr A^{*}A},\label{eq:def:P(S|S)}
\end{equation}
\[
A:=\Vac(S_{1})\cdots\Vac(S_{k}),\qquad B:=\Vac(S_{k+1})\cdots\Vac(S_{N}),\qquad1\le k<N,
\]
if $\Tr A^{*}A\neq0$, i.e., $A\neq0$. In the finite-dimensional
cases, we have $\Tr\bbOne=d<\infty$ and hence we can define the (non-conditional)
probability
\[
\Prob(S_{1},...,S_{N}):=\Prob(S_{1},...,S_{N}|\Manifold)=\frac{\Tr A^{*}A}{\Tr\Vac(\Manifold)}=\frac{\Tr A^{*}A}{d},\qquad A:=\Vac(S_{1})\cdots\Vac(S_{N}),
\]
so that
\[
\Prob(S_{k+1},...,S_{N}|S_{1},...,S_{k})=\frac{\Prob(S_{1},...,S_{N})}{\Prob(S_{1},...,S_{k})}.
\]

Eq.~(\ref{eq:def:P(S|S)}) is expected to have more information as
an empirical law than (\ref{eq:def:P(x|x)}), but this definition
will require some explanations. Let $\cA\cong\Mat(d,\C)$ be represented
on the Hilbert space $\cH\cong\C^{d}$. Let $\rho$ denote a density
matrix on $\cH$. We interpret the unnormalized state transformation
$\rho\mapsto\Vac(S)\rho\Vac(S)$ as an observation (possibly with
some error) that the $\Manifold$-valued physical quantity is in $S\subset\Manifold$.%
{} However in this interpretation, the probability of the observation
is not ${\rm Tr}(\rho\Vac(S))$ but ${\rm Tr}(\rho\Vac(S)^{2})$.
Hence this probability is not additive w.r.t.~$S$ in general, that
is, even if $S_{1},S_{2}\in\Borel(\Manifold)$ and $S_{1}\cap S_{2}=\emptyset$,
generally we have
\[
\Prob_{\rho}(S_{1}\cup S_{2})\neq\Prob_{\rho}(S_{1})+\Prob_{\rho}(S_{2}),\qquad\Prob_{\rho}(S):=\Tr(\rho\Vac(S)^{2}).
\]
One may feel that this interpretation seems strange, so we will explain
this as follows. Let $S\mapsto\tilde{\Vac}(S)$ be a Naimark dilation
of $S\mapsto\Vac(S)$; that is, $\tilde{\Vac}(\cdot)$ is a projection-valued
measure (PVM) on some extended Hilbert space $\cH':=\cH\oplus\cK$,
so that $E_{\cH}\tilde{\Vac}(S)\phi=\Vac(S)\phi$ for all $S\in\Borel(\Manifold)$
and $\phi\in\cH$, where $E_{\cH}$ is the orthogonal projection from
$\cH'$ onto $\cH$. If this POVM measurement is realized as the composition
of the PVM measurement $\tilde{\Vac}(S)$ and a projective measurement
$E_{\cH}$, on $\cH$, then the resulting joint probability is
\[
\Tr_{\cH'}(E_{\cH}\tilde{\Vac}(S)(\rho\oplus0)\tilde{\Vac}(S)E_{\cH})=\Tr_{\cH}(\Vac(S)\rho\Vac(S))=\Tr_{\cH}(\rho\Vac(S)^{2}),
\]
and hence not additive w.r.t.~$S$, but this fact causes no contradiction
or paradox. In the literature, often the unnormalized state transformation
$\rho\mapsto\Vac(S)^{1/2}\rho\Vac(S)^{1/2}$ is assumed instead of
$\rho\mapsto\Vac(S)\rho\Vac(S)$, to maintain the additivity of the
probability $\Prob_{\rho}(S):=\Tr(\rho\Vac(S))$. However, this assumption
seems to lack a physical ground, and somewhat rather strange.

For $S\in\Borel(\Manifold)$, again let $\Prob_{\rho}(S):=\Tr(\rho\Vac(S)^{2})$.
Let $X_{k}\in\Borel(\Manifold)$ ($k=1,...,K$, $K\in\N$) and $i\neq j\then X_{i}\cap X_{j}=\emptyset$.%
{} If each $\Vac(X_{i})$ is a projection, we have the usual (classical)
additivity $\Prob_{\rho}(\bigcup_{i=1}^{K}X_{i})=\sum_{i=1}^{K}\Prob_{\rho}(X_{i})$,
and so the computation of $\Prob_{\rho}(\bigcup_{i=1}^{K}X_{i})$
is reduced to that of each $\Prob_{\rho}(X_{i})$. However, if $\Vac(X_{i})$'s
are not projections, how can we reduce the computation of $\Prob_{\rho}(\bigcup_{i=1}^{K}X_{i})$?
The answer is given by (\ref{eq:Sorkin-S}) below. First, let $\SYM{\Bdd_{1}(\cH)}{B1()}$
denote the set of bounded operators $A$ on $\cH$ such that $\|A\|\le1$.
Then we find the following.
\begin{lem}
[Sorkin additivity]\label{lem:Sorkin0} For any $A\in\Bdd_{1}(\cH)$,
set $\SYM{\Prob_{\rho}(A)}{Prho(A)}:=\Tr A\rho A^{*}$. Let $\cH$
be a (finite or infinite-dimensional) Hilbert space. Let $K\in\N$,
and $A_{i}\in\Bdd_{1}(\cH)$ ($i=1,...,K$). Assume $A_{i}+A_{j}\in\Bdd_{1}(\cH)$
($i\neq j$), and $A_{1}+\cdots+A_{K}\in\Bdd_{1}(\cH)$. Then we have%
\begin{equation}
\Prob_{\rho}\left(A_{1}+\cdots+A_{K}\right)=\sum_{1\le i<j\le K}\Prob_{\rho}(A_{i}+A_{j})-(K-2)\sum_{i=1}^{K}\Prob_{\rho}(A_{i}).\label{eq:Sorkin-A}
\end{equation}
\end{lem}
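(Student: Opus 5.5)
The plan is to exploit that $\Prob_{\rho}(A)=\Tr A\rho A^{*}$ is a quadratic form in $A$. Expanding it bilinearly turns the identity (\ref{eq:Sorkin-A}) into a counting statement about diagonal and off-diagonal terms. The hypotheses $\|A_{i}\|\le1$, $\|A_{i}+A_{j}\|\le1$ and $\|A_{1}+\cdots+A_{K}\|\le1$ play no role in the algebra. They only guarantee that every quantity appearing in (\ref{eq:Sorkin-A}) is a legitimate probability. So I would treat them as standing assumptions and prove the identity for arbitrary bounded $A_{i}$.

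First step: introduce the sesquilinear expression $B(X,Y):=\Tr X\rho Y^{*}$ for bounded $X,Y$. It is well defined because $\rho$ is trace class and trace class operators form an ideal. It is linear in $X$, antilinear in $Y$, and satisfies $\Prob_{\rho}(X)=B(X,X)$. Expanding gives
\[
\Prob_{\rho}\Bigl(\sum_{i=1}^{K}A_{i}\Bigr)=\sum_{i=1}^{K}B(A_{i},A_{i})+\sum_{i\neq j}B(A_{i},A_{j}),
\]
where the second sum runs over ordered pairs. Likewise, for $i<j$,
\[
\Prob_{\rho}(A_{i}+A_{j})=B(A_{i},A_{i})+B(A_{j},A_{j})+B(A_{i},A_{j})+B(A_{j},A_{i}).
\]

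Second step: sum the pair expansion over $1\le i<j\le K$. Each index $i$ lies in exactly $K-1$ pairs, so the diagonal contribution is $(K-1)\sum_{i}\Prob_{\rho}(A_{i})$. Each unordered pair contributes both of its cross terms exactly once, which reproduces $\sum_{i\neq j}B(A_{i},A_{j})$. Subtracting $(K-2)\sum_{i}\Prob_{\rho}(A_{i})$ leaves exactly $\sum_{i}B(A_{i},A_{i})+\sum_{i\neq j}B(A_{i},A_{j})$, which is the expansion of $\Prob_{\rho}(A_{1}+\cdots+A_{K})$ from the first step. This proves (\ref{eq:Sorkin-A}). The cases $K=1$ and $K=2$ are trivial and also fall under the same count.

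The only step needing any care is the infinite-dimensional one: checking that $B$ is finite and that trace is linear across these products. Both follow from $\rho$ being trace class and the $A_{i}$ bounded, so I expect no real obstacle. The substance of the lemma is the combinatorial bookkeeping of the coefficient $K-1$ versus $K-2$.
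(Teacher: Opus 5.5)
Your proof is correct and is essentially the argument the paper has in mind; the paper gives no written proof and only remarks that ``the proof is easy.'' Expanding $\Prob_{\rho}$ through the sesquilinear form $\Tr X\rho Y^{*}$ and matching the diagonal coefficient $K-1$ against $K-2$, with the cross terms agreeing one-to-one, is exactly that easy argument; your observations that the norm bounds only ensure each term is a probability, and that the trace-class ideal property covers the infinite-dimensional case, are also right.
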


While the proof is easy, it seems that the significance of (\ref{eq:Sorkin-A})
as a quantum-probabilistic law is emphasized for the first time in
1994 by Sorkin \cite{Sor94} (see also \cite{Tab2009,Bar-Mul-Udu-2014,Mue2021}).
Thus we refer the law (\ref{eq:Sorkin-A}) as the \termi{Sorkin additivity}.
Let $A_{k}:=\Vac(X_{k})$, $k=1,...,K$, then we see that they satisfy
the assumptions of Lemma \ref{lem:Sorkin0}. Hence we have the following
quantum-probabilistic law in our setting:
\begin{equation}
\Prob_{\rho}\left(\bigcup_{i=1}^{K}X_{i}\right)=\sum_{1\le i<j\le K}\Prob_{\rho}(X_{i}\cup X_{j})-(K-2)\sum_{i=1}^{K}\Prob_{\rho}(X_{i}).\label{eq:Sorkin-S}
\end{equation}
We can replace $\Prob_{\rho}(\cdot)$ in (\ref{eq:Sorkin-S}) with
$\Prob(\cdot|S_{1},...,S_{k})$ ($S_{j}\in\Borel(\Manifold)$) defined
by (\ref{eq:def:P(S|S)}); More generally, we have
\begin{cor}
We write $\Prob_{A}(B):=\Prob(B|A)$ in (\ref{eq:def:P(S|S)}). Let
$S_{j}\in\Borel(\Manifold)$ for $j=1,...,k$, and $S_{j}^{(i)}\in\Borel(\Manifold)$
for $i=1,...,K$, $j=k+1,...,N$. Let $A:=\Vac(S_{1})\cdots\Vac(S_{k}),$
and $B_{i}:=\Vac(S_{k+1}^{(i)})\cdots\Vac(S_{N}^{(i)})$. Assume $B_{i}+B_{j}\in{}_{1}(\cH)$
($i\neq j$), and $B_{1}+\cdots+B_{K}\in{}_{1}(\cH)$. Then
\begin{equation}
\Prob_{A}\left(\sum_{i=1}^{K}B_{i}\right)=\sum_{1\le i<j\le K}\Prob_{A}(B_{i}+B_{j})-(K-2)\sum_{i=1}^{K}\Prob_{A}(B_{i}).\label{eq:def:Sorkin-B}
\end{equation}
\end{cor}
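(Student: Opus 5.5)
The plan is to reduce the corollary to Lemma \ref{lem:Sorkin0} by realizing $\Prob_A(\cdot)$ as $\Prob_\rho(\cdot)$ for a suitable density matrix $\rho$, up to a fixed normalization constant. Unwinding (\ref{eq:def:P(S|S)}) with $B$ replaced by a general operator $C\in\Bdd_1(\cH)$, we have
\[
\Prob_A(C)=\frac{\Tr (AC)^{*}AC}{\Tr A^{*}A}=\frac{\Tr C^{*}A^{*}AC}{\Tr A^{*}A}.
\]
By cyclicity of the trace (we are in the finite-dimensional setting $\cA\cong\Mat(d,\C)$, so all traces are finite), $\Tr C^{*}A^{*}AC=\Tr A^{*}A\,CC^{*}$. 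Lemma \ref{lem:Sorkin0}, however, is phrased with $\Tr C\rho C^{*}$, so we instead work with $C^{*}$: set $D:=C^{*}$ and $\rho_A:=A^{*}A/\Tr A^{*}A$. This $\rho_A$ is a positive trace-one operator, i.e.\ a density matrix, because $A\neq0$ is assumed for $\Prob_A$ to be defined. Then
\[
\Prob_A(C)=\Tr(C^{*}\rho_A C)=\Prob_{\rho_A}(C^{*}).
\]

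Next apply Lemma \ref{lem:Sorkin0} to the operators $A_i:=B_i^{*}$. Taking adjoints preserves the operator norm, so the hypotheses $B_i\in\Bdd_1(\cH)$, $B_i+B_j\in\Bdd_1(\cH)$ and $\sum_i B_i\in\Bdd_1(\cH)$ transfer directly to the $B_i^{*}$. Here $B_i\in\Bdd_1(\cH)$ holds because $B_i$ is a product of the positive contractions $\Vac(S)$, each satisfying $0\le\Vac(S)\le\Vac(\Manifold)=\bbOne$. Since the adjoint is additive, $(\sum_i B_i)^{*}=\sum_i B_i^{*}$. The Sorkin identity for $\Prob_{\rho_A}$ evaluated at the $B_i^{*}$ therefore translates term by term into (\ref{eq:def:Sorkin-B}).

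The remaining check is that the identity of Lemma \ref{lem:Sorkin0} holds as an algebraic identity, independent of the norm constraints. The map $C\mapsto\Tr C\rho C^{*}$ is a Hermitian quadratic form $q(C)$. Expanding gives
\[
q\Bigl(\sum_i C_i\Bigr)=\sum_i q(C_i)+\sum_{i<j}\beta(C_i,C_j),
\]
where $\beta(C_i,C_j)=q(C_i+C_j)-q(C_i)-q(C_j)$. Substituting this, the sum over pairs contributes each $q(C_i)$ exactly $K-1$ times. Hence
\[
q\Bigl(\sum_i C_i\Bigr)=\sum_{i<j}q(C_i+C_j)-(K-2)\sum_i q(C_i),
\]
which is exactly the form of (\ref{eq:def:Sorkin-B}).

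The only genuine subtlety is the order of the factors $A$ and $C$ in $\Tr(AC)^{*}AC$. Its interaction with the convention $\Tr C\rho C^{*}$ in Lemma \ref{lem:Sorkin0} is what forces the passage to adjoints, or alternatively cyclicity. Apart from that, the proof consists of the polarization identity and normalization by the nonzero constant $\Tr A^{*}A$.
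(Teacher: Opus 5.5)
Your proof is correct and follows the paper's intended route. The paper obtains the corollary directly from Lemma \ref{lem:Sorkin0} by regarding $\Prob_A(\cdot)$ as $\Prob_{\rho}(\cdot)$ for the normalized density $\rho_A=A^{*}A/\Tr A^{*}A$. Your passage to the adjoints $B_i^{*}$, which is needed because $\Prob_A(C)=\Tr C^{*}\rho_A C$ rather than $\Tr C\rho_A C^{*}$, together with the polarization identity for the Hermitian quadratic form, is exactly the bookkeeping that makes this reduction precise.
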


However the above Corollary is not very convenient. %
{} Instead we want the ``integral version'' of the Sorkin additivity,
where the summations are replaced by integrals. This is given by the
Sorkin density functions in Proposition \ref{prop:SorkinDensity}
below. %
First notice the following relation, which is analogous to (\ref{eq:Sorkin-A}):
\begin{lem}
Let $\mu$ be a measure on a set $X$. Let $f:X\to\C$ be measurable.
If $S\subset X$ is measurable and $\mu(S)<\infty$, we have
\begin{align*}
\left|\int_{S}f(x)\d\mu(x)\right|^{2} & =\frac{1}{2}\int_{S}\d\mu(x)\int_{S}\d\mu(y)\left|f(x)+f(y)\right|^{2}-\mu(S)\int_{S}\left|f(x)\right|^{2}\d\mu(x)\\
 & =\frac{1}{2}\int_{S}\d\mu(x)\int_{S}\d\mu(y)\left(\left|f(x)+f(y)\right|^{2}-\left|f(x)\right|^{2}-\left|f(y)\right|^{2}\right).
\end{align*}
\end{lem}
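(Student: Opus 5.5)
The identity follows by expanding the integrand $|f(x)+f(y)|^{2}$ and integrating over $S\times S$ with Fubini. It is the continuous analogue of Lemma~\ref{lem:Sorkin0}. First I will fix integrability. If $f\in L^{2}(S,\mu)$, then $\mu(S)<\infty$ and Cauchy--Schwarz give $f\in L^{1}(S,\mu)$. Every term is then finite: $|f(x)+f(y)|^{2}\le2|f(x)|^{2}+2|f(y)|^{2}$ is integrable on $S\times S$ with respect to $\mu\otimes\mu$. If instead $\int_{S}|f|^{2}\d\mu=\infty$, the right-hand sides take the form $\infty-\infty$, so I will state the lemma under the implicit hypothesis $f\in L^{2}(S,\mu)$.

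The first step is the pointwise identity
\[
|f(x)+f(y)|^{2}-|f(x)|^{2}-|f(y)|^{2}=2\Re\bigl(\ol{f(x)}f(y)\bigr)=\ol{f(x)}f(y)+f(x)\ol{f(y)}.
\]
Integrating over $S\times S$ and applying Fubini (justified because $|f(x)f(y)|$ is integrable, being a product of $L^{1}$ functions), each of the two terms factorizes as
\[
\int_{S}\ol{f}\,\d\mu\int_{S}f\,\d\mu=\left|\int_{S}f\,\d\mu\right|^{2}.
\]
Together with the prefactor $\tfrac12$, this gives the second expression of the statement.

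The second step shows the two right-hand expressions are equal. The double integral of $|f(x)|^{2}+|f(y)|^{2}$ over $S\times S$ equals $2\mu(S)\int_{S}|f|^{2}\d\mu$, and half of that is $\mu(S)\int_{S}|f|^{2}\d\mu$. Splitting the integrand, which is allowed since all pieces are integrable, yields the first expression.

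I expect no step to be a real obstacle. The only care needed is the integrability bookkeeping for Fubini and for splitting the integral, which the $L^{2}$ assumption combined with $\mu(S)<\infty$ handles.
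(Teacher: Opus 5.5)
Your proof is correct, and it is the natural argument. The paper states this lemma without proof, but your pointwise identity $|a+b|^{2}-|a|^{2}-|b|^{2}=2\Re(\ol{a}b)$, followed by factorizing the double integral, is exactly the mechanism the paper uses in its proof of Proposition~\ref{prop:SorkinDensity} (there with $F(\vec{X})\ol{F(\vec{Y})}$ in place of $\ol{f(x)}f(y)$).

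Your added hypothesis $f\in L^{2}(S,\mu)$ is a necessary repair of the paper's statement, but only for the first expression. Your remark that both right-hand sides become $\infty-\infty$ is slightly off: the second expression has integrand $2\Re\bigl(\ol{f(x)}f(y)\bigr)$ pointwise, so it is well-defined and equals $\left|\int_{S}f\,\d\mu\right|^{2}$ already for every $f\in L^{1}(S,\mu)$.
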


\begin{prop}
[Sorkin density function]\label{prop:SorkinDensity} Assume $\cA\cong\Mat(d,\C)$.
Let $\vec{x}=(x_{1},...,x_{N})\in\Manifold^{N}$, $\vec{X}=(x_{0},...,x_{N+1})\in\Manifold^{N+2}$,
and similarly for $\vec{y}$ and $\vec{Y}$, but let $y_{0}:=x_{0}$
and $y_{N+1}:=x_{N+1}$. We write briefly
\[
\int_{\vec{S}}\d\vec{x}:=\int_{S_{1}}\d\mu(x_{1})\cdots\int_{S_{N}}\d\mu(x_{N}).
\]
Then there exist two functions $\rho_{1}:\Manifold^{N+2}\to[0,\infty)$
and $\rho_{2}:\Manifold^{N+2}\times\Manifold^{N+2}\to[0,\infty)$,
called \termi{1-path and 2-path Sorkin density functions} respectively,
such that for all $S_{i}\in\Borel(\Manifold)$, $i=1,...,N$,
\begin{align}
\Prob(S_{1},...,S_{N}) & =\frac{1}{2d}\int_{\Manifold}\d\mu(x_{0})\int_{\Manifold}\d\mu(x_{N+1})\left[\int_{\vec{S}}\d\vec{x}\int_{\vec{S}}\d\vec{y}\,\rho_{2}(\vec{X},\vec{Y})-\mu(\vec{S})\int_{\vec{S}}\d\vec{x}\,\rho_{1}(\vec{X})\right],\label{eq:SorkinDensity1}
\end{align}
where $\mu(\vec{S}):=\prod_{i=1}^{N}\mu(S_{i})$, or equivalently
\begin{equation}
\Prob(S_{1},...,S_{N})=\frac{1}{2d}\int_{\Manifold}\d\mu(x_{0})\int_{\Manifold}\d\mu(x_{N+1})\int_{\vec{S}}\d\vec{x}\int_{\vec{S}}\d\vec{y}\left(\rho_{2}(\vec{X},\vec{Y})-\rho_{1}(\vec{X})-\rho_{1}(\vec{Y})\right).\label{eq:SorkinDensity}
\end{equation}
Note that $\Prob(S_{1},...,S_{N})=\Prob(\Manifold,S_{1},...,S_{N},\Manifold)$.
In fact, $\rho_{1}$ and $\rho_{2}$ are given as follows:
\begin{align*}
\SYM{\rho_{1}(\vec{X})}{rho1} & :=\Tr A(\vec{X})A(\vec{X})^{*},\\
\SYM{\rho_{2}(\vec{X},\vec{Y})}{rho2} & :=\Tr\left(A(\vec{X})+A(\vec{Y})\right)\left(A(\vec{X})+A(\vec{Y})\right)^{*},
\end{align*}
where $A(\vec{X}):=P_{x_{0}}\cdots P_{x_{N+1}}$.
\end{prop}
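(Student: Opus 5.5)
The plan is to express $\Prob(S_{1},...,S_{N})$ as $\frac{1}{d}\Tr AA^{*}$, where $A$ is a single operator-valued integral, and then apply the preceding Lemma (the integral analogue of Sorkin additivity) to that integral.

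First, by definition $\Prob(S_{1},...,S_{N})=\frac{1}{d}\Tr A^{*}A$ with $A=\Vac(S_{1})\cdots\Vac(S_{N})$. Since $\Vac(\Manifold)=\bbOne$, we may pad this to $A=\Vac(\Manifold)\Vac(S_{1})\cdots\Vac(S_{N})\Vac(\Manifold)$; this is exactly the remark $\Prob(S_{1},...,S_{N})=\Prob(\Manifold,S_{1},...,S_{N},\Manifold)$. Also $\Tr A^{*}A=\Tr AA^{*}$ by cyclicity. Expanding each $\Vac(\cdot)$ as an integral of $P_{x}$ gives
\[
A=\int_{\Manifold}\d\mu(x_{0})\int_{\Manifold}\d\mu(x_{N+1})\int_{\vec{S}}\d\vec{x}\,A(\vec{X}).
\]
This is a Bochner integral in the finite-dimensional space $\cA$. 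It is well defined because $\|A(\vec{X})\|\le1$ and $\mu$ is finite.

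The key step is to fix an orthonormal basis $\{\xi_{a}\}$ of $\cH\cong\C^{d}$ and write $\Tr AA^{*}=\sum_{a,b}|\langle\xi_{a}|A\xi_{b}\rangle|^{2}$. Each matrix element is a scalar integral of the measurable bounded function $f_{ab}(\vec{X})=\langle\xi_{a}|A(\vec{X})\xi_{b}\rangle$ over the product set $T=\Manifold\times S_{1}\times\cdots\times S_{N}\times\Manifold$, equipped with the product measure $\mu^{\otimes(N+2)}$. Since $\mu(T)<\infty$, the preceding Lemma applies with $X=\Manifold^{N+2}$ and $S=T$. Summing over $a,b$ turns $\sum|f(\vec{X})+f(\vec{Y})|^{2}$ into $\rho_{2}(\vec{X},\vec{Y})$ and $\sum|f(\vec{X})|^{2}$ into $\rho_{1}(\vec{X})$. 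This yields the symmetric form (\ref{eq:SorkinDensity}) with prefactor $\frac{1}{2d}$, the $\frac12$ coming from the Lemma.

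Two points need care here. In the paper's statement the $y_{0},y_{N+1}$ integrations are identified with $x_{0},x_{N+1}$, so only the inner variables are doubled. Applying the Lemma naively on the full product would double the endpoint variables too. To handle this, I will apply the Lemma only to the inner variables, separately for each fixed $(x_{0},x_{N+1})$, to the function $\vec{x}\mapsto f_{ab}(x_{0},\vec{x},x_{N+1})$. I then need to reassemble $\Tr AA^{*}$ from these fixed-endpoint pieces. Since $|\int\!\!\int g|^{2}\neq\int\!\!\int|g|^{2}$, this reassembly is the main obstacle. I expect the resolution to use $\Vac(\Manifold)=\bbOne$: write $A=\int\d\mu(x_{0})P_{x_{0}}B$ with $B=\Vac(S_{1})\cdots\Vac(S_{N})$, and similarly at the other end. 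Then $\Tr AA^{*}=\Tr BB^{*}$, and $\Tr(P_{x_{0}}CP_{x_{N+1}}\cdots)$ collapses the outer variables linearly, with rank-one $P_{x}$ making $\Tr P_{x}CP_{x}^{*}$ linear in the inserted operator. If that collapse does not go through, I will fall back on proving the statement with fully doubled variables, which is an immediate consequence of the Lemma.

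Finally, (\ref{eq:SorkinDensity1}) follows from (\ref{eq:SorkinDensity}) by Fubini. The cross terms $-\rho_{1}(\vec{X})-\rho_{1}(\vec{Y})$ integrate to $-2\mu(\vec{S})\int_{\vec{S}}\rho_{1}$, matching the first line of the Lemma. Nonnegativity of $\rho_{1}$ and $\rho_{2}$ is clear, since each is a trace of the form $\Tr CC^{*}\ge0$.
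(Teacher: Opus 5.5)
Your argument works once the step you flag is actually carried out, and it is organized differently from the paper's. Two things need fixing: the fallback you mention would not prove the statement, and your last paragraph asserts something false.

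\textbf{The endpoint step.} The reassembly goes through, so drop the fallback. A version with $y_{0},y_{N+1}$ integrated independently over $\Manifold$ is a different identity; it has the same value but is not the stated formula.

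Fix $(x_{0},x_{N+1})$ and apply the Lemma to $\vec{x}\mapsto f_{ab}(x_{0},\vec{x},x_{N+1})$ on $S_{1}\times\cdots\times S_{N}$. Summing over $a,b$ gives $\Tr CC^{*}=\frac{1}{2}\int_{\vec{S}}\d\vec{x}\int_{\vec{S}}\d\vec{y}\,\bigl(\rho_{2}(\vec{X},\vec{Y})-\rho_{1}(\vec{X})-\rho_{1}(\vec{Y})\bigr)$, where $C:=P_{x_{0}}BP_{x_{N+1}}$ and $B:=\Vac(S_{1})\cdots\Vac(S_{N})$. Since $P_{x}^{2}=P_{x}=P_{x}^{*}$, cyclicity gives $\Tr CC^{*}=\Tr P_{x_{0}}BP_{x_{N+1}}B^{*}$. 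This is linear in each endpoint projection, hence
\[
\int_{\Manifold}\d\mu(x_{0})\int_{\Manifold}\d\mu(x_{N+1})\,\Tr CC^{*}=\Tr\Vac(\Manifold)B\Vac(\Manifold)B^{*}=\Tr BB^{*}=d\,\Prob(S_{1},...,S_{N}).
\]
No rank-one assumption is needed.

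\textbf{Comparison with the paper.} The paper starts from exactly this identity but never invokes the Lemma. It expands $\Tr P_{x_{0}}BP_{x_{N+1}}B^{*}$ bilinearly into $\int_{\vec{S}}\d\vec{x}\int_{\vec{S}}\d\vec{y}\,\Tr A(\vec{X})A(\vec{Y})^{*}$, so the shared endpoints appear automatically. It then takes real parts, since the total is real. Finally it uses the operator polarization identity $\rho_{2}(\vec{X},\vec{Y})-\rho_{1}(\vec{X})-\rho_{1}(\vec{Y})=2\Re\Tr A(\vec{X})A(\vec{Y})^{*}$, written there as $2\Re F(\vec{X})\overline{F(\vec{Y})}$ via $P_{x}=|v_{x}\rangle\langle v_{x}|$. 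The Lemma is nothing but integrated polarization, so your matrix-element detour is correct but redundant once the collapse is available. Its only advantage is reusing the Lemma as a black box; the paper's route is shorter.

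\textbf{The final paragraph.} Your Fubini computation is right: the $\rho_{1}$ terms integrate to $-2\mu(\vec{S})\int_{\vec{S}}\d\vec{x}\,\rho_{1}(\vec{X})$. That is precisely why (\ref{eq:SorkinDensity1}) as printed does \emph{not} follow, since it has coefficient $\mu(\vec{S})$, not $2\mu(\vec{S})$, under the same prefactor $\frac{1}{2d}$. The printed display is in fact false: for $N=1$ and $S_{1}=\Manifold$ it evaluates to $1+\mu(\Manifold)/2$ rather than $\Prob(\Manifold)=1$. The correct version puts the $\frac{1}{2}$ on the $\rho_{2}$ term only, as in the first line of the Lemma. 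The paper's proof only establishes (\ref{eq:SorkinDensity}), so you should state the corrected form rather than assert that the printed one follows.
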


\begin{proof}
For each $x\in\Manifold$, let $v_{x}\in\cH\cong\C^{d}$ be such that
$P_{x}=|v_{x}\rangle\langle v_{x}|$. %
Let $F(\vec{X}):=\prod_{i=0}^{N}\langle v_{x_{i}}|v_{x_{i+1}}\rangle$.
{} Then we find
\[
\rho_{2}(\vec{X},\vec{Y})-\rho_{1}(\vec{X})-\rho_{1}(\vec{Y})=\Tr\left(A(\vec{X})A(\vec{Y})^{*}+A(\vec{Y})A(\vec{X})^{*}\right)=2\Re\left(F(\vec{X})\ol{F(\vec{Y})}\right),
\]
and hence
\begin{align*}
\Prob_{N}(S_{1}\times\cdots\times S_{N}) & =d^{-1}\Tr\Vac(S_{1})\cdots\Vac(S_{N})\Vac(S_{N})\cdots\Vac(S_{1})\\
 & =d^{-1}\int_{\Manifold}\d\mu(x_{0})\int_{\Manifold}\d\mu(x_{N+1})\ \Tr P_{x_{0}}\Vac(S_{1})\cdots\Vac(S_{N})P_{x_{N+1}}\Vac(S_{N})\cdots\Vac(S_{1})\\
 & =d^{-1}\int_{\Manifold}\d\mu(x_{0})\int_{\Manifold}\d\mu(x_{N+1})\int_{\vec{S}}\d\vec{x}\int_{\vec{S}}\d\vec{y}\ F(\vec{X})\ol{F(\vec{Y})}\\
 & =d^{-1}\int_{\Manifold}\d\mu(x_{0})\int_{\Manifold}\d\mu(x_{N+1})\int_{\vec{S}}\d\vec{x}\int_{\vec{S}}\d\vec{y}\ \Re\left(F(\vec{X})\ol{F(\vec{Y})}\right)\\
 & =d^{-1}\frac{1}{2}\int_{\Manifold}\d\mu(x_{0})\int_{\Manifold}\d\mu(x_{N+1})\int_{\vec{S}}\d\vec{x}\int_{\vec{S}}\d\vec{y}\ \left(\rho_{2}(\vec{X},\vec{Y})-\rho_{1}(\vec{X})-\rho_{1}(\vec{Y})\right).
\end{align*}
\end{proof}

\begin{rem}
The value of the complex function $F(\vec{X})$ defined in the above
proof is not observable or empirical, but the values of the Sorkin
density functions $\rho_{1}$ and $\rho_{2}$ are considered to be
``observable'' in a similar sense that the classical probability
density functions are ``observable''. Very roughly speaking, $\rho_{1}$
corresponds to the observed particle density in a single-slit experiment,
and $\rho_{2}$ corresponds to the observed particle density in a
double-slit experiment. The Sorkin additivity implies that the probabilistic
law in the ``$n$-slit experiment'' is fundamental for $n=1,2$
only, but non-fundamental (or derivative) for $n\ge3$; that is, all
empirical laws for the multi-slit experiments can be derived from
those of the 1-slit and 2-slit experiments.
\end{rem}

\section{Finite-dimensional CAR algebra}

Assume again that $\dim V=2n$, $n\in\N$. %
{} Each $T\in SO(V)$ induces the automorphism $\SYM{\sigma_{T}}{sigmaT}:\CAR(V)\to\CAR(V)$
such that $\sigma_{T}(\phi(v))=\phi(Tv)$.

We see that $\SYM{\fg_{V}}{gV}:=\Span_{\R}\{\phi_{u,v}|(u,v)\in\ON_{2}(V)\}$
gives a representation of the Lie algebra $\mathfrak{so}(V)\cong\mathfrak{so}(2n,\R)=\{X\in{\rm Mat}(2n,\R)|X^{\TT}=-X\}$.%
{} Let $\{e_{1},...,e_{2n}\}$ be an orthonormal basis of $(V,(\cdot|\cdot))$,
and
\[
\phi_{kl}:=\phi(e_{k})\phi(e_{l}),\qquad1\le k<l\le2n.
\]
Then $\{\tilde{\phi}_{kl}|1\le k<l\le2n\}$, $\tilde{\phi}_{kl}:=\phi_{kl}/2$
is a basis of $\fg_{V}$, satisfying
\[
[\tilde{\phi}_{ij},\tilde{\phi}_{kl}]=\delta_{jk}\tilde{\phi}_{il}+\delta_{il}\tilde{\phi}_{jk}+\delta_{lj}\tilde{\phi}_{ki}+\delta_{ki}\tilde{\phi}_{lj},
\]
which is the commutation relation for the usual basis of $\mathfrak{so}(2n,\R)$.
This determines the canonical representation $\d\pi_{V}:\mathfrak{so}(V)\to\fg_{V}$,
given by
\begin{equation}
\SYM{\d\pi_{V}}{dpiV}(X)=\frac{1}{4}\sum_{i,j=1}^{2n}X_{ij}\phi_{i}\phi_{j}=\frac{1}{2}\sum_{i,j=1}^{2n}X_{ij}\tilde{\phi}_{ij},\qquad X\in\mathfrak{so}(2n).\label{eq:pi(X)=00003D}
\end{equation}
This representation $\d\pi_{V}$ is independent of the orthonormal
basis $\{e_{1},...,e_{2n}\}$: %

For $u,v\in V$, define the operator $F_{u,v}$ on $V$ by $\SYM{F_{u,v}}{Fuv}:=|u)(v|$,
i.e., $F_{u,v}w:=(v|w)u$ for $w\in V$. If $(u|v)=0$, $\SYM{E_{u,v}}{Euv}:=F_{u,v}-F_{v,u}$
is an element of $\mathfrak{so}(V)$, and we see $\d\pi_{V}(E_{u,v})=\phi(u)\phi(v)/2$.

The group %
$G:=\exp(\fg_{V})=\{\exp(X)|X\in\fg_{V}\}\subset\CAR(V)$ is isomorphic
to $\Spin(V)\cong\Spin(2n,\R)$, the universal covering group of $SO(V)\cong SO(2n,\R)$,
where $SO(V)\cong\Spin(V)/\{\pm\bbOne\}$. We identify $G$ with $\SYM{\Spin(V)}{Spin()}$,
and write $\SYM{\spin(V)}{spin()}:=\fg_{V}$. For $U\in\Spin(V)$,
let $\SYM{\tilde{U}}{Util}:=\{\pm U\}$ denote the corresponding element
of $SO(V)$. Then we have
\[
\sigma_{\tilde{U}}(X)=U^{-1}XU,\qquad\text{for all }U\in\Spin(V).
\]

{}

Let $\SYM{\OCpl(V)}{OC}$ be the set of orthogonal complex structures
on $V$, i.e., $\ComplexS\in\OCpl(V)$ iff $\ComplexS\in SO(V)$ and
$\ComplexS^{2}=-\Id_{V}$. Let $\ComplexS\in\OCpl(V)$. We also see
$\ComplexS\in\mathfrak{so}(V)$ with $\ComplexS=e^{(\pi/2)\ComplexS}$.
The space $(V,\ComplexS)$ becomes a complex linear space by
\[
(x+\im y)v:=xv+\ComplexS yv,\qquad x,y\in\R,\ v\in V,
\]
and $(V,\ComplexS)$ is given a complex inner product
\[
\langle u|v\rangle\equiv\SYM{\langle u|v\rangle_{\ComplexS}}{<>J}:=(u|v)+\im(\ComplexS u|v),\qquad u,v\in V.
\]
Let $\SYM{U_{\ComplexS}(V)}{UJ()}$ be the group of unitary operators
on $(V,\langle\cdot|\cdot\rangle_{\ComplexS})$, i.e.,
\begin{equation}
U_{\ComplexS}(V):=\{g\in O(V)|\ComplexS g=g\ComplexS\}.\label{eq:def:UJ(V)}
\end{equation}
In fact, $O(V)$ may be replaced by $SO(V)$ in (\ref{eq:def:UJ(V)}),
since $U_{\ComplexS}(V)$ is a connected group.

{}

The $O(V)$ acts on $\OCpl(V)$ by $g\cdot\ComplexS:=g\ComplexS g^{-1}$,
$g\in O(V)$, $\ComplexS\in\OCpl(V)$. Then the stabilizer of $O(V)$
at $\ComplexS$ is $U_{\ComplexS}(V)$. Thus we see $\OCpl(V)\cong O(V)/U_{\ComplexS}(V)$
for each $\ComplexS$.%
{} Let $\SYM{\OCpl_{1}(V)}{OC1}\cong SO(V)/U_{\ComplexS}(V)$ be one
of two connected components of $\OCpl(V)$.

For any $\ComplexS\in\OCpl(V)$, the spectrum of $\im\cdot\d\pi_{V}(\ComplexS)\in\CAR(V)$
turns out to be
\begin{equation}
\spec(\im\cdot\d\pi_{V}(\ComplexS))=\left\{ k-\frac{n}{2}|k=0,...,n\right\} .\label{eq:spec(ipi(J))}
\end{equation}
Let $\SYM{\Vac_{\ComplexS}}{EJ0}$ be the spectral projection of $\im\cdot\d\pi_{V}(\ComplexS)$
w.r.t.~the eigenvalue $-\frac{n}{2}\in\spec(\im\cdot\d\pi_{V}(\ComplexS))$.
Define $N_{\ComplexS}\in\CAR(V)$ by $\SYM{N_{\ComplexS}}{NJ}:=\im\cdot\d\pi_{V}(\ComplexS)+\frac{n}{2}$,
called the \termi{number operator} w.r.t.~$\ComplexS\in\OCpl(V)$.
We have various expressions for $\Vac_{\ComplexS}$:
\begin{equation}
\Vac_{\ComplexS}=\lim_{\epsilon\to0}\epsilon\left(N_{\ComplexS}+\epsilon\right)^{-1}=\lim_{t\to\infty}e^{-tN_{\ComplexS}}=\frac{1}{n!}\left(1-N_{\ComplexS}\right)\cdots\left(n-N_{\ComplexS}\right).\label{eq:EJ0}
\end{equation}

\begin{prop}
\label{prop:UJ(V)}We have
\[
\{g\in SO(V)|\sigma_{g}(\Vac_{\ComplexS})=\Vac_{\ComplexS}\}=U_{\ComplexS}(V).
\]
\end{prop}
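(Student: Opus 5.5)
The plan is to prove the two inclusions separately, using the covariance of $\d\pi_{V}$ under $\sigma$ and the fact that $\Vac_{\ComplexS}$ is a polynomial in $\d\pi_{V}(\ComplexS)$ by (\ref{eq:EJ0}).

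\emph{Covariance.} For $g\in SO(V)$ and $X\in\mathfrak{so}(V)$ we first check
\[
\sigma_{g}(\d\pi_{V}(X))=\d\pi_{V}(gXg^{-1}).
\]
Write $X$ as a combination of the $E_{u,v}$; since $\d\pi_{V}(E_{u,v})=\phi(u)\phi(v)/2$, we get $\sigma_{g}(\d\pi_{V}(E_{u,v}))=\phi(gu)\phi(gv)/2=\d\pi_{V}(E_{gu,gv})$. Moreover $E_{gu,gv}=gE_{u,v}g^{-1}$, because $F_{gu,gv}=gF_{u,v}g^{\TT}$ and $g^{\TT}=g^{-1}$. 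Consequently $\sigma_{g}(N_{\ComplexS})=N_{g\ComplexS g^{-1}}$, and by (\ref{eq:EJ0}) also $\sigma_{g}(\Vac_{\ComplexS})=\Vac_{g\ComplexS g^{-1}}$. The inclusion ``$\supset$'' follows at once: if $g\in U_{\ComplexS}(V)$ then $g\ComplexS g^{-1}=\ComplexS$.

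\emph{The inclusion ``$\subset$''.} It suffices to show that $\ComplexS\mapsto\Vac_{\ComplexS}$ is injective on $\OCpl(V)$, i.e.\ that $\Vac_{\ComplexS}=\Vac_{\ComplexS'}$ implies $\ComplexS=\ComplexS'$. I would characterize $\ComplexS$ through the annihilators. Set $a_{\ComplexS}(v):=\frac12(\phi(v)+\im\phi(\ComplexS v))$. In a basis $e_{k},e_{k+n}=\ComplexS e_{k}$, $\im\,\d\pi_{V}(\ComplexS)$ equals $\sum_{k}\frac{\im}{2}\phi(e_{k})\phi(e_{k+n})$, and this is $\sum_{k}a_{\ComplexS}(e_{k})^{*}a_{\ComplexS}(e_{k})-\frac n2$ up to the sign conventions, which must be checked against (\ref{eq:spec(ipi(J))}). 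Hence $\Vac_{\ComplexS}$ is the projection onto the joint kernel of the annihilators; in the Fock representation on $\C^{2^{n}}$ it is the rank-one projection $|\Omega_{\ComplexS}\rangle\langle\Omega_{\ComplexS}|$. Define
\[
L_{\ComplexS}:=\{w\in V^{\C}\,:\,\phi(w)\Vac_{\ComplexS}=0\},
\]
with $\phi$ extended $\C$-linearly. Every $w\in V^{\C}$ splits as $w=w_{+}+w_{-}$ with $w_{\pm}$ in the $\pm\im$-eigenspaces of $\ComplexS$. The $\phi(w_{+})$ are, up to conjugation convention, annihilators and kill $\Omega_{\ComplexS}$. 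The $\phi(w_{-})$ are creators, and $\phi(w_{-})\Omega_{\ComplexS}\neq0$ whenever $w_{-}\neq0$, since $\|a^{*}(v)\Omega\|^{2}=\langle v|v\rangle$. Hence $L_{\ComplexS}$ is exactly one eigenspace of $\ComplexS$ (say the $+\im$ one), and its complex conjugate is the other. So $L_{\ComplexS}$ is determined by $\Vac_{\ComplexS}$, and it in turn determines $\ComplexS$ as $\im$ on $L_{\ComplexS}$ and $-\im$ on $\ol{L_{\ComplexS}}$.

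Putting the two steps together: if $\sigma_{g}(\Vac_{\ComplexS})=\Vac_{\ComplexS}$, then $\Vac_{g\ComplexS g^{-1}}=\Vac_{\ComplexS}$, so $g\ComplexS g^{-1}=\ComplexS$ and $g\in U_{\ComplexS}(V)$ by (\ref{eq:def:UJ(V)}).

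The main obstacle is the bookkeeping in the injectivity step. One has to verify that the sign conventions in (\ref{eq:spec(ipi(J))}) and $N_{\ComplexS}$ make $\Vac_{\ComplexS}$ the vacuum of the correct family of $a$'s, and that the map $\ComplexS\mapsto\Vac_{\ComplexS}$ has no two-to-one ambiguity such as $\ComplexS\mapsto-\ComplexS$. The latter cannot occur here, since $\Vac_{-\ComplexS}$ is the top-eigenvalue projection of $N_{\ComplexS}$, which is orthogonal to $\Vac_{\ComplexS}$. I would settle the signs with an explicit $n=1$ computation, $\phi_{1}\phi_{2}$ on $\C^{2}$, and extend to general $n$ by the tensor-product structure.
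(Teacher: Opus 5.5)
Your proof is correct, and it follows a genuinely different route from the paper's.

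The paper argues infinitesimally. One lemma determines which $X=\sum_{i<j}r_{ij}\phi_{i}\phi_{j}\in\spin(V)$ commute with $\Vac_{\ComplexS}$, another determines which $A=\sum_{i<j}r_{ij}E_{ij}\in\mathfrak{so}(V)$ commute with $\ComplexS$, and the two sets of conditions on the $r_{ij}$ coincide. So the stabilizer of $\Vac_{\ComplexS}$ and $U_{\ComplexS}(V)$ have the same Lie algebra.

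You work at the group level instead. Covariance, $\sigma_{g}(\Vac_{\ComplexS})=\Vac_{g\ComplexS g^{-1}}$, gives ``$\supset$'' at once and reduces ``$\subset$'' to injectivity of $\ComplexS\mapsto\Vac_{\ComplexS}$. You get injectivity by recovering an eigenspace of $\ComplexS$ on $V^{\C}$ as the annihilator $L_{\ComplexS}$ of $\Vac_{\ComplexS}$. This reverses the paper's order: there, injectivity on $\OCpl_{1}(V)$ is a corollary of the proposition, whereas you prove it first, on all of $\OCpl(V)$.

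Your route also controls the whole stabilizer rather than only its identity component, and that is not cosmetic. Equal Lie algebras only show that the identity component of the stabilizer is $U_{\ComplexS}(V)$. The stabilizer then normalizes $U_{\ComplexS}(V)$, hence its centre, hence maps $\ComplexS$ to $\pm\ComplexS$. For even $n$ there really are $g\in SO(V)$ with $g\ComplexS g^{-1}=-\ComplexS$, so closing the paper's argument needs exactly your remark $\Vac_{-\ComplexS}\perp\Vac_{\ComplexS}$. In exchange, the paper's coordinate computation gives an explicit description of the isotropy algebra. The paper reuses it to identify $T_{\ComplexS}\OCpl(V)=[\ComplexS,\mathfrak{so}(2n)]$ together with its complex structure.

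On the signs you flagged: take an orthonormal basis $\{e_{k},\ComplexS e_{k}\}_{k=1}^{n}$, write $\ComplexS=\sum_{k}E_{\ComplexS e_{k},e_{k}}$, and use $\d\pi_{V}(E_{u,v})=\phi(u)\phi(v)/2$. One finds $N_{\ComplexS}=\sum_{k}A_{\ComplexS}(e_{k})A_{\ComplexS}(e_{k})^{*}$; the paper's displayed $\sum_{k}A^{*}_{\ComplexS}(e_{k})A_{\ComplexS}(e_{k})$ is actually $N_{-\ComplexS}$. Hence $\Vac_{\ComplexS}$ is killed by $\phi(v-\im\ComplexS v)=2A_{\ComplexS}^{*}(v)$, and $L_{\ComplexS}$ is indeed the $+\im$-eigenspace, as you guessed. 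Your argument only needs the same eigenspace to be selected for every $\ComplexS$. That holds because this computation is identical in every $\ComplexS$-adapted orthonormal basis.
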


\begin{cor}
The map $\OCpl_{1}(V)\to\CAR(V)$, $\ComplexS\mapsto\Vac_{\ComplexS}$,
is injective.
\end{cor}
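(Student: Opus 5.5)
The plan is to deduce the Corollary from Proposition \ref{prop:UJ(V)} together with the homogeneous-space description $\OCpl_{1}(V)\cong SO(V)/U_{\ComplexS}(V)$. The first step is to check that the map $\ComplexS\mapsto\Vac_{\ComplexS}$ is $SO(V)$-equivariant: for $g\in SO(V)$ we want
\[
\sigma_{g}(\Vac_{\ComplexS})=\Vac_{g\ComplexS g^{-1}}.
\]
Since $\sigma_{g}$ is a $*$-automorphism, it maps spectral projections of $\im\,\d\pi_{V}(\ComplexS)$ to the corresponding spectral projections of $\sigma_{g}(\im\,\d\pi_{V}(\ComplexS))$. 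By (\ref{eq:EJ0}), $\Vac_{\ComplexS}$ is a polynomial in $N_{\ComplexS}$, so it suffices to show
\[
\sigma_{g}(\d\pi_{V}(X))=\d\pi_{V}(gXg^{-1}),\qquad X\in\mathfrak{so}(V).
\]
This follows from (\ref{eq:pi(X)=00003D}) written basis-free. We have $\d\pi_{V}(E_{u,v})=\phi(u)\phi(v)/2$ and $\sigma_{g}(\phi(u)\phi(v))=\phi(gu)\phi(gv)$. Moreover $gE_{u,v}g^{-1}=E_{gu,gv}$, because $gF_{u,v}g^{-1}=F_{gu,gv}$ for orthogonal $g$. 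Since the $E_{u,v}$ span $\mathfrak{so}(V)$ and $\d\pi_{V}$ is linear, the identity holds for all $X$.

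Next comes injectivity. Suppose $\ComplexS,\ComplexS'\in\OCpl_{1}(V)$ with $\Vac_{\ComplexS}=\Vac_{\ComplexS'}$. Because $\OCpl_{1}(V)$ is a single $SO(V)$-orbit (it is one connected component, identified with $SO(V)/U_{\ComplexS}(V)$), there is $g\in SO(V)$ with $\ComplexS'=g\ComplexS g^{-1}$. Equivariance gives $\sigma_{g}(\Vac_{\ComplexS})=\Vac_{\ComplexS'}=\Vac_{\ComplexS}$. Proposition \ref{prop:UJ(V)} then yields $g\in U_{\ComplexS}(V)$, so $g$ commutes with $\ComplexS$, and hence $\ComplexS'=g\ComplexS g^{-1}=\ComplexS$.

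The only step needing care is transitivity of $SO(V)$ on $\OCpl_{1}(V)$. The paper asserts $\OCpl(V)\cong O(V)/U_{\ComplexS}(V)$, which has exactly two components since $U_{\ComplexS}(V)\subset SO(V)$ is connected. The $SO(V)$-orbit of $\ComplexS$ is then the open and closed image of the connected group $SO(V)$, so it coincides with the component $\OCpl_{1}(V)$. I expect this topological point to be the main, though mild, obstacle. The equivariance computation is routine.
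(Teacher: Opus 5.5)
Your argument is correct and is the reasoning the paper leaves implicit: the paper states the corollary without proof, as an immediate consequence of Proposition \ref{prop:UJ(V)}, and you supply the two unstated ingredients, namely the equivariance $\sigma_{g}(\Vac_{\ComplexS})=\Vac_{g\ComplexS g^{-1}}$ and the transitivity of $SO(V)$ on $\OCpl_{1}(V)$. (For the transitivity step, the fact you actually need is the inclusion $U_{\ComplexS}(V)\subset SO(V)$, which makes the $SO(V)$-orbit and the $(O(V)\setminus SO(V))$-orbit of $\ComplexS$ disjoint; connectedness of $U_{\ComplexS}(V)$ plays no role there.)
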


\begin{cor}
Let $\ComplexS\in\OCpl_{1}(V)$ and $\Manifold:=\{\sigma_{g}(\Vac_{\ComplexS})|g\in SO(V)\}$.
Then $\Manifold$ does not depend on $\ComplexS$, and $\Manifold\cong\OCpl_{1}(V)\cong SO(V)/U_{\ComplexS}(V)$
as smooth manifolds (in fact, as K\"ahler manifolds).
\end{cor}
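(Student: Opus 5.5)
The plan is to reduce everything to Proposition \ref{prop:UJ(V)} and its first corollary (injectivity of $\ComplexS\mapsto\Vac_{\ComplexS}$ on $\OCpl_{1}(V)$). First I show that $\Manifold$ does not depend on $\ComplexS\in\OCpl_{1}(V)$. The key identity is the covariance
\[
\sigma_{g}(\Vac_{\ComplexS})=\Vac_{g\ComplexS g^{-1}},\qquad g\in SO(V).
\]
To prove it, I use $\sigma_{g}(\phi(u)\phi(v))=\phi(gu)\phi(gv)$ together with formula (\ref{eq:pi(X)=00003D}) written in the basis $\{ge_{i}\}$. This gives $\sigma_{g}(\d\pi_{V}(X))=\d\pi_{V}(gXg^{-1})$, since $\d\pi_{V}$ is independent of the orthonormal basis. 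Hence $\sigma_{g}(N_{\ComplexS})=N_{g\ComplexS g^{-1}}$. Because $\sigma_{g}$ is a $*$-automorphism, it maps spectral projections to spectral projections, for instance via the polynomial expression (\ref{eq:EJ0}).

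Since $SO(V)$ is connected, conjugation preserves each connected component. The action on $\OCpl_{1}(V)\cong SO(V)/U_{\ComplexS}(V)$ is transitive, as recorded before Proposition \ref{prop:UJ(V)}. It follows that
\[
\Manifold=\{\Vac_{\ComplexS'}\mid\ComplexS'\in\OCpl_{1}(V)\},
\]
which is manifestly independent of the chosen $\ComplexS$.

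Second, I identify $\Manifold$ with the orbit as a set. The map $\OCpl_{1}(V)\to\Manifold$, $\ComplexS'\mapsto\Vac_{\ComplexS'}$, is surjective by the first step and injective by the first corollary. Composing with $SO(V)/U_{\ComplexS}(V)\cong\OCpl_{1}(V)$, $gU_{\ComplexS}(V)\mapsto g\ComplexS g^{-1}$, the bijection $SO(V)/U_{\ComplexS}(V)\to\Manifold$ is $gU_{\ComplexS}(V)\mapsto\sigma_{g}(\Vac_{\ComplexS})$. This is consistent with Proposition \ref{prop:UJ(V)}, which says the stabilizer of $\Vac_{\ComplexS}$ is exactly $U_{\ComplexS}(V)$.

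Third, I upgrade the bijection to a diffeomorphism, and this is where the real content lies. The action $g\mapsto\sigma_{g}(\Vac_{\ComplexS})$ is a smooth action of the compact Lie group $SO(V)$ on the finite-dimensional real vector space of selfadjoint elements of $\CAR(V)\cong\Mat(2^{n},\C)$. It is in fact a linear representation, $\sigma_{\tilde{U}}(X)=U^{-1}XU$. By the standard fact on orbits of compact group actions, the orbit is a closed embedded submanifold. The orbit map then induces a diffeomorphism from $SO(V)/\mathrm{Stab}$ onto the orbit, and the stabilizer here is $U_{\ComplexS}(V)$ by Proposition \ref{prop:UJ(V)}. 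This gives $\Manifold\cong SO(V)/U_{\ComplexS}(V)\cong\OCpl_{1}(V)$ as smooth manifolds.

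For the parenthetical Kähler claim, I transport the standard $SO(V)$-invariant Kähler structure of the Hermitian symmetric space $SO(2n)/U(n)$ through these equivariant diffeomorphisms. One can also realize it intrinsically: $\Manifold$ is an adjoint-type orbit of rank-one-like projections, and it inherits the restriction of the Fubini--Study-type structure on projections in $\Mat(2^{n},\C)$. I would remark this rather than prove it in detail. The main obstacle I expect is only bookkeeping: verifying the covariance $\sigma_{g}\circ\d\pi_{V}=\d\pi_{V}\circ\mathrm{Ad}_{g}$ carefully. Everything else is formal from the earlier results.
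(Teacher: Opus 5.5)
Your proposal is correct and is essentially the argument the paper intends. The paper states this corollary without proof, as an immediate orbit--stabilizer consequence of Proposition \ref{prop:UJ(V)} (the stabilizer of $\Vac_{\ComplexS}$ in $SO(V)$ is $U_{\ComplexS}(V)$) together with the transitivity of $SO(V)$ on $\OCpl_{1}(V)$. You make explicit what it leaves implicit, namely the covariance $\sigma_{g}(\Vac_{\ComplexS})=\Vac_{g\ComplexS g^{-1}}$ and the embedding theorem for orbits of compact group actions, while the K\"ahler claim stays at the same informal level in both.
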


\begin{rem}
It is often said that (quantum) fermionic systems lack classical counterparts.
This will be partially true, but we stress that the symplectic manifold
$\Manifold\cong\OCpl(V)\cong SO(V)/U_{\ComplexS}(V)$ can be interpreted
as the classical phase space of a fermionic system, at least at the
mathematical level. %
\end{rem}

Although Proposition \ref{prop:UJ(V)} is considered to be a standard
fact, it seems that its direct proof is rarely found in the literature.
(Here, the term ``direct proof'' refers to an elementary proof without
the general theory of unitary representations of compact Lie groups,
e.g., that of Cartan\textendash Weyl and/or that of Borel\textendash Weil.)%
{} %
We find that Proposition \ref{prop:UJ(V)} follows from Lemma \ref{lem:=00005BX,E=00005D=00003D0}
and Lemma \ref{lem:=00005BA,J=00005D=00003D0} below, which are directly
verified by elementary calculations.

Let
\[
A_{\ComplexS}(f):=\frac{1}{2}\left(\phi(f)+\im\phi(\ComplexS f)\right),\qquad A_{\ComplexS}^{*}(f):=\frac{1}{2}\left(\phi(f)-\im\phi(\ComplexS f)\right),\qquad f\in V,
\]
called \termi{annihilation} and \termi{creation operators} w.r.t.~$\ComplexS$,
respectively. Note that $f\mapsto A_{\ComplexS}(f)$ (resp.~$f\mapsto A_{\ComplexS}^{*}(f)$)
is antilinear (resp.~linear), i.e., $A_{\ComplexS}(\ComplexS f)=-\im A_{\ComplexS}(f)$
(resp.~$A_{\ComplexS}^{*}(\ComplexS f)=\im A_{\ComplexS}^{*}(f)$).%
{} We see
\[
\{A_{\ComplexS}(f),A_{\ComplexS}(g)\}=0,\qquad\{A_{\ComplexS}(f),A_{\ComplexS}(g)^{*}\}=\langle f|g\rangle_{\ComplexS}\bbOne.
\]
Let
\[
N_{\ComplexS,k}:=A_{\ComplexS}^{*}(e_{k})A_{\ComplexS}(e_{k})=\frac{1}{2}\left[\bbOne+\im\phi(e_{k})\phi(\ComplexS e_{k})\right]=\frac{1}{2}\left[\bbOne+\im\phi_{k}\phi_{k+n}\right],\qquad k=1,...,n,
\]
then we find
\[
N_{\ComplexS}=N_{\ComplexS,1}+\cdots+N_{\ComplexS,n}=\frac{1}{2}\left[n\bbOne+\im\sum_{k=1}^{n}\phi(e_{k})\phi(\ComplexS e_{k})\right]=\frac{1}{2}\left[n\bbOne+\im\sum_{k=1}^{n}\phi_{k}\phi_{k+n}\right],
\]
and hence the spectrum of $N_{\ComplexS}$ is $\{0,1,...,n\}$, and
(\ref{eq:spec(ipi(J))}). %

Any $X\in\spin(V)$ is expressed as follows.
\[
X=\sum_{1\le i<j\le2n}r_{ij}\phi_{i}\phi_{j},\qquad r_{ij}\in\R
\]

\begin{lem}
\label{lem:=00005BX,E=00005D=00003D0}Let $X=\sum_{1\le i<j\le2n}r_{ij}\phi_{i}\phi_{j}$.
Then, $[X,\Vac_{\ComplexS}]=0$ if and only if
\begin{equation}
r_{ij}=r_{i+n,j+n},\ r_{i,j+n}=r_{j,i+n},\qquad1\le i<j\le n.\label{eq:rij=00003Dri+nj+n}
\end{equation}
\end{lem}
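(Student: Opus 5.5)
The plan is to rewrite $X$ in terms of the creation and annihilation operators w.r.t.~$\ComplexS$, and then use the fact that $\Vac_{\ComplexS}$ is the vacuum projection. I choose the orthonormal basis so that $\ComplexS e_{k}=e_{k+n}$ for $k=1,...,n$, which is the convention already used in the formula for $N_{\ComplexS,k}$. Put $a_{k}:=A_{\ComplexS}(e_{k})=\frac{1}{2}(\phi_{k}+\im\phi_{k+n})$. Then
\[
\phi_{k}=a_{k}+a_{k}^{*},\qquad\phi_{k+n}=-\im(a_{k}-a_{k}^{*}),
\]
and $\{a_{k},a_{l}^{*}\}=\delta_{kl}$, $\{a_{k},a_{l}\}=0$. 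By (\ref{eq:EJ0}), $\Vac_{\ComplexS}=\prod_{k}(1-N_{\ComplexS,k})$. Hence $a_{k}\Vac_{\ComplexS}=0$ and $\Vac_{\ComplexS}a_{k}^{*}=0$. Concretely, I will work in the Fock representation $\CAR(V)\cong\End(\Lambda\C^{n})$, where $\Vac_{\ComplexS}=|\Omega\rangle\langle\Omega|$. In this representation the vectors $a_{k}^{*}a_{l}^{*}\Omega$ ($k<l$) are linearly independent and orthogonal to $\Omega$.

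First I substitute these expressions into $X=\sum_{i<j}r_{ij}\phi_{i}\phi_{j}$ and normal order. This gives
\[
X=c\bbOne+\sum_{k,l}\alpha_{kl}a_{k}^{*}a_{l}+\sum_{k<l}\beta_{kl}a_{k}a_{l}+\sum_{k<l}\gamma_{kl}a_{k}^{*}a_{l}^{*}.
\]
Only the terms $r_{k,k+n}\phi_{k}\phi_{k+n}$ produce the constant $c$; they contribute nothing to $\beta,\gamma$. Next I compute $\beta$ and $\gamma$ explicitly for $k<l$ from the four blocks $r_{kl}$, $r_{k+n,l+n}$, $r_{k,l+n}$, $r_{l,k+n}$. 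For instance, $\phi_{k}\phi_{l}$ contributes $a_{k}^{*}a_{l}^{*}$ with coefficient $1$. The term $\phi_{k+n}\phi_{l+n}$ contributes it with coefficient $-1$. The cross terms $\phi_{k}\phi_{l+n}$ and $\phi_{l}\phi_{k+n}$ contribute it with coefficients $\im$ and $-\im$ respectively, after reordering. I expect
\[
\gamma_{kl}=(r_{kl}-r_{k+n,l+n})+\im(r_{k,l+n}-r_{l,k+n}),
\]
and $\beta_{kl}$ to be the same expression with the opposite sign of $\im$, up to an overall sign. Since the $r$'s are real, $\gamma_{kl}=0$ iff $\beta_{kl}=0$ iff (\ref{eq:rij=00003Drij=00003Dri+nj+n}) holds for the pair $(k,l)$.

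Then I prove the equivalence with $[X,\Vac_{\ComplexS}]=0$. Using $a\Vac_{\ComplexS}=0$ and $\Vac_{\ComplexS}a^{*}=0$, I get
\[
X\Vac_{\ComplexS}=c\Vac_{\ComplexS}+\sum\gamma_{kl}a_{k}^{*}a_{l}^{*}\Vac_{\ComplexS},\qquad\Vac_{\ComplexS}X=c\Vac_{\ComplexS}+\sum\beta_{kl}\Vac_{\ComplexS}a_{k}a_{l}.
\]
If $\beta=\gamma=0$, both equal $c\Vac_{\ComplexS}$, so they commute. Conversely, suppose $X\Vac_{\ComplexS}=\Vac_{\ComplexS}X$. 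Then $X\Vac_{\ComplexS}=\Vac_{\ComplexS}X\Vac_{\ComplexS}=c\Vac_{\ComplexS}$, because $\Vac_{\ComplexS}a_{k}^{*}a_{l}^{*}\Vac_{\ComplexS}=0$. Hence $\sum\gamma_{kl}a_{k}^{*}a_{l}^{*}\Omega=0$, and by linear independence $\gamma=0$; therefore $\beta=0$ as well.

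The main obstacle is the sign and coefficient bookkeeping in computing $\gamma_{kl}$. One has to check that the real and imaginary parts of $\gamma_{kl}$ give exactly the two conditions of (\ref{eq:rij=00003Dri+nj+n}), and not, for example, $r_{i,j+n}=-r_{j,i+n}$. I would verify this directly on the case $n=2$ before writing the general case. Alternatively, I would check it against the invariant statement that the antisymmetric matrix $(r_{ij})$ commutes with $\ComplexS$, i.e.~lies in $\mathfrak{u}_{\ComplexS}(V)$.
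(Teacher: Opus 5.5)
Your proposal is correct and follows essentially the paper's route. You expand $X$ in the creation and annihilation operators $A_{\ComplexS}(e_k)$ and use $A_{\ComplexS}(e_k)\Vac_{\ComplexS}=0=\Vac_{\ComplexS}A_{\ComplexS}^{*}(e_k)$ to reduce $[X,\Vac_{\ComplexS}]=0$ to the vanishing of the coefficient of $A_{\ComplexS}^{*}(e_k)A_{\ComplexS}^{*}(e_l)$, which you correctly compute as $(r_{kl}-r_{k+n,l+n})+\im(r_{k,l+n}-r_{l,k+n})$, exactly as the paper does. Your step via $X\Vac_{\ComplexS}=\Vac_{\ComplexS}X\Vac_{\ComplexS}=c\Vac_{\ComplexS}$, together with the linear independence of the vectors $a_k^{*}a_l^{*}\Omega$, fills in the ``iff'' that the paper's outline leaves implicit.
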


\begin{proof}
(outline) By a straightforward calculation, we have
\[
[X,\Vac_{\ComplexS}]=\sum_{1\le i<j\le2n}r_{ij}\left[A_{\ComplexS}^{*}(e_{i})A_{\ComplexS}^{*}(e_{j})\Vac_{\ComplexS}+\Vac_{\ComplexS}A_{\ComplexS}(e_{j})A_{\ComplexS}(e_{i})\right].
\]
This implies that
\[
[X,\Vac_{\ComplexS}]=0\quad\text{iff}\quad\sum_{1\le i<j\le2n}r_{ij}A_{\ComplexS}^{*}(e_{i})A_{\ComplexS}^{*}(e_{j})=0.
\]
Moreover we have
\[
\sum_{1\le i<j\le2n}r_{ij}A_{\ComplexS}^{*}(e_{i})A_{\ComplexS}^{*}(e_{j})=\sum_{1\le i<j\le n}\left(r_{ij}-r_{i+n,j+n}+\im r_{i,j+n}-\im r_{j,i+n}\right)A_{\ComplexS}^{*}(e_{i})A_{\ComplexS}^{*}(e_{j}).
\]
Thus (\ref{eq:rij=00003Dri+nj+n}) follows.
\end{proof}
\begin{lem}
\label{lem:=00005BA,J=00005D=00003D0}Let $E_{ij}$ ($1\le i<j\le2n$)
be the standard basis of $\mathfrak{so}(2n)$, that is, $E_{ij}:=F_{ij}-F_{ji}$,
$F_{ij}:=|e_{i})(e_{j}|$. Let
\[
A=\sum_{1\le i<j\le2n}r_{ij}E_{ij}\in\mathfrak{so}(V),\qquad\ComplexS=\sum_{i=1}^{n}E_{i,i+n}.
\]
Then we have%
\begin{equation}
[A,\ComplexS]=0\quad\text{iff}\quad r_{i,j+n}=r_{j,i+n},\ r_{ij}=r_{i+n,j+n}\qquad1\le i<j\le n.\label{eq:=00005BA,J=00005D=00003D0iff}
\end{equation}
\end{lem}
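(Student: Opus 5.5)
The plan is to compute the commutator $[A,\ComplexS]$ directly in the standard basis, using the matrix-unit multiplication rule $F_{ij}F_{kl}=\delta_{jk}F_{il}$. It is convenient to write $A$ in block form relative to the decomposition $V=\Span\{e_{1},\dots,e_{n}\}\oplus\Span\{e_{n+1},\dots,e_{2n}\}$. Define the $n\times n$ real matrices $a,b,d$ by
\[
a_{ij}=A_{ij},\qquad b_{ij}=A_{i,j+n},\qquad d_{ij}=A_{i+n,j+n},\qquad 1\le i,j\le n.
\]
Antisymmetry of $A$ gives $a^{\TT}=-a$ and $d^{\TT}=-d$, and forces the lower-left block to be $-b^{\TT}$. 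Thus
\[
A=\begin{pmatrix}a & b\\ -b^{\TT} & d\end{pmatrix},\qquad \ComplexS=\begin{pmatrix}0 & 1_{n}\\ -1_{n} & 0\end{pmatrix},
\]
where the form of $\ComplexS$ follows from $\ComplexS=\sum_{i}E_{i,i+n}$, i.e.\ $\ComplexS e_{i+n}=e_{i}$ and $\ComplexS e_{i}=-e_{i+n}$.

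The first step is to multiply the blocks. This gives
\[
A\ComplexS=\begin{pmatrix}-b & a\\ -d & -b^{\TT}\end{pmatrix},\qquad \ComplexS A=\begin{pmatrix}-b^{\TT} & d\\ -a & -b\end{pmatrix}.
\]
Hence $[A,\ComplexS]=0$ holds if and only if $a=d$ and $b=b^{\TT}$. The two diagonal-block equations both reduce to $b=b^{\TT}$, and the two off-diagonal-block equations both reduce to $a=d$.

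The second step is to translate these matrix conditions back into the coefficients $r_{ij}$, indexed with $i<j$ as in the statement. For $1\le i<j\le n$ we have $a_{ij}=r_{ij}$ and $d_{ij}=r_{i+n,j+n}$, so $a=d$ is equivalent to $r_{ij}=r_{i+n,j+n}$. Here the diagonal entries of $a$ and $d$ vanish automatically, and the entries with $i>j$ are determined by antisymmetry. Similarly, $b_{ij}=r_{i,j+n}$ holds for all $i,j\le n$, since $i<j+n$ always; so $b=b^{\TT}$ is equivalent to $r_{i,j+n}=r_{j,i+n}$ for $i<j$, the diagonal case being trivial. This is exactly the condition (\ref{eq:=00005BA,J=00005D=00003D0iff}).

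There is no real obstacle in this argument. The only care needed is in bookkeeping: the sign and orientation convention for $\ComplexS$, and the fact that the lower-left block of $A$ is $-b^{\TT}$ rather than an independent block. I would also double-check that this convention agrees with the $\phi_{k}\phi_{k+n}$ pairing used in Lemma \ref{lem:=00005BX,E=00005D=00003D0}, which it does, since $\d\pi_{V}(E_{k,k+n})=\phi_{k}\phi_{k+n}/2$. This agreement is what makes the two lemmas combine to give Proposition \ref{prop:UJ(V)}.
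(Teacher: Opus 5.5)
Your proof is correct and takes the same route as the paper. Both compute $[A,\ComplexS]$ directly and read off when it vanishes: the paper writes the result in the basis $-E_{ij}+E_{i+n,j+n}$, $E_{i,j+n}-E_{j,i+n}$ ($1\le i<j\le n$), and your block form $[A,\ComplexS]=\begin{pmatrix} b^{\TT}-b & a-d\\ a-d & b-b^{\TT}\end{pmatrix}$ reproduces that expansion entry by entry, with the block notation making the $-b^{\TT}$ lower-left block and the index bookkeeping more transparent.
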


\begin{proof}
By a straightforward calculation, we have
\begin{equation}
[A,\ComplexS]=\sum_{1\le i<j\le n}\left(\left(r_{i,j+n}-r_{j,i+n}\right)\left(-E_{ij}+E_{i+n,j+n}\right)+\left(r_{ij}-r_{i+n,j+n}\right)\left(E_{i,j+n}-E_{j,i+n}\right)\right).\label{eq:=00005BA,J=00005D}
\end{equation}
Hence (\ref{eq:=00005BA,J=00005D=00003D0iff}) follows.
\end{proof}
For each $\ComplexS\in\OCpl(V)$, the tangent space $T_{\ComplexS}\OCpl(V)$
at $\ComplexS$ is naturally identified with the subspace $[\ComplexS,\mathfrak{so}(2n,\R)]$
of $\mathfrak{so}(2n,\R)$. If we choose the orthonormal basis $\{e_{1},...,e_{2n}\}$
where $\ComplexS=\sum_{i=1}^{n}E_{i,i+n}$, we find from (\ref{eq:=00005BA,J=00005D})
that
\[
[\ComplexS,\mathfrak{so}(2n)]=\Span_{\R}\left\{ X_{ij},Y_{ij}|\ 1\le i<j\le n\right\} ,\qquad\SYM{X_{ij}}{Xij}:=-E_{ij}+E_{i+n,j+n},\ \SYM{Y_{ij}}{Yij}:=E_{i,j+n}-E_{j,i+n}.
\]
The equation $\ComplexS^{2}=-1$ implies $\ad(\ComplexS)^{3}=-4\ad(\ComplexS)$,
that is,
\begin{equation}
\ad(\ComplexS)^{2}([\ComplexS,X])=[\ComplexS,[\ComplexS,[\ComplexS,X]]]=-4[\ComplexS,X],\qquad\text{for all }X\in\mathfrak{so}(2n,\R).\label{eq:=00005BJ=00005BJ=00005BJ}
\end{equation}
Let
\[
\SYM{\scJ}{Jsc}:=\frac{1}{2}\ad(\ComplexS)\Big|_{[\ComplexS,\mathfrak{so}(2n)]},
\]
then (\ref{eq:=00005BJ=00005BJ=00005BJ}) means $\scJ^{2}=-1$, and
hence $\scJ$ determines the canonical complex structure on $T_{\ComplexS}\OCpl(V)=[\ComplexS,\mathfrak{so}(2n)]$.
Explicitly, we find
\[
\scJ X_{ij}=Y_{ij},\qquad\scJ Y_{ij}=-X_{ij},
\]
from
\[
[\ComplexS,E_{ij}]=-[\ComplexS,E_{i+n,j+n}]=-Y_{ij},\qquad[\ComplexS,E_{i,j+n}]=-X_{ij},\qquad i,j=1,...,n.
\]
{}

\section{Empirical laws in the $n=2$ case}

\label{subsec:4-dim}

The explicit $2^{n}\times2^{n}$ matrix expression of $\Vac_{\ComplexS}$
is almost impossible for large $n$. However, in the simplest nontrivial
case where $\dim V=2n=4$, the explicit matrix expression is not difficult.

Let $V=\R^{4}$. We often use the following two ``standard'' complex
structures on $\R^{4}$:

\[
\ComplexS_{1}:=\begin{pmatrix}0 & -1 & 0 & 0\\
1 & 0 & 0 & 0\\
0 & 0 & 0 & -1\\
0 & 0 & 1 & 0
\end{pmatrix},\qquad\ComplexS_{2}:=\begin{pmatrix}0 & 0 & -1 & 0\\
0 & 0 & 0 & -1\\
1 & 0 & 0 & 0\\
0 & 1 & 0 & 0
\end{pmatrix}
\]
These belong to different connected components of $\OCpl(\R^{4})$.
Generally, any element of two components of $\OCpl(\R^{4})$ is expressed
respectively by%
\[
\ComplexS_{1,x}:=\begin{pmatrix}0 & x_{3} & -x_{2} & x_{1}\\
-x_{3} & 0 & x_{1} & x_{2}\\
x_{2} & -x_{1} & 0 & x_{3}\\
-x_{1} & -x_{2} & -x_{3} & 0
\end{pmatrix},\qquad\ComplexS_{2,x}:=\begin{pmatrix}0 & x_{3} & -x_{2} & -x_{1}\\
-x_{3} & 0 & x_{1} & -x_{2}\\
x_{2} & -x_{1} & 0 & -x_{3}\\
x_{1} & x_{2} & x_{3} & 0
\end{pmatrix},\qquad x\in S^{2},
\]
where $S^{2}:=\{x=(x_{1},x_{2},x_{3})\in\R^{3}|\sum_{k=1}^{3}x_{k}^{2}=1\}$.
Notice that $[\ComplexS_{1,x},\ComplexS_{2,y}]=0$ and $\Tr\ComplexS_{1,x}\ComplexS_{2,y}=0$
for all $x,y\in S^{2}$. Let $\sigma_{k}$, $k=1,2,3$ be the standard
Pauli matrices, with $\sigma_{0}=\begin{pmatrix}1 & 0\\
0 & 1
\end{pmatrix}$. Let $e_{i}$, $i=1,...,4$ be the standard basis of $\R^{4}$, and
consider the representation of $\phi_{i}=\phi(e_{i})$ given by $\phi_{1}=\sigma_{1}\otimes\sigma_{0},$
$\phi_{2}=\sigma_{2}\otimes\sigma_{0},\ \phi_{3}=\sigma_{3}\otimes\sigma_{1},\ \phi_{4}=\sigma_{3}\otimes\sigma_{2},$
more explicitly
\[
\phi_{1}=\begin{pmatrix}0 & \sigma_{0}\\
\sigma_{0} & 0
\end{pmatrix},\quad\phi_{2}=\begin{pmatrix}0 & -\im\sigma_{0}\\
\im\sigma_{0} & 0
\end{pmatrix},\quad\phi_{3}=\begin{pmatrix}\sigma_{1} & 0\\
0 & -\sigma_{1}
\end{pmatrix},\quad\phi_{4}=\begin{pmatrix}\sigma_{2} & 0\\
0 & -\sigma_{2}
\end{pmatrix}.
\]
We see
\[
\phi_{5}:=\phi_{1}\phi_{2}\phi_{3}\phi_{4}={\rm diag}(-1,1,1,-1)
\]
commutes with $\phi_{kl}:=\phi(e_{k})\phi(e_{l})$ for all $1\le k<l\le4$;
The canonical representation of $\CAR(\R^{4})_{{\rm even}}$ on $\C^{4}$
is decomposed to the irreducible ones on the eigenspaces of $\phi_{5}$.
By (\ref{eq:pi(X)=00003D}) we have%
\[
\d\pi(\ComplexS_{1,x})=\begin{pmatrix}\im x_{3} & 0 & 0 & x_{2}+\im x_{1}\\
0 & 0 & 0 & 0\\
0 & 0 & 0 & 0\\
-x_{2}+\im x_{1} & 0 & 0 & -\im x_{3}
\end{pmatrix},\qquad\d\pi(\ComplexS_{2,x})=\begin{pmatrix}0 & 0 & 0 & 0\\
0 & \im x_{3} & x_{2}+\im x_{1} & 0\\
0 & -x_{2}+\im x_{1} & -\im x_{3} & 0\\
0 & 0 & 0 & 0
\end{pmatrix},
\]
and by (\ref{eq:EJ0}),%
\[
\Vac_{\ComplexS_{1,x}}=\frac{1}{2}\begin{pmatrix}x_{3}+1 & 0 & 0 & x_{1}-\im x_{2}\\
0 & 0 & 0 & 0\\
0 & 0 & 0 & 0\\
x_{1}+\im x_{2} & 0 & 0 & -x_{3}+1
\end{pmatrix},\qquad\Vac_{\ComplexS_{2,x}}=\frac{1}{2}\begin{pmatrix}0 & 0 & 0 & 0\\
0 & x_{3}+1 & x_{1}-\im x_{2} & 0\\
0 & x_{1}+\im x_{2} & -x_{3}+1 & 0\\
0 & 0 & 0 & 0
\end{pmatrix}.
\]

Define the prior conditional probability $\Prob_{i}(y|x)$ ($i=1,2$)
as follows.
\[
\Prob_{i}(y|x)\equiv\Prob(\Vac_{\ComplexS_{i,y}}|\Vac_{\ComplexS_{i,x}}):=\frac{\Tr\Vac_{\ComplexS_{i,x}}\Vac_{\ComplexS_{i,y}}}{\Tr\Vac_{\ComplexS_{i,x}}}=\Tr\Vac_{\ComplexS_{i,x}}\Vac_{\ComplexS_{i,y}},\qquad x,y\in S^{2}.
\]
From the above explicit matrix expressions, we have%
\[
\Prob_{i}(y|x)=\frac{1}{2}\left((x|y)+1\right)=-\frac{1}{8}\Tr\ComplexS_{i,x}\ComplexS_{i,y}+\frac{1}{2},\qquad x,y\in S^{2},\ i=1,2,
\]
which may be said to be an empirical law.

Consider $\Vac_{\ComplexS_{i,x}}$ for $i=1$ only, so that we can
redefine $\Vac_{\ComplexS_{1,x}}$ to be the $2\times2$ matrix%
\[
\Vac_{x}\equiv\Vac_{\ComplexS_{1,x}}:=\frac{1}{2}\begin{pmatrix}1+x_{3} & x_{1}-\im x_{2}\\
x_{1}+\im x_{2} & 1-x_{3}
\end{pmatrix},
\]
acting on the eigenspace of $\phi_{5}$ w.r.t.~the eigenvalue $-1$.
This is nothing but the spin $\frac{1}{2}$ state with spin direction
$x\in S^{2}$:
\[
\Vac_{x}=\frac{1}{2}\left(x_{1}\sigma_{1}+x_{2}\sigma_{2}+x_{3}\sigma_{3}+1\right).
\]
Thus the statistics of the set of projections $\{\Vac_{\ComplexS_{1,x}}|x\in S^{2}\}$
is well known.

Let $\Manifold:=S^{2}\cong\OCpl_{1}(\R^{4})$. Let $\mu$ be a invariant
measure on $\Manifold$. For $S\in\Borel(\Manifold)$, let $\Vac(S)\equiv\Vac_{S}:=\int_{S}\Vac_{x}\,\d\mu(x)$.
Here we assume that $\mu$ is normalized so that $\Vac(\Manifold)=\bbOne$.

Let $\vec{X}=(x_{0},...,x_{N+1}),\vec{Y}=(y_{0},...,y_{N+1})\in\Manifold^{N+2}$,
but assume $y_{0}=x_{0}$, $y_{N+1}=x_{N+1}$. Let
\[
\SYM{\tau(x_{1},x_{2})}{tau()}:=\Prob(x_{2}|x_{1})=\Tr\Vac_{x_{1}}\Vac_{x_{2}}=\frac{(x_{1}|x_{2})+1}{2}=1-\frac{1}{4}\|x_{1}-x_{2}\|^{2}.
\]
Then the 1-path Sorkin density function $\rho_{1}$ defined in Proposition
\ref{prop:SorkinDensity} is given simply by
\begin{equation}
\rho_{1}(\vec{X})=\Tr A(\vec{X})A(\vec{X})^{*}=\prod_{i=0}^{N}\tau(x_{i},x_{i+1}),\qquad A(\vec{X}):=\Vac_{x_{0}}\cdots\Vac_{x_{N+1}}..\label{eq:rho1-4D}
\end{equation}
The 2-path Sorkin density function $\rho_{2}$ is more nontrivial.
For $\vec{x}=(x_{1},...,x_{N})\in\Manifold^{N}$, it turns out that
\begin{equation}
\Tr A(\vec{x})=R(\vec{x})e^{\im\area(\vec{x})/2},\qquad R(\vec{x}):=\prod_{i=1}^{N-1}\sqrt{\tau(x_{i},x_{i+1})},\label{eq:TrA(x)=00003D}
\end{equation}
where, roughly speaking, $\area(\vec{x})$ is the signed area of the
sphere polygon, enclosed by the geodesic lines between $x_{i},x_{i+1}$
($i=1,...,N$) with $x_{N+1}:=x_{1}$. Here the area is measured by
the usual measure such that the area of whole sphere $S^{2}$ is $4\pi$,
not by $\mu$. Note that the area $\area(\vec{x})$ of a sphere polygon
has an indeterminacy; there are two regions enclosed by a simple closed
curve on a 2-sphere. However, if the signed area of one of two regions
enclosed by an oriented closed curve is $a\in\R$, the signed area
of the other region is $a-4\pi$. Hence the signed area becomes well-defined
when we consider its value to be not in $\R$ but in $\R/4\pi$. In
this case, $e^{\im\area(\vec{x})/2}\in\C$ is also well-defined.

When $N=3$, (\ref{eq:TrA(x)=00003D}) is shown by elementary calculations.%
{} Let $\vec{x}=(x_{1},x_{2},x_{3})$, and $\area(\vec{x})$ denote
the area of the spherical triangle $x_{1}x_{2}x_{3}$. A theorem in
spherical geometry says that
\[
\cos\frac{\area(\vec{x})}{2}=\frac{1+(x_{1}|x_{2})+(x_{2}|x_{3})+(x_{3}|x_{1})}{\sqrt{2\left(1+(x_{1}|x_{2})\right)\left(1+(x_{2}|x_{3})\right)\left(1+(x_{3}|x_{1})\right)}},
\]
if $(x_{i}|x_{j})\neq-1$, $i,j=1,2,3$. On the other hand, we can
check
\[
\Tr A(\vec{x})=\frac{1}{4}\left(1+(x_{1}|x_{2})+(x_{2}|x_{3})+(x_{3}|x_{1})+\im\cdot\det\left[x_{1}x_{2}x_{3}\right]\right),
\]
\[
\left|\Tr A(\vec{x})\right|^{2}=\frac{1}{8}\left(1+(x_{1}|x_{2})\right)\left(1+(x_{2}|x_{3})\right)\left(1+(x_{3}|x_{1})\right).
\]
Thus we have%
{} $\Re\,\Tr A(\vec{x})=\left|\Tr A(\vec{x})\right|\Re e^{\im\area(\vec{x})/2}$,
and hence
\[
\Tr A(\vec{x})=\left|\Tr A(\vec{x})\right|e^{\pm\im\area(\vec{x})/2}.
\]
Here, the indefinite sign $\pm$ can be fixed to $+$ if we suitably
determine the orientation of the triangle.

However, the above calculation is specific to the 2-sphere $\Manifold\cong S^{2}$,
and so it will not be generalized for $\Manifold\cong\OCpl_{1}(V)$
with a higher dimension $\dim V=2n\ge6$. Even if we confine ourselves
to the 2-sphere, for general $N\in\N$, the definition of $\area(\vec{x})$
is not clear yet, since the closed piecewise geodesic curve $(x_{1},...,x_{N},x_{1})$
may not be a \emph{simple} closed curve. Actually, it is easier to
give a rigorous definition of $U(\vec{x}):=e^{\im\area(\vec{x})/2}$
than that of $\area(\vec{x})$ itself. A natural definition of $U(\vec{x})$
is as the holonomy of a closed curve on $\Manifold$, with respect
to a $U(1)$-principal bundle over $\Manifold$ (or the associated
complex line bundle) with a connection. Such line bundle is called
the \termi{prequantization bundle} of $(\Manifold,\omega)$, where
$\omega$ is a symplectic 2-form on $\Manifold$ (If $\Manifold=S^{2}$,
$\omega$ is the usual volume 2-form on $S^{2}$ up to scalar factor).
If $\Manifold$ is simply connected, the prequantization bundle of
$\Manifold$ is uniquely determined, if it exists.\cite{Woo92}%
{} However in our case, the system is ``quantized'' from the outset;
Each $\Vac_{x}$ is already represented as an operator on the Hilbert
space $\cK:=\C^{2}$, which is interpreted as the quantum state space.
Identify $\Manifold$ with $\{\Vac_{x}|x\in S^{2}\}$, then the fiber
of the $U(1)$-principal bundle at $\Vac_{x}\in\Manifold$ is given
by $\{v\in\cK|\Vac_{x}v=v,\ \|v\|=1\}$.

We have not yet finished the definition of $U(\vec{x})$ (or $\area(\vec{x})$),
because there are at least two geodesic lines between two points $x,y\in S^{2}$;
Exactly speaking, the term ``piecewise geodesic curve $(x_{1},...,x_{N},x_{1})$''
is not uniquely defined. %

Let us temporarily simply assume that $e^{\im\area(\vec{Z})/2}$ is
well-defined. Let
\[
\vec{Z}\equiv(z_{1},...,z_{2N+2}):=(x_{0},...,x_{N+1},y_{N},...,y_{1}),
\]
then we have
\[
\Tr\left(A(\vec{X})A(\vec{Y})^{*}\right)=\Tr A(\vec{Z})=R(\vec{Z})e^{\im\area(\vec{Z})/2},
\]
and hence
\[
\rho_{2}(\vec{X},\vec{Y})-\rho_{1}(\vec{X})-\rho_{1}(\vec{Y})=2\Re\,\Tr\left(A(\vec{X})A(\vec{Y})^{*}\right)=2R(\vec{Z})\Re e^{\im\area(\vec{Z})/2}.
\]
Therefore the 2-path Sorkin density is given by
\begin{equation}
\rho_{2}(\vec{X},\vec{Y})=2R(\vec{Z})\cos(\area(\vec{Z})/2)+\rho_{1}(\vec{X})+\rho_{1}(\vec{Y}).\label{eq:rho2-4D}
\end{equation}

Eqs.~(\ref{eq:rho1-4D}) and (\ref{eq:rho2-4D}) themselves can be
seen as empirical laws, and furthermore, other empirical law are derived
from (\ref{eq:rho1-4D}) and (\ref{eq:rho2-4D}) by Proposition \ref{prop:SorkinDensity}.

Next, consider a piecewise smooth curve $\curve:[0,1]\to\Manifold$.
Let
\[
\SYM{A(\curve,L)}{A(ga,L)}:=\Vac_{\curve(0/L)}\Vac_{\curve(1/L)}\cdots\Vac_{\curve(L/L)},\qquad L\in\N,
\]
and
\begin{equation}
\SYM{A(\curve)}{A(gam)}:=\lim_{L\to\infty}A(\curve,L).\label{eq:def:A(C)}
\end{equation}
{} %
We see $A(\curve)A(\curve)^{*}=\Vac_{\curve(0)}$ and $A(\curve)^{*}A(\curve)=\Vac_{\curve(1)}$.
If $\curve_{1},\curve_{2}:[0,1]\to\Manifold$ are two piecewise smooth
curves such that $\curve_{1}(0)=\curve_{2}(0)$, $\curve_{1}(1)=\curve_{2}(1)$.
Then $A(\curve_{1})=e^{\im t}A(\curve_{2})$ for some $t\in\R$. %
{} %
When $\Vac_{\curve(0)}\Vac_{\curve(1)}\neq0$ (i.e., $\curve(0)+\curve(1)\neq0$),
let
\[
T_{\curve}:=\frac{\Vac_{\curve(0)}\Vac_{\curve(1)}}{\sqrt{\Tr\Vac_{\curve(0)}\Vac_{\curve(1)}}},
\]
then we see $A(\curve,L)=zT_{\curve}$ for some $z\in\C,\ |z|\le1$,
and $A(\curve)=zT_{\curve}$ for some $z\in U(1)$, i.e., $z\in\C,\ |z|=1.$

Although $A(\curve)$ might correspond to an ideal, non-realistic
measurement process, it is approximated by more realistic measurement
processes $A(\curve,L)$, $L\in\N$.

Let $\curve,\curve_{1}$ and $\curve_{2}$ be piecewise smooth curves
on $\Manifold$, satisfying $\curve_{1}(0)=\curve_{2}(0)$, $\curve_{1}(1)=\curve_{2}(1)$.
As in Proposition \ref{prop:SorkinDensity}, the 1-path and 2-path
Sorkin density functions $\rho_{1}(\curve)$ and $\rho_{2}(\curve_{1},\curve_{2})$
are naturally defined by
\begin{align*}
\SYM{\rho_{1}(\curve)}{rho1} & :=\Tr A(\curve)A(\curve)^{*},\\
\SYM{\rho_{2}(\curve_{1},\curve_{2})}{rho2} & :=\Tr\left(A(\curve_{1})+A(\curve_{2})\right)\left(A(\curve_{1})+A(\curve_{2})\right)^{*},
\end{align*}
but indeed we have simply $\rho_{1}(\curve)\equiv1$, which contains
very little information as an empirical law.

When $\curve(0)=\curve(1)$, define $\SYM{\hol(\curve)}{hol(C)}\in U(1)$
to be the holonomy of $\curve$, so that ``the signed area enclosed
by $\curve$'' $\area(\curve)\in\R/4\pi$ is determined by $\hol(\curve)=e^{\im\area(\curve)/2}$.
Let $\curve^{-1}(t):=\curve(1-t)$, and $\curve_{1}\bullet\curve_{2}^{-1}$
denote the concatenation of $\curve_{1}$ and $\curve_{2}^{-1}$,
which is a closed curve. We find that
\[
\Tr\left[A(\curve_{1})A(\curve_{2})^{*}\right]=\hol(\curve_{1}\bullet\curve_{2}^{-1}),
\]
and hence
\begin{equation}
\rho_{2}(\curve_{1},\curve_{2})=2+2\Re\,\hol(\curve_{1}\bullet\curve_{2}^{-1}).\label{eq:rho2-4D-hol-4D}
\end{equation}
Recall that the r.h.s.~of the ``empirical law'' (\ref{eq:rho2-4D})
is not rigourously defined yet. On the other hand, there is no ambiguity
in (\ref{eq:rho2-4D-hol-4D}), and moreover we may expect that (\ref{eq:rho2-4D-hol-4D})
can be generalized for higher dimensions $\dim V=2n\ge6$. However,
in Proposition \ref{prop:SorkinDensity}, $\Prob(S_{1},...,S_{N})$
is expressed by $\rho_{1}(\vec{X})$ and $\rho_{2}(\vec{X},\vec{Y})$
for fixed finite $N$.
\begin{problem}
\label{prob:rho2(ga,ga)}Can we express $\Prob(S_{1},...,S_{N})$
by $\rho_{2}(\curve_{1},\curve_{2})$ %
in (\ref{eq:rho2-4D-hol-4D}), for any $N\in\N$ and $S_{1},...,S_{N}\in\Borel(\Manifold)$?
\end{problem}

If the answer to the above problem is positive, one might be permitted
to say that (\ref{eq:rho2-4D-hol-4D}) is the \emph{most fundamental}
empirical law, because we can derive the other empirical laws concerning
$\Prob(S_{1},...,S_{N})$ from the single law (\ref{eq:rho2-4D-hol-4D})
(with the almost trivial law $\rho_{1}(\curve)\equiv1$). Actually
the answer will be positive, but the rigorous formulation of such
expression will be rather difficult, as follows.

A very simple rough idea is that $\Prob(S_{1},...,S_{N})$ is expressed
by a ``path integral'': Consider the following formal path integral
expression, analogous to Proposition \ref{prop:SorkinDensity}:
\begin{align}
\Prob(S_{1},...,S_{N}) & =\frac{1}{2}\int_{\Paths(\vec{S})}\cD\curve_{1}\int_{\Paths(\vec{S})}\cD\curve_{2}\left(\rho_{2}(\curve_{1},\curve_{2})-\rho_{1}(\curve_{1})-\rho_{1}(\curve_{2})\right)\\
 & =\int_{\Paths(\vec{S})}\cD\curve_{1}\int_{\Paths(\vec{S})}\cD\curve_{2}\ \Re\,\hol(\curve_{1}\bullet\curve_{2}^{-1}),\label{eq:P(SS)=00003DpathInt}
\end{align}
where we assume $\curve_{1}(0)=\curve_{2}(0)$, $\curve_{1}(1)=\curve_{2}(1)$,
and
\[
\Paths(\vec{S}):=\{\curve\in C([0,1],\Manifold)|\,\curve(j\epsilon)\in S_{j},\ j=1,...,N\},\qquad\epsilon:=\frac{1}{N+1}.
\]
Generally, rigorous justifications of path integrals used in quantum
physics are difficult. However it is well-known that some rigorous
formulations of path integrals are possible in terms of \emph{Brownian
motions}, mainly with the Feynman\textendash Kac formula and its generalizations.
Recall that we can define the Brownian motion on a manifold $\Manifold$
only if $\Manifold$ is furnished with a Riemannian metric. Since
a classical-mechanical phase space is represented as a symplectic
manifold, we cannot define the Brownian motion on a phase space in
general. However sometimes a phase space has a canonical K{\"a}hler
structure, which contains a Riemannian structure. (Recall that $\OCpl_{1}(V)\cong SO(2n)/U(n)$
is a K{\"a}hler manifold.)

For the rigorous considerations for the path integrals on phase spaces,
see Daubechies and Klauder \cite{DK85}, Charles \cite{Cha1999},
and Yamashita \cite{Yam11,Yam18,Yam22a,Yam22b}. See G{\"u}neysu
\cite{Gun10,Gun17} for the generalized Feynman\textendash Kac formula
on a vector bundle over a Riemannian manifold.

We remark that the above path integral quantization on the K{\"a}hler
manifold $\Manifold$ is intended to give the same physical/empirical
result as the geometric quantization with the complex polarization
on $\Manifold$, which is already well-established (see \cite{Woo92,Char2016a,LeF2018}
and references therein). Thus alternatively we could view the path
integral expression (\ref{eq:P(SS)=00003DpathInt}) as a formal calculus
of geometric quantization, without any measure-theoretical ground
such as the theory of Brownian motions.

\section{Empirical laws in general dimension}

\label{sec:Empirical-laws-ndim}

If the word ``express'' in Problem \ref{prob:rho2(ga,ga)} is understood
to include some sort of path integral expression, justified by the
generalized Feynman\textendash Kac formula, the answer to the problem
will be positive. Furthermore, the positive answer is expected to
be generalized to general dimension $\dim V=2n$.

Here we can avoid to use the tools from the theory of geometric quantization,
such as prequantization bundles \cite{Woo92}, since the system is
quantized from the outset; This means that the CAR algebra $\CAR(V)\cong\Mat(2^{n},\C)$
is canonically represented on the Hilbert space $\cH\cong\C^{2^{n}}$,
uniquely up to unitary equivalence. Let $\SYM{\bP(\cH)}{P(H)}$ be
the set of projections of rank 1 on $\cH$. Then $\bP(\cH)$ is identified
with the projective space $\cH/\C^{\times}\cong\C^{2^{n}}/\C^{\times}=P^{2^{n}-1}\C$,
which is a compact K{\"a}hler manifold. The tangent space $T_{P}\bP(\cH)$
of $\bP(\cH)$ at $P\in\bP(\cH)$ is canonically identified with the
linear subspace $\SYM{\cT_{P}}{TP}:=\{|u\rangle\langle v|:v\in\ran(P),\ u\in\ker(P)\}$
of $\SYM{\Bdd(\cH)}{Bd(H)}$ (the (bounded) operators on $\cH$),
which is given an inner product $\langle X|Y\rangle=\Tr X^{*}Y$,
$X,Y\in\cT_{P}$, together with the Riemannian metric (real inner
product) $(\cdot|\cdot):=\Re\langle\cdot|\cdot\rangle$ and the symplectic
form $\omega(\cdot,\cdot):=\Im\langle\cdot|\cdot\rangle$. (Precisely,
if $P,P'\in\bP(\cH)$ and $P\neq P'$, the intersection of two tangent
spaces $T_{P}\bP(\cH)\cap T_{P'}\bP(\cH)$ should be empty, whereas
$\cT_{P}\cap\cT_{P'}=\{0\}$. Thus the above $|u\rangle\langle v|$
should be replaced by $(P,|u\rangle\langle v|)$.)

Let $\SYM{\bbS(\cH)}{S(H)}:=\{v\in\cH:\|v\|=1\}$. Let $\curve:[0,1]\to\bP(\cH)$
(resp.~$c:[0,1]\to\bbS(\cH)$) be a piecewise smooth curve on $\bP(\cH)$
(resp.~$\bbS(\cH)$).%
{} The curve $c$ is called a horizontal lift of $\curve$ (or a parallel
transport along $\curve$) if $c(t)\in\ran(\curve(t))$ and $\langle c(t)|\dot{c}(t)\rangle=0$
for $t\in[0,1]$. If $\curve$ is closed, i.e., $\curve(0)=\curve(1)$,
and $c$ is a horizontal lift of $\curve$, $\SYM{\hol(\curve)}{hol}:=\langle c(0)|c(1)\rangle\in U(1)$
is called the holonomy along $\curve$.

For simplicity, let $\Manifold:=\{\Vac_{\ComplexS}|\ComplexS\in\OCpl_{1}(V)\}$,
instead of $\Manifold:=\OCpl_{1}(V)$. Then $\Manifold$ becomes a
K{\"a}hler submanifold of $\bP(\cH)$. For a piecewise smooth closed
curve $\curve:[0,1]\to\Manifold$, $\hol(\curve)$ is defined to be
the holonomy along the closed curve on $\bP(\cH)$.

Let $\mu$ be a $SO(2n)$-invariant measure on $\OCpl_{1}(V)$ (and
$\Manifold$), normalized so that
\[
\SYM{\Vac_{\Manifold}}{EM}:=\int_{\OCpl_{1}(V)}\Vac_{\ComplexS}\,\d\mu(\ComplexS)\,\left(=\int_{\Manifold}P\,\d\mu(P)\right)
\]
is a projection. Let $\cK:=\ran(\Vac_{\Manifold})$.

Let $\curve,\curve_{1},\curve_{2}:[0,1]\to\Manifold$ be piecewise
smooth curves on $\Manifold$, such that $\curve_{1}(0)=\curve_{2}(0)$,
$\curve_{1}(1)=\curve_{2}(1)$. Then the empirical laws will be derived
from the fundamental ones, i.e., from the Sorkin density functions
$\rho_{1}(\curve)\equiv1$ and
\begin{equation}
\rho_{2}(\curve_{1},\curve_{2})=2+2\Re\,\hol(\curve_{1}\bullet\curve_{2}^{-1}),\label{eq:rho2-hol}
\end{equation}
similarly to (\ref{eq:rho2-4D-hol-4D}) in the $\dim V=4$ case. More
explicitly, we will have the path integral representation of the prior
probability
\begin{align}
\Prob(S_{1},...,S_{N}) & =\int_{\Paths(\vec{S})}\cD\curve_{1}\int_{\Paths(\vec{S})}\cD\curve_{2}\ \Re\,\hol(\curve_{1}\bullet\curve_{2}^{-1}),\qquad S_{i}\in\Borel(\Manifold),\ i=1,...,N,\label{eq:pathInt-hol}
\end{align}
similarly to (\ref{eq:P(SS)=00003DpathInt}). However, we shall not
go into further detail regarding the rigorous foundation of this path
integral in this paper. We refer again to \cite{DK85,Cha1999,Yam11,Yam18,Yam22a,Yam22b}
for further information.

\subsection{Empirical law for the Majorana field}

Let $V:=\Gamma_{\cpt}^{\infty}(D_{\re}M)$ given the real pre-inner
product%
{} $(\cdot|\cdot)_{m}$, and consider $\cA_{{\rm Majorana}}=\CAR(V)$.

From a purely mathematical viewpoint, the observables $X_{f_{1},f_{2}}:=\im\Psi(f_{1})\Psi(f_{2})$
($(f_{1},f_{2})_{m}\in\ON_{2}(V)$) (recall (\ref{eq:def:Xuv})) will
be the most ``basic'' ones, since the subalgebra of observables
$\cA_{{\rm Majorana}}^{{\rm even}}\subset\cA_{{\rm Majorana}}$ is
generated by $X_{f_{1},f_{2}}$'s. Note that $\ON_{2}(V)$ was defined
by (\ref{eq:def:ON2}) when $(\cdot|\cdot)$ is an inner product,
but it is defined also when $(\cdot|\cdot)$ is a pre-inner product.
However, not all of $X_{f_{1},f_{2}}$'s are very ``familiar'' to
us, in the sense that we mentioned in Section \ref{sec:Introduction}.
For example, assume that $\supp(f_{1})$ and $\supp(f_{2})$ are spacelike
separated, say, that they are 1000\,km apart. Then it is not easy
for us to conceive the measurement procedure for $X_{f_{1},f_{2}}$.
(Recall that the field operator $\Psi(f_{i})$ is not observable for
each $i=1,2$, and hence we cannot reduce the measurement of $X_{f_{1},f_{2}}$
to each measurement of $\Psi(f_{i})$.)

By definition, an empirical law should be such that we can verify/falsify
it experimentally. Hence we hope that we can easily conceive the concrete
measurement procedure of it. However, as to the Majorana field (and
more general fermion fields), this hope does not seem to be fulfilled
immediately. Although it will be an important task to construct the
concrete/realistic measurement procedures, we will be satisfied with
a formulation at a more abstract level here; We will refer to $X_{f_{1},f_{2}}$
as an observable, for an arbitrary $(f_{1},f_{2})_{m}\in\ON_{2}(V)$,
hypothesizing the existence of the concrete measurement procedure
of it.

Next, we should mention the ``redundancy'' of the test function
space $V$. Let $\cN_{m}:=\{v\in V;\,(v|v)_{m}=0\}$. If $f_{1},f_{2}\in V$
and $f_{1}-f_{2}\in\cN_{m}$, we have $\Psi(f_{1})=\Psi(f_{2})+\Psi(f_{1}-f_{2})=\Psi(f_{2})$.
There can be two opposite interpretations of this fact:
\begin{enumerate}
\item The space $V=\Gamma_{\cpt}^{\infty}(D_{\re}M)$ is too large in the
sense that it contains a great deal of physically/empirically redundant
information, and instead we should use the quotient space $V/\cN_{m}$
as the ``test function space'' for the Majorana field;
\item The property $f_{1}-f_{2}\in\cN_{m}\then\Psi(f_{1})=\Psi(f_{2})$
is a nontrivial empirical law, which may be verified or falsified
experimentally. Hence this property does not imply that $V$ has some
empirically redundant information. However, this statement is not
precise since $\Psi(f_{i})$'s are not observables. Instead we can
say that $f_{1}-f_{2}\in\cN_{m}\then X_{f_{1},f}=X_{f_{2},f}$ is
an empirical law, for each $f\in V$.
\end{enumerate}
The validity of each interpretation depends on our knowledge on the
system. For example, consider the case where we know that the physical
system is a massive Majorana field, but do not know the value of the
mass $m>0$. Then for any given $m>0$, we do not know whether the
law $f_{1}-f_{2}\in\cN_{m}\then X_{f_{1},f}=X_{f_{2},f}$ holds; It
should be verified or falsified experimentally for each $m$. In this
case, the interpretation (2) is more plausible, and seems more realistic.
So we will adopt (2) here.

Thus now we have an example of an empirical law $f_{1}-f_{2}\in\cN_{m}\then X_{f_{1},f}=X_{f_{2},f}$
for the Majorana field, but this is rather qualitative. The simplest
but more quantitative example was already presented by Eq.~(\ref{eq:def:P(u'v'|uv)-infty}):
\begin{equation}
\Prob_{V}(f_{1},f_{1}'|f_{2},f_{2}')=\frac{1}{2}\left(1+(f_{1}|f_{2})_{m}(f_{1}'|f_{2}')_{m}-(f_{1}|f_{2}')_{m}(f_{1}'|f_{2})_{m}\right),\qquad(f_{1},f_{1}'),(f_{2},f_{2}')\in\ON_{2}(V).\label{eq:P(ff|ff)-law-V}
\end{equation}
Probably, the latter law (\ref{eq:P(ff|ff)-law-V}) ``almost implies''
the former law: For example, set $f_{1}'=f_{2}'=f$ and assume $f_{1}-f_{2}\in\cN_{m}$,
then $(f_{1}|f_{2})_{m}=1$%
, and hence
\[
f_{1}-f_{2}\in\cN_{m}\then\Prob_{V}\left(f_{2},f|f_{1},f\right)=1,\qquad\forall(f_{1},f),(f_{2},f)\in\ON_{2}(V).
\]
which means that the yes-no output of the measurement $P_{f_{1},f}^{+}$
is equal to that of $P_{f_{2},f}^{+}$, almost surely (i.e., in probability
1), in the sequential measurement $P_{f_{1},f}^{+}P_{f_{2},f}^{+}$.

{}

For the putative ``fundamental empirical law'' (\ref{eq:rho2-hol}),
concerning the 2-path Sorkin density function $\rho_{2}(\curve_{1},\curve_{2})$
for two piecewise smooth paths $\curve_{1},\curve_{2}$ such that
$\curve_{1}\bullet\curve_{2}^{-1}$ is closed, we consider an arbitrary
(sufficiently large) finite-dimensional subspace $W$ of $V/\cN_{m}$
with $\dim W=2n$, and%
{} the K{\"a}hler manifold $\tilde{\Manifold}:=\OCpl(W)\cong O(2n)/U(n)$
(or one of its connected components, $\Manifold:=\OCpl_{1}(W)\cong SO(2n)/U(n)$)
which is viewed as a finite-dimensional approximation of the infinite-dimensional
phase space for the classical Majorana field. Although the holonomy
$\hol(\curve)\in U(1)$ along a closed curve $\curve$ on $\Manifold$
can be defined even when $\Manifold$ is infinite-dimensional, the
rigorous justification of the path integral representation (\ref{eq:P(SS)=00003DpathInt})
is very difficult in those cases. Even if (\ref{eq:P(SS)=00003DpathInt})
is viewed as a \emph{formal} calculus of geometric quantization, ignoring
any measure-theoretical ground, the justification for general cases
(including the interacting fields) remains extremely difficult, because
it needs a rigorous formulation of renormalization. %
However, generally speaking, it is reasonable to believe that a finite-dimensional
phase spaces can give a good approximation for a quantum field theory;
Actually, the lattice field theory, which can be seen as a kind of
finite-dimensional approximation, appears to give a right approximation
for QFT.

Let us further explain the empirical meaning of this situation. Let
$\tilde{W}$ be a finite-dimensional subspace of $V$, such that $\dim\tilde{W}=2n$
and $\{f+\cN_{m}|f\in\tilde{W}\}=W$, so that $W\cong\tilde{W}$ (as
inner product spaces) by $f\mapsto f+\cN_{m}$. Clearly, there exists
a bounded region $O\subset M$ such that $\supp(f)\subset O$ for
all $f\in\tilde{W}$. Hence any self-adjoint element of $\CAR(\tilde{W})_{{\rm even}}$,
especially, $\Vac_{\ComplexS}\in\CAR(\tilde{W})_{{\rm even}}$ for
any $\ComplexS\in\OCpl(\tilde{W})$, is a local observable in $O$,
and hence we may consider that there is a measurement procedure for
it. Let $\OCpl_{i}(\tilde{W})$ ($i=1,2$) be the connected components
of $\OCpl(\tilde{W})$. If $\ComplexS_{i}\in\OCpl_{i}(\tilde{W})$
($i=1,2$), both $\Vac_{\ComplexS_{1}}$ and $\Vac_{\ComplexS_{2}}$
have empirical meanings. However, since $\Vac_{\ComplexS_{1}}\Vac_{\ComplexS_{2}}=0$
holds, and the set of projections $\{\Vac_{\ComplexS_{i}}|\ComplexS_{i}\in\OCpl_{i}(\tilde{W})\}$
have the same statistical law for $i=1,2$, we may consider only one
of $\OCpl_{i}(\tilde{W})$'s to calculate %
probabilities, without loss of generality. Thus we consider the connected
manifold $\Manifold:=\OCpl_{1}(\tilde{W})$. For a piecewise smooth
curve $\curve$ on $\Manifold$, the ``continuous measurement operation''
$A(\curve)$ is defined by (\ref{eq:def:A(C)}), as a limit of the
products of the projections $\Vac_{\ComplexS}$ ($\ComplexS\in\Manifold$).
We saw that this gives some empirical meaning to the holonomy $\hol(\curve_{1}\bullet\curve_{2}^{-1})\in U(1)$.%
{} For $S_{i}\in\Borel(\Manifold),\ i=1,...,N$, the probability $\Prob(S_{1},...,S_{N})$
was expressed by (\ref{eq:pathInt-hol}), which is a somewhat formal
expression, but can be justified rigorously.

Although $\Vac_{\ComplexS}$'s are considered to be the \emph{fundamental}
local observables, they seem far from ``familiar'' to us; It seems
even more difficult to conceive the realistic measurement procedure
of $\Vac_{\ComplexS}$, than that of $X_{f_{1},f_{2}}$. Probably,
this trade-off relation between the fundamentality and the familiarity
is unavoidable. For the theory of the concrete/realistic measurements
for the Majorana field, we will need further investigations on the
relation between this fundamental observables and more ``familiar''
ones, including the stress-energy tensor.

\providecommand{\noopsort}[1]{}\providecommand{\singleletter}[1]{#1}%

\end{document}